\documentclass{article}[11pt]
\usepackage{fullpage,setspace,booktabs}
\usepackage{amsfonts,amssymb,amsmath,setspace,graphicx,amsthm}
\usepackage{mathrsfs}
\usepackage{color}

\DeclareMathOperator*{\argmax}{argmax}

\newtheorem{theorem}{Theorem}
\newtheorem*{theorem*}{Theorem}

\newtheorem{observation}{Observation}
\newtheorem{proposition}{Proposition}

\newtheorem{corollary}{Corollary}
\newtheorem{lemma}{Lemma}
\newtheorem{definition}{Definition}
\newtheorem{fact}{Fact}

\newcommand{\distr}{\mathcal{D}}

\renewcommand{\vec}[1]{#1}

\newenvironment{prevproof}[2]{\noindent {\em {Proof of {#1}~\ref{#2}:}}}{$\Box$\vskip \belowdisplayskip}
\newcommand{\cA}{{\cal A}}

\newcommand{\cD}{{\cal D}}

\newcommand{\cF}{{\cal F}}

\newcommand{\cI}{{\cal I}}
\newcommand{\cJ}{{\cal J}}

\newcommand{\cP}{{\cal P}}

\newcommand{\cS}{{\cal S}}

\newcommand{\cU}{{\cal U}}

\newcommand{\bE}{\mathbb{E}}

\newcommand{\bM}{\mathbb{M}}

\newcommand{\bR}{\mathbb{R}}

\newcommand{\matspan}{\text{span}}

\newcommand{\pnote}[1]{\textcolor{black}{{#1}}}
\newcommand{\mattnote}[1]{\textcolor{black}{{#1}}}
\definecolor{brown}{RGB}{165,42,42}


\def\blackslug{\hbox{\hskip 1pt \vrule width 8pt height 8pt depth 0pt
\hskip 1pt}}

\def\bqed{\quad\blackslug\lower 8.5pt\null\par}

\def\wqed
{\quad\raisebox{-.3ex}{\Large$\Box$}\lower 8.5pt\null\par}
\long\gdef\boxit#1{\begingroup\vbox{\hrule\hbox{\vrule\kern3pt
      \vbox{\kern3pt#1\kern3pt}\kern3pt\vrule}\hrule}\endgroup}



\newlength{\saveparindent}
\setlength{\saveparindent}{\parindent}

\newlength{\saveparskip}
\setlength{\saveparskip}{\parskip}



\newenvironment{newitemize}{%
\begin{list}{$\bullet$}{\labelwidth=18pt%
\labelsep=5pt \leftmargin=23pt \topsep=1pt%
\setlength{\listparindent}{0pt}%
\setlength{\parsep}{\saveparskip}%
\setlength{\itemsep}{3pt}}}{\end{list}}



\newenvironment{newcenter}{%
\begin{center}%
\vspace{-5pt}%
\setlength{\parskip}{0pt}}{\end{center}\vspace{-5pt}}

\renewcommand{\bigwedge}{\text{ and }}
 \newcommand{\icopies}{{\cI^{\mathrm{copies}}}}

\title{Prophet Inequalities with Limited Information}

\author{
Pablo D. Azar \\ MIT, CSAIL \\ Cambridge, MA 02139, USA \\ azar@csail.mit.edu \and
Robert Kleinberg \\ Cornell University \\ Ithaca, NY 14853, USA \\ rdk@cs.cornell.edu \and 
S. Matthew Weinberg \\ MIT, CSAIL \\ Cambridge, MA 02139, USA \\
smweinberg@csail.mit.edu
}

\begin{document}
\maketitle
\begin{abstract}



In the classical prophet inequality,  a gambler observes a sequence of stochastic rewards $V_1,...,V_n$ and must decide, for each reward $V_i$, whether to keep it and stop the game or to forfeit the reward forever and reveal the next value $V_i$. The gambler's goal is to obtain a constant fraction of the expected reward that the optimal offline algorithm would get. Recently, prophet inequalities have been generalized to settings where the gambler can choose $k$ items, and, more generally, where he can choose any independent set in a matroid. However, all the existing algorithms require the gambler to know the distribution from which the rewards $V_1,...,V_n$ are drawn.

\pnote{ The assumption that the gambler knows the distribution from which $V_1,...,V_n$ are drawn is very strong. Instead, we work with the much simpler assumption that the gambler only knows a few samples from this distribution. We construct the first single-sample prophet inequalities for many settings of interest, whose guarantees all match the best possible asymptotically, \emph{even with full knowledge of the distribution}. Specifically, we provide a novel single-sample algorithm when the gambler can choose any $k$ elements whose analysis is based on random walks with limited correlation. In addition, we provide a black-box method for converting specific types of solutions to the related \emph{secretary problem} to single-sample prophet inequalities, and apply it to several existing algorithms. Finally, we provide a constant-sample prophet inequality for constant-degree bipartite matchings.}

In addition, we apply these results to design the first posted-price and multi-dimensional auction mechanisms with limited information in settings with asymmetric bidders. Connections between prophet inequalities and posted-price mechanisms are already known, but applying the existing framework requires knowledge of the underlying distributions, as well as the so-called ``virtual values" even when the underlying prophet inequalities do not. We therefore provide an extension of this framework that bypasses virtual values altogether, allowing our mechanisms to take full advantage of the limited information required by our new prophet inequalities. 

\end{abstract} 
\newpage
\setcounter{page}{1}
\section{Introduction}

Prophet inequalities are a fundamental tool in optimal stopping theory.  In the classical prophet inequality, a gambler observes a sequence $V_1,...,V_n$ of $n$ \mattnote{rewards sampled independently from known distributions $\cD_1,\ldots,\cD_n$}.  After seeing the $i^{th}$ reward, the gambler has two options: he can stop the game and keep reward $V_i$, or he can continue the game. If he chooses to continue the game, he forfeits reward $V_i$ forever, and is shown the next reward $V_{i+1}$. The gambler's goal is to obtain an expected reward that is competitive with the best offline algorithm, represented by a {\em prophet} who can observe the values of all the variables $V_1,...,V_n$ before making her selection. A seminal result of Krengel, Sucheston and Garling~\cite{KrengelSucheston1, KrengelSucheston2} states that there is a strategy for the gambler so that his expected reward is at least half of the prophet's expected reward.  Recently there has been a renewed interest in prophet inequalities, generalizing the problem to settings where the prophet and gambler can choose any $k$ out of the $n$ presented items \cite{Alaei11,ChawlaHMS10}, and more generally to settings where the prophet and gambler can choose any independent set in a matroid or matroid intersection environment \cite{KleinbergW12}. However, all existing results require the gambler to know \mattnote{$\cD_1,\ldots,\cD_n$.}

 We improve on the existing literature by giving the first prophet inequalities \mattnote{with limited information}. More concretely, we show how the gambler can obtain a constant factor of the prophet's expected reward, even when he only \pnote{knows} a single sample \mattnote{from each $\cD_i$}.\footnote{As described below, one of our results requires a constant number of samples.} \pnote{ This approach is robust, and guarantees---in expectation over the observed sample sample and the realized state of the world---a simultaneous approximation to the prophet's reward for all possible distributions $\cD$.  Our work is inspired by recent literature on mechanism design \cite{DhangwatnotaiRY10, HartlineRoughgarden}  and on ad auctions \cite{DevanurJSW11, DevanurSA12} which explores how to obtain approximately optimal revenue with limited information about an existing distribution of bidders' values.  Our work applies this limited information framework beyond auctions. Indeed, while our work has applications in online and multi-dimensional mechanism design, it also applies to the setting of optimal stopping problems.}

\subsection{Our results}


In the list below, we summarize our \mattnote{new prophet inequalities}. We remark that, for all the results below, the weights of the items we are choosing online are revealed in an adversarial order (where the adversary observes the values in advance before deciding how to order the elements) and where the online algorithm has no knowledge of the distribution $\cD$ from which the values are drawn except for a single sample. The only exception is our result for constant degree bipartite matching environments, where the online algorithm requires a constant number samples from the distribution $\cD$. 
\begin{itemize}
\item \textbf{k-Uniform Matroids.} A $1 - O(\frac{1}{\sqrt{k}})$-competitive \mattnote{single-sample} prophet inequality for $k$-uniform matroids. This competitive ratio is asymptotically optimal as a function of $k$.
\item \textbf{Transversal Matroids.} A $\frac{1}{16}$-competitive \mattnote{single-sample} prophet inequality.
\item \textbf{Graphic Matroids.} A $\frac{1}{8}$-competitive \mattnote{single-sample} prophet inequality.
\item \textbf{Laminar Matroids.} A $\frac{1}{12 \sqrt{3}}$-competitive \mattnote{single-sample} prophet inequality.
\item \textbf{Constant Degree Bipartite Matchings.} A $\frac{1}{6.75}$-competitive \mattnote{constant-sample} prophet inequality.
\end{itemize}


\subsection{New Results in Mechanism Design}

Myerson\mattnote{'s seminal paper} \cite{Myerson81} shows how to construct \mattnote{the revenue-optimal single-item} auction when \mattnote{each} buyer's valuation is drawn independently from a known distribution. Starting with work by Hartline and Roughgarden \cite{HartlineRoughgarden} and by Dhangwatnotai, Roughgarden and Yan \cite{DhangwatnotaiRY10}, \mattnote{some recent attention} has been focused on designing auctions that guarantee a constant-factor approximation to Myerson's optimal auction, even when the seller has limited information about these distributions. \mattnote{However, prior to this work, progress on this front has been mostly limited to single-dimensional settings.}


We apply our new prophet inequalities to construct the first truthful and approximately optimal auctions for \mattnote{certain} multi-dimensional settings \mattnote{that use limited information. It is worth noting that we cannot simply plug our new prophet inequalities into the existing machinery of Chawla, Hartline, Malec and Sivan~\cite{ChawlaHMS10} to obtain these results, as their machinery requires full knowledge of the distributions, as well as the ability to compute ``virtual values.\footnote{Virtual values were introduced in Myerson's seminal paper and are known to have strong connections to revenue maximization. The virtual value of a bidder with value $v$ sampled from distribution $D_i$ with CDF $F$ and PDF $f$ is $v-\frac{1-F(v)}{f(v)}$.}'' Our main contribution on this front is an extension of their framework that allows us to analyze the expected virtual surplus of our mechanisms without ever learning the virtual values.}

\mattnote{It is also worth noting that our results apply whenever the buyers' valuations are drawn either from identical regular distributions, or from  {\em distinct }  distributions satisfying the monotone hazard rate (MHR) condition.  In contrast, all existing multi-dimensional mechanisms with limited information work only when bidders have identical distributions~\cite{DevanurHKN11, RoughgardenTY12}. More concretely, our results will apply to the following settings:}

\begin{itemize}
\item \textbf{Sequential Posted Price Mechanisms (SPMs)} In this setting, a seller offers a service to buyers who arrive online, in an order chosen by the seller. Each buyer $i$ has a value $v_i$ for receiving service, and is offered a take-it-or-leave-it price $p_i$. The seller may face constraints on which buyers can be served simultaneously, such as matroid constraints (that is, a set $S$ of buyers can be simultaneously allocated service if and only if $S$ is an independent set in a matroid). We show a new approximately optimal \mattnote{single-sample} SPM for \mattnote{\emph{all}} matroid settings. This improves over previously known SPMs, which applied to $k$-uniform settings and required bidder distributions to be identical \cite{QiqiThesis}.

\item \textbf{Order-Oblivious Posted Price Mechanisms (OPMs) for \mattnote{multi}-dimensional environments} Order-Oblivious Posted Price mechanisms are approximately optimal SPMs, whose revenue guarantee holds regardless of the order in which bidders arrive (that is, the seller may no longer choose the order in which bidders arrive)\mattnote{, and are known to imply truthful mechanisms for corresponding multi-dimensional settings when they exist~\cite{ChawlaHMS10,KleinbergW12}.} We construct \mattnote{single-sample} OPMs for all environments for which we construct \mattnote{single-sample} prophet inequalities, including graphic, laminar, transversal and partition matroids, as well as \mattnote{(constant-sample OPMs for)} constant-degree bipartite matching settings.  To the best of our knowledge, our mechanisms are the first OPMs \mattnote{that do not require full knowledge of the distribution or the ability to compute virtual values}.

\item \textbf{Multi-Dimensional Matching environments.} In these environments, there are $n$ buyers and $m$ goods, and no buyer can be allocated more than one good, or good be allocated to more than one buyer.  This induces a bipartite graph between buyers and goods, with an edge $(i,j)$ present if $v_{ij} > 0$. When this graph has maximum degree $d$ (no buyer has value for more than $d$ goods, and no good is valued by more than $d$ buyers), we give a mechanism that uses $d^2+1$ samples. We note this is the first \mattnote{limited-sample} mechanism for matchings when bidders are asymmetric. In the case of i.i.d. regular distributions, Roughgarden, Talgam-Cohen and Yan \cite{RoughgardenTY12} and Devanur, Hartline, Karlin and Nguyen \cite{DevanurHKN11} give limited-information mechanisms for general matching settings. 
\end{itemize}

\subsection{Our techniques}
We derive our \mattnote{limited-information} prophet inequalities using three different techniques.

\begin{enumerate}
\item \textbf{Reduction from existing secretary problems.} In section \ref{sec:existing}, we give a black-box reduction that obtains \mattnote{single-sample} prophet inequalities from existing order-oblivious\footnote{We define what order-oblivious algorithms are in section \ref{sec:existing}.} algorithms for the secretary problem.\footnote{In the secretary problem, the value of weights can be arbitrary, but the elements are revealed in a random order. In the prophet inequality problem, the value of weights come from distributions, but the order in which items are presented can be arbitrary.}  This allows us to obtain prophet inequalities for transversal, graphic and laminar matroids based on corresponding secretary algorithms given by Dimitrov and Plaxton \cite{DimitrovP12}, Korula and Pal \cite{KorulaP09} and   Jaillet, Zoto and Zenklusen \cite{JailletSZ12}. However, not all algorithms for the secretary problem are order-oblivious. In particular, Kleinberg's algorithm for $k$-uniform matroids \cite{Kleinberg05}  is not order-oblivious, and neither is Korula and Pal's algorithm for matchings \cite{KorulaP09}. 

\item \textbf{\mattnote{Sufficient thresholds with limited samples.}} In section \ref{sec:matching}, we give a \mattnote{constant-sample} prophet inequality for constant-degree bipartite matching settings. \mattnote{A prophet would accept element $i$ only if it were above a certain threshold, determined by the values of all other items. Since the elements arrive one by one, we cannot compute these thresholds, and with a constant number of samples, we cannot even estimate them accurately. Instead, we use our samples to set sufficient thresholds that do not necessarily bear any relation to the prophet's thresholds.}


\item \textbf{Analysis of correlated random walks} The best known secretary algorithms \cite{Kleinberg05} and \mattnote{full-information} prophet inequalities \cite{Alaei11} for $k$-uniform matroids both guarantee a $1 - O(\frac{1}{\sqrt{k}})$ competitive ratio. In order to asymptotically match this  competitive ratio, we give a new algorithm in section \ref{sec:prophet}, whose analysis models the drawing of ``samples" or ``values" as positive and negative steps in a random walk. This random walk  is correlated because for every ``sample" $s_i$ that we observe (which makes the walk move upward), there is a corresponding ``value" $v_i$ which will make the walk move ``downward". By estimating the expected height of this correlated random walk, we are able to guarantee that each of the top $k$ values (that is, the values that are accepted by the optimal offline algorithm) are selected by our online algorithm with probability $1 - O(\frac{1}{\sqrt{k}})$. 
\end{enumerate}

There are many settings \mattnote{(arbitrary matroids, the intersection of any $k$ arbitrary matroids)} for which \mattnote{full-information} prophet inequalities exist but \mattnote{limited-information} prophet inequalities don't. We hope that these techniques can help develop such new limited-information algorithms for these settings in the future.



\section{Preliminaries}\label{sec:preliminaries}

\paragraph{Environments and Offline Selection Problems} An environment $\cI = (\cU,\cJ)$ is given by a universe of elements $\cU = \{1,...,n\}$ and a collection $\cJ \subset 2^{\cU}$ of {\em feasible subsets} of $\cU$. An algorithm $\cA$ for the {\em offline selection problem on $\cI$}   takes as input a vector of positive weights $v = (v_1,...,v_n)$ for elements of $\cU$ and outputs the  independent set $MAX(v) = \argmax_{S \in \cJ} \sum_{i \in S} v_i$ with the maximum weight.  We denote by $OPT(v) = \sum_{i \in MAX(v)} v_i$ the weight of this maximum independent set. 

\paragraph{Online Selection Problems} Given an environment $\cI = (\cU,\cJ)$, an  algorithm $\cA$ for the {\em online selection problem}  takes as {\em online} input a vector of values $v = (v_1,...,v_n)$ in some order $(v_{i_1},...,v_{i_n})$ (this order will be specified below). The algorithm must maintain a set $A$ of accepted elements, and element $i_j \in \cU$ must be either accepted when its value $v_{i_j}$ is revealed, or rejected forever before moving on to the next item $i_{j+1}$. At all times, the set $A$ of accepted items must be an independent set (that is, $A \in \cJ$). For convenience of notation, we define $A^*(v) = A(v_{i_1},...,v_{i_n})$ to be the final set of items accepted by $\cA$, and note that $A^*(v)$ {\em depends on the order in which the items $v_{i_1},...,v_{i_n}$ are revealed.} 

\paragraph{Prophet Inequalities} Given an environment $\cI$ with universe set $\cU = \{1,...,n\}$, let $\cD = \cD_1 \times ... \times \cD_n$ be a product distribution over $\bR^n_{\geq 0}$.\footnote{We remark that the assumption that the rewards $V_1,...,V_n$ are independent is somewhat necessary if we want a constant competitive ratio. Hill and Kertz \cite{HillKertz} show that if we allow arbitrary correlation between the rewards, then the gambler cannot obtain more than a $\frac{1}{n}$ fraction of the gambler's expected reward.} Let $v = (v_1,...,v_n)$ be drawn from $\cD$. We say that an algorithm $\cA$ for the online selection problem induces a {\em prophet inequality} with competitive ratio $\alpha$ for environment $\cI$ if 
$$\bE_{v \leftarrow \cD}[ \sum_{i \in A^*(v)} v_i ] \geq \alpha \cdot \bE_{v \leftarrow \cD}[OPT(v)]$$
where the expectations are taken with respect to the random choice of $v$ and the random coin tosses of $\cA$. The above inequality holds {\em regardless} of the order in which the elements $v_{i_1},...,v_{i_n}$ are revealed.  We remark that this is a stronger property than that guaranteed by the prophet inequalities in previous papers \cite{KleinbergW12}, where the adversary had to choose which element $i_j$ to reveal at time $j$ using only knowledge of the items and values $(i_1,v_{i_1}),...,(i_{j-1},v_{i_{j-1}})$ revealed up to time $j-1$. 

\paragraph{\mattnote{Limited-Information} Prophet Inequalities} In order to guarantee a prophet inequality with a constant competitive ratio, the online algorithm $\cA$ must have some information about the distributions $\cD_1,...,\cD_n$ from which the values are drawn. \mattnote{We say that $\cA$ is a constant-sample prophet inequality if it has access only to a constant number of samples $s^1 = (s^1_1,...,s^1_n),...,s^d = (s^d_1,...,s^d_n)$, each drawn from the joint distribution $\cD$. }When $\cA$ is constant-sample, its expected reward $\bE_{v,s^1,...,s^d}[\sum_{i \in A^*(s^1,...,s^d; v)} v_i]$ is computed over the randomness in the vector of values $v$, the random samples $s^1,...,s^d$ and the random coin tosses of the algorithm.
 We remark that, except for our results for matching environments, all our limited-information prophet inequalities use only one sample $s = (s_1,...,s_n)$ from the joint distribution $\cD$. 


\paragraph{Our Constraints.} We can give different feasibility constraints by placing different structure on $\cJ$. We consider \mattnote{constraints that are matroids, specific types of matroids, or bipartite matchings. We refer the reader who is not familiar with these constraints to Appendix~\ref{app:preliminaries} for a formal definition of each setting we consider.}

\paragraph{Secretary Problems} The secretary problem for an environment $(\cU,\cJ)$ \cite{BabaioffIK07} is an online selection problem where the item values $v_1,...,v_n$ can be adversarially chosen, and they are revealed to the online algorithm in a {\em random order}. This is incomparable in terms of hardness with the prophet inequality setting described above, where the values are random variables, and they are presented in an adversarial order. We remark  that there exist competitive algorithms for the secretary problem when $\cJ$ is a uniform matroid \cite{Kleinberg05}, a laminar matroid \cite{JailletSZ12}, graphic matroid \cite{KorulaP09}, a transversal matroid \cite{DimitrovP12}, or a bipartite matching \cite{KorulaP09}. If the online algorithm can choose the order in which the weights are revealed,  then there exists a competitive algorithm for general matroids \cite{JailletSZ12}. If the weight for item $i$ is not completely adversarial, but is instead chosen randomly without replacement from a list $(w_1,...,w_n)$, then there also exists a competitive algorithm for matroids \cite{Soto11}, even when the order in which the items is revealed is adversarially chosen \cite{GharanV11}.



\section{Prophet Inequalities from Secretary Algorithms}
\label{sec:existing}

In this section, we provide a formal black-box method to convert specific kinds of solutions to the secretary problem to \mattnote{single-sample} prophet inequalities.  More formally, our reduction will work for {\em order-oblivious algorithms}, which we define as follows.

\begin{definition} We say that an algorithm $\cS$ for the secretary problem (together with its corresponding analysis) is \textbf{order-oblivious} if, on a randomly ordered input vector $(v_{i_1},...,v_{i_n})$:
\begin{enumerate}
\item (algorithm) $\cS$ sets a (possibly random) number $k$, observes without accepting the first $k$ values $S = \{v_{i_1},...,v_{i_k}\}$, and uses information from $S$ to choose elements from $V = \{v_{i_{k+1}},...,v_{i_{n}}\}.$

\item (analysis) $\cS$ maintains its competitive ratio even if the elements from $V$  are revealed in any (possibly adversarial) order. In other words, the analysis does not fully exploit the randomness in the arrival of elements, it just requires that the elements from $S$ arrive before the elements of $V$, and that the elements of $S$ are the first $k$ items in a random permutation of values.
\end{enumerate}
\end{definition}

We argue in appendix~\ref{app:existing} that existing algorithms for graphic, transversal and laminar matroids are order-oblivious. Furthermore, Oveis Gharan and Vondrak \cite{GharanV11}'s matroid secretary algorithm for the random assignment model is also order-oblivious (a fact that they claim in their paper). Combined with Theorem~\ref{thm:existing} below, this gives us single-sample prophet inequalities for graphic, transversal and laminar matroids, as well as arbitrary matroids when each $\cD_i$ is identical. This is stated formally in Corollary~\ref{cor:existing}.

We now show how to construct an algorithm $\cP$ for the limited-information prophet problem given an order-oblivious algorithm $\cS$ for the secretary problem. Recall that the algorithm $\cP$ takes as offline input a vector $s = (s_1,...,s_n)$ of samples drawn from a distribution $\cD$, and takes as online input a vector $v$ also drawn from $\cD$, and whose individual components are provided in an adversarial order.  

\begin{newcenter}
\framebox{
\begin{minipage}{0.9\textwidth}{
\smallskip
$\cP_{\cS}(s_1,...,s_n; v_{i_1},...,v_{i_n})$

\textbf{Offline Stage}

\begin{newitemize}

\item [1.] Let $k$ be the number of elements that $\cS$ observes before it starts accepting elements (i.e., $k = |S|$).

\item[2.] Let $s_{j_1},...,s_{j_n}$ be a random permutation of $s = (s_1,...,s_n)$. Pass $s_{j_1},...,s_{j_k}$ as the first $k$ inputs to $\cS$. 
 \end{newitemize}

\textbf{Online Stage}

\begin{newitemize}
\item[3.] For each index $i \in \{i_1,...,i_n\}$:

\begin{newitemize}
\item [a.] If $i \in \{j_1,...,j_k\}$, then index $i$ has already been processed as a ``sample". Ignore it and continue.

\item[b.] If $i \in \{j_{j+1},...,j_n\}$, then pass the value $v_i$ to algorithm $\cS$, and accept $i$ if and only if $\cS$ accepts $i$. 
\end{newitemize}
\end{newitemize}

}\end{minipage}
}
\end{newcenter}

\begin{theorem}\label{thm:existing}
If $\cS$ is an order-oblivious algorithm for the secretary problem with competitive ratio $\alpha$, then $\cP_{\cS}$ is a \mattnote{single-sample prophet inequality} with competitive ratio $\alpha$. 
\end{theorem}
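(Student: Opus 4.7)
The plan is to exhibit a coupling that reduces the execution of $\cP_{\cS}$ to running $\cS$ on a single input vector $w \sim \cD$ whose coordinates are revealed in an order meeting the order-oblivious hypothesis. Let $J = \{j_1, \ldots, j_k\}$ be the random index set of the samples passed to $\cS$ during the offline stage, and define $w_i = s_i$ for $i \in J$ and $w_i = v_i$ for $i \notin J$. Since $s$ and $v$ are independent draws from the product distribution $\cD$ and the random permutation defining $J$ is independent of both, the vector $w$ is itself distributed as $\cD$, and moreover $J$ is independent of $w$. In particular, $\bE[OPT(w)] = \bE[OPT(v)]$.

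Next, I would identify the order in which $\cS$ sees the coordinates of $w$. By construction, $\cS$ first observes $w_{j_1}, w_{j_2}, \ldots, w_{j_k}$ in the order dictated by the random permutation, and then, during the online stage, it receives the remaining coordinates $\{w_i : i \notin J\}$ in the adversarial order $(i_1, \ldots, i_n)$ restricted to the complement of $J$. Because $J$ (and its internal ordering) is independent of $w$ and of the adversary's schedule, the first $k$ coordinates shown to $\cS$ form a uniformly random prefix of a uniformly random permutation of $\{1, \ldots, n\}$, while the remaining coordinates arrive in some arbitrary order. This is precisely the scenario covered by the order-oblivious guarantee, so
\[
\bE\!\left[\sum_{i \in \cS(w)} w_i\right] \;\geq\; \alpha \cdot \bE[OPT(w)].
\]

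Finally, I would identify the left-hand side with $\cP_{\cS}$'s expected reward. By the definition of order-oblivious, $\cS$ never accepts during its initial $k$ observations, so $\cS(w) \subseteq \{1, \ldots, n\} \setminus J$; and $\cP_{\cS}$ accepts exactly those indices $i \notin J$ on which $\cS$ accepts. Since $w_i = v_i$ for every $i \notin J$, the reward collected by $\cP_{\cS}$ equals $\sum_{i \in \cS(w)} w_i$. Chaining this identification with the displayed inequality and with $\bE[OPT(w)] = \bE[OPT(v)]$ yields the claimed competitive ratio $\alpha$.

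The main subtlety I anticipate is the distributional bookkeeping in the first two steps: one must verify both that $w$ inherits the full product distribution $\cD$ (so that the offline optima match in expectation) and that the hybrid ordering---random prefix from the offline permutation, adversarial suffix from the online schedule---genuinely satisfies the order-oblivious hypothesis. The cleanest way I see to handle this is to note that, for each $i$, $s_i$ and $v_i$ are i.i.d.\ draws from $\cD_i$, so the indicator ``coordinate $i$ was realized by the sample'' is independent of $w_i$ itself and of the adversary's schedule; with that independence in hand, the remainder of the argument is routine bookkeeping.
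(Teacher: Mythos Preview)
Your proof is correct and follows essentially the same coupling argument as the paper: define the hybrid vector $w$ (samples on $J$, values off $J$), observe that $w\sim\cD$ so $\bE[OPT(w)]=\bE[OPT(v)]$, and invoke the order-oblivious guarantee for $\cS$ on input $w$. The paper's version is terser and slightly abuses notation (writing $OPT(v)$ for what is really $OPT$ of the hybrid input), whereas you spell out explicitly why $w\sim\cD$, why $J$ is independent of $w$, and why the accepted weight of $\cP_{\cS}$ coincides with that of $\cS$ on $w$; but the underlying idea is identical.
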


We give the proof for Theorem~\ref{thm:existing} in appendix \ref{app:existing}. 
The proof that $\cP_{\cS}$ inherits the competitive ratio of $\cS$ 
uses the fact that the joint distribution of values associated
to the items in our simulation of $\cS$ is exactly the same as the
true value distribution $\cD$. Note that our single-sample algorithm $\cP_{\cS}$ does not use any sampled values for elements in the set $V$. This is important, as we can then 
reuse the samples for items in $V$ for other purposes, such as 
setting reserve prices in auctions, as we will see in Section~\ref{sec:mechanisms}.

\begin{corollary}\label{cor:existing}
\begin{enumerate}
\item[]
\item For graphic matroids, there exists a $\frac{1}{8}$-competitive single-sample prophet inequality based on the secretary algorithm of Korula and Pal \cite{KorulaP09}
\item For transversal matroids, there exists a $\frac{1}{16}$-competitive single-sample prophet inequality based on the secretary algorithm of Dimitrov and Plaxton \cite{DimitrovP12}.
\item For laminar matroids, there exists a $\frac{1}{12 \sqrt{3}}$-competitive single-sample prophet inequality based on the secretary algorithm of  Jaillet, Soto, and Zenklusen \cite{JailletSZ12}.
\item For general matroid settings, when weights are drawn from identical and independent distributions, there exists a $\frac{1 - \frac{1}{e}}{20}$-competitive single-sample prophet inequality based on the secretary algorithm of Oveis Gharan and Vondrak for matroids in the random assignment model~\cite{GharanV11}.\footnote{We note that a similar result for general matroids under i.i.d. distributions was already proved by two of the authors \cite{KleinbergW12}. Their result did not emphasize the single-sample nature of the algorithm.}
\end{enumerate}

\end{corollary}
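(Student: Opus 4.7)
The plan is to apply Theorem~\ref{thm:existing} once per item in the corollary, which reduces the entire statement to verifying that each of the four cited secretary algorithms---Korula--P\'al for graphic matroids, Dimitrov--Plaxton for transversal matroids, Jaillet--Soto--Zenklusen for laminar matroids, and Oveis Gharan--Vondr\'ak for matroids in the random assignment model---is order-oblivious in the sense of the definition preceding Theorem~\ref{thm:existing}. Once that is established, the four competitive ratios $\tfrac{1}{8}$, $\tfrac{1}{16}$, $\tfrac{1}{12\sqrt{3}}$, and $\tfrac{1-1/e}{20}$ transfer verbatim to the corresponding single-sample prophet inequalities via the construction $\cP_{\cS}$.

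All four algorithms follow a common two-phase template. They first fix a (possibly random) prefix length $k$, observe the first $k$ elements of the random permutation without accepting any, and from this sample extract a structural object: a family of element-by-element thresholds (Jaillet--Soto--Zenklusen), a greedy matching into which later elements are inserted (Dimitrov--Plaxton), a maximum spanning forest of the sampled subgraph (Korula--P\'al), or a basis of the sampled sub-matroid (Oveis Gharan--Vondr\'ak). In the second phase each incoming element is processed using only that object and the already-accepted set, so the \emph{algorithm} is trivially order-oblivious. The substantive work, which I would carry out in Appendix~\ref{app:existing} paper by paper, is to inspect each competitive-ratio proof and confirm that every inequality depends on the non-sample elements $V=\{v_{i_{k+1}},\ldots,v_{i_n}\}$ only as an unordered set. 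In each case this amounts to isolating a ``good event'' on the sample prefix---for instance that certain high-weight elements land in $S$ versus $V$ in a favourable configuration---conditioning on that event, and then arguing that any order of $V$ still yields the stated ratio.

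Item (4) requires one small additional bridging argument, since Oveis Gharan--Vondr\'ak's guarantee is stated in the random assignment model whereas a single-sample prophet inequality with i.i.d.\ distributions corresponds to a product distribution $\cD^n$. I would bridge this by conditioning on the multiset $M$ of $2n$ draws from $\cD$ (the $n$ offline samples together with the $n$ online values): given $M$, the assignment of these draws to slots is a uniformly random bijection, which is exactly the random assignment instance determined by $M$. Since the $\tfrac{1-1/e}{20}$ bound holds for every such instance, it holds in expectation over $M$ as well, and Theorem~\ref{thm:existing} applies.

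The hard part of the plan is the per-paper verification of order-obliviousness. Several of the original secretary analyses explicitly invoke a uniformly random permutation of the \emph{entire} input---for example to argue that a particular heavy element lands in a given position with probability $1/n$---and one must rewrite those arguments so that randomness is only invoked for the identity of the subset landing in $S$, never for the order of $V$. I expect this to go through in each cited work because the analyses already partition their arguments around the sample/non-sample split, but each paper must be checked individually. Once this verification is complete for all four algorithms, the corollary follows from a single direct application of Theorem~\ref{thm:existing} in each case.
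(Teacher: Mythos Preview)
Your high-level plan---verify order-obliviousness algorithm by algorithm and then invoke Theorem~\ref{thm:existing}---is exactly what the paper does. However, your expectation that the original analyses will ``go through'' is wrong for two of the four cases, and this is actually the crux of the corollary.

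The Korula--P\'al algorithm does not compute a maximum spanning forest of the sampled subgraph. It fixes a random orientation of the edges, partitions them by out-vertex, and runs the classical (Dynkin) rank-$1$ secretary algorithm independently on each block; this is where the $\tfrac{1}{2e}$ in their original paper comes from. Dynkin's analysis is \emph{not} order-oblivious: its $1/e$ guarantee genuinely uses the uniformly random ordering of the non-sample elements. The paper therefore \emph{modifies} the algorithm, replacing Dynkin's subroutine by the simple $\tfrac{1}{4}$-competitive rank-$1$ rule (sample a $\text{Binom}(n,\tfrac12)$-sized prefix, set its maximum as a threshold, accept the first element above it), whose $1/4$ guarantee does hold under adversarial ordering of $V$. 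This substitution is precisely what turns $\tfrac{1}{2e}$ into the $\tfrac{1}{8}$ stated in the corollary. The laminar case is the same story: the Jaillet--Soto--Zenklusen algorithm also decomposes into rank-$1$ subproblems on disjoint blocks, and the $\tfrac{1}{12\sqrt{3}}$ ratio again reflects the Dynkin-to-$\tfrac14$ swap. So the ``per-paper verification'' you defer to the appendix is not a verification at all in these two cases; it requires replacing a subroutine and redoing the competitive-ratio analysis with the weaker one. The fact that the stated ratios $\tfrac18$ and $\tfrac{1}{12\sqrt{3}}$ differ from the cited papers' own ratios should have been the tip-off.

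A smaller point on item (4): conditioning on the full multiset of $2n$ draws does not yield the random-assignment model, because after that conditioning each element $i$ still carries its own private pair $(s_i,v_i)$ rather than a uniform draw from a common list. The cleaner bridge is to observe that the $n$ weights actually fed to $\cS$ are themselves i.i.d.\ from $\cD$, and then condition on \emph{that} multiset of size $n$; exchangeability then gives exactly the random-assignment instance on which the Oveis Gharan--Vondr\'ak bound applies.
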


\section{\mattnote{Single-Sample Prophet} Inequalities for $k$-Uniform Matroids}\label{sec:prophet}


\mattnote{Recently, Alaei~\cite{Alaei11} gave a full-information prophet inequality that is $\left(1-\frac{1}{\sqrt{k+3}}\right)$-competitive, which is asymptotically optimal. }This raises the question of whether there also exists a $1 - O(\frac{1}{\sqrt{k}})$ competitive
\mattnote{single-sample prophet inequality} for $k$-uniform matroids. Since the
corresponding algorithm \mattnote{(of Kleinberg, which obtains a competitive ratio of $1- O(\frac{1}{\sqrt{k}})$)} for the secretary problem is {\em not order-oblivious}, we cannot use our reduction from the previous section. Instead, we develop a new algorithm, and show that we can guarantee a $1 - O(\frac{1}{\sqrt{k}})$ competitive ratio by giving a new analysis for prophet inequalities based on correlated random walks. \mattnote{We note also that our algorithm is comparatively simpler than previous algorithms.}

\subsection{The Rehearsal Algorithm}
We now describe our algorithm, which we call the {\em Rehearsal Algorithm}. The algorithm needs to fill $k$ slots, and each slot $i$ is associated with a threshold $T_i$ (which is defined below). Each slot $i$ can only be filled by a value that is above the threshold $T_i$, and can only be filled once. Each observed value can only fill a single slot. When we see an element that can fill at least one available slot, we fill the slot with the highest threshold. When we see an element that cannot fill any available slots, we reject it. 

Intuitively, one might try to set the $i^{th}$ threshold $T_i$ to the $i^{th}$ largest sample. \mattnote{This algorithm doesn't quite work, but a small modification suffices: instead,} we set the first $k -  2 \sqrt{k}$ thresholds equal to the top $k - 2 \sqrt{k}$ samples, then set the remaining $2 \sqrt{k}$ thresholds equal to the $k - 2 \sqrt{k}^{th}$ highest sample (essentially repeating this sample $2 \sqrt{k}$ times as a threshold).  This is necessary in order for the probability of selecting the 
highest-value items to be sufficiently close to~1. (See  Lemmas~\ref{lem:both} and~\ref{lem:delete} in appendix \ref{app:prophet}.)

We describe the algorithm formally below.
 \begin{newcenter}
\framebox{
\begin{minipage}{0.9\textwidth}{
\smallskip
$Rehearsal(s_1,...,s_n; v_{i_1},...,v_{i_n})$\\
\smallskip
 \textbf{1. Offline Phase}
  
\begin{newitemize}
  \item[1.a] Let $s^{(1)} > ... > s^{(n)}$ be the observed samples in decreasing order.
  \item [1.b] For $j \in \{1,...,k-2\sqrt{k}\}$ set $T_j = s^{(j)}$.
  \item[1.c] For $k - 2\sqrt{k} < j \leq k$, set $T_j = T_{k-2\sqrt{k}} = s^{(k - 2 \sqrt{k})}$.
 \end{newitemize}
\smallskip
 \textbf{2. Online Phase}
 
Initialize $S = \{1,\ldots,k\}$ as the set of available slots. For $j \in \{1,...,n\}$:
\begin{newitemize}
\item[2.a] Let  $v_{i_j}$ be the value of the $j^{th}$ revealed item. Let $\alpha$ be an index such that $T_{\alpha-1} > v_{i_j} > T_{\alpha}$. 
\item[2.b] Let $S \cap \{\alpha,\alpha+1,...,k\}$ be the set of slots that have not been filled, and that could be filled by $v_{i_j}$. Let $m = \min S \cap \{\alpha,...,k\}$. This is the first slot that could be occupied by $v_{i_j}$.  
\item[2.c] If $S \cap \{\alpha,...,k\}$ is empty, reject $v_{i_j}$ 
\item[2.d] If $S \cap \{\alpha,...,k\}$ is not empty, accept $v_{i_j}$ and update $S \leftarrow S - m$.
\end{newitemize}

}\end{minipage}
}
\end{newcenter}

In appendix \ref{app:prophet}, we prove the following theorem. As we mentioned above, the proof may be interesting in its own right for its use of correlated random walks to analyze prophet inequalities.  Due to the complexity of the proof, we defer it to the last appendix.

\begin{theorem}
\label{thm:prophet}
Let $\cI = (\cU,\cJ)$ be a $k$-uniform matroid. The rehearsal algorithm is a \mattnote{single-sample prophet inequality} with a competitive ratio of $1 - O(\frac{1}{\sqrt{k}})$. 
\end{theorem}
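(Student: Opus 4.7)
The plan is to show that for every $i \in \{1, \ldots, k - 2\sqrt{k}\}$, the $i$-th largest value $v^{(i)}$ is accepted by the Rehearsal Algorithm with probability at least $1 - O(1/\sqrt{k})$, uniformly over the adversary's choice of arrival order. Granting this per-coordinate bound, linearity of expectation gives algorithm reward at least $(1 - O(1/\sqrt{k})) \sum_{i=1}^{k - 2\sqrt{k}} \bE[v^{(i)}]$. The missing tail satisfies $\sum_{i = k - 2\sqrt{k} + 1}^{k} v^{(i)} \leq 2\sqrt{k} \cdot v^{(k - 2\sqrt{k})} \leq \frac{2\sqrt{k}}{k - 2\sqrt{k}} \OPT$ (the last inequality since each of the top $k - 2\sqrt{k}$ values is at least $v^{(k-2\sqrt{k})}$), contributing only an additional $O(1/\sqrt{k})$ factor loss, and the theorem follows.

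For the per-coordinate bound, I would pool the $2n$ draws $\{s_1, \ldots, s_n, v_1, \ldots, v_n\}$ and sort them in decreasing order. For $j \in \{1, \ldots, 2n\}$, let $S_j$ and $V_j$ count the number of samples and values respectively among the top $j$, and set $W_j := S_j - V_j$. The key symmetry is that within each coordinate $i$, the pair $(s_i, v_i)$ is exchangeable (two i.i.d.\ draws from $\cD_i$), so conditional on the unordered merged positions this pair occupies, the label ``sample'' is assigned by an independent fair coin across coordinates. This makes $W_j$ the announced ``correlated random walk'': pairwise-conditionally fair $\pm 1$ steps with the global constraint that each coordinate contributes exactly one $+1$ and one $-1$, and $W_{2n} = 0$. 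At the merged-order position $j^*$ of $v^{(i)}$ we have $V_{j^*} = i$ and $S_{j^*} = j^* - i$, so $v^{(i)}$'s target slot equals $\alpha_i = S_{j^*} + 1 = i + W_{j^*} + 1$.

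The next ingredient is a key invariance: the set of slots filled by the greedy linear-probing rule is a deterministic function of the multiset of values, because the algorithm implements a maximum matching in a nested-interval bipartite graph whose slot side is invariant. Hence the adversary's optimal strategy to reject $v^{(i)}$ is to place $v^{(i)}$ last, and $v^{(i)}$ is rejected iff the greedy-filled set $F_{-i}$ on $V \setminus \{v^{(i)}\}$ contains all of $\{\alpha_i, \ldots, k\}$. This event is purely combinatorial in the targets $\{\alpha_v\}$ and reduces, via an explicit cumulative-max characterization of $F_{-i}$ (sort values by target and apply the recursion $s_j = \max(\alpha_{(j)}, s_{j-1} + 1)$), to a condition on the running maximum of certain target-index differences, which in turn translates to excursions of the walk $W_j$.

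The chief technical obstacle is obtaining the tight $O(1/\sqrt{k})$ tail estimate on these walk excursions: the correlated walk is not i.i.d.\ (pair constraints couple its steps), the conditioning on the pairs containing $v^{(i)}$ and the critical $(k-2\sqrt{k})$-th sample must be handled carefully, and the boundary regime, where the target $\alpha_i$ may spill into the $2\sqrt{k}$-long plateau of repeated thresholds, needs separate treatment. I anticipate this estimate is exactly what Lemmas~\ref{lem:both} and~\ref{lem:delete} in the appendix deliver, through Doob-type martingale maximal inequalities combined with a coupling to a simple symmetric random walk. The $2\sqrt{k}$-fold threshold repetition is essential in this step: it absorbs typical walk fluctuations of order $\sqrt{k}$, so that even on less likely excursions $v^{(i)}$ still retains a qualifying slot. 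Combining the walk estimate, the filled-set characterization, and the tail argument yields $\Pr[v^{(i)} \text{ rejected}] = O(1/\sqrt{k})$ per coordinate, completing the theorem.
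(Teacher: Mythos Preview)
Your per-coordinate claim is false, and this is a genuine gap rather than a patchable technicality. Take $i = k - 2\sqrt{k}$. Every threshold $T_1,\ldots,T_k$ is at least $s^{(k-2\sqrt{k})}$, so whenever $v^{(k-2\sqrt{k})} < s^{(k-2\sqrt{k})}$ the value $v^{(i)}$ lies below \emph{all} thresholds and is rejected outright regardless of order. By the exchangeability you yourself invoke (flip every coordinate's sample/value coin), the event $v^{(k-2\sqrt{k})} < s^{(k-2\sqrt{k})}$ has probability exactly $1/2$. Hence $\Pr[v^{(k-2\sqrt{k})}\text{ rejected}] \ge 1/2$, not $O(1/\sqrt{k})$. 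The same reasoning shows that for $i = k - c\sqrt{k}$ with any fixed constant $c$, the rejection probability is bounded below by a constant depending on $c$ (it is the probability that the walk $W$ deviates by roughly $(c-2)\sqrt{k}$ over $\Theta(k)$ steps), so you cannot rescue the scheme by simply pushing the cutoff from $k-2\sqrt{k}$ to $k-C\sqrt{k}$.

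The paper sidesteps exactly this obstacle by working with \emph{cumulative} acceptance probabilities rather than per-element ones. It first reduces to the single worst ordering (values revealed in increasing order), then shows via the walk representation that the number of rejected values among the top $i$ merged elements equals $\max\{H^L_i - H^R_i,0\}$, and proves $\mathbb{E}[\max\{H^L_i - H^R_i,0\}] = O(i/\sqrt{k})$ for all $i \le 2k$. Crucially, this is done in two regimes: for $i \le k/2$ the paper does obtain a per-position bound $\Pr[H^R_j = 0] = O(1/\sqrt{k})$ (this is where the lemmas you cite live), but for $i \ge k/2$ it abandons per-element control entirely and bounds only $\mathbb{E}[H^L_i] = O(\sqrt{i})$. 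The cumulative bound $\sum_{j \le i} q_j \ge (1 - O(1/\sqrt{k}))\,i/2$ is then enough, via a summation-by-parts argument, to compare against the prophet's upper bound $\sum_{j\le 2k} \tfrac12 Y_j$. Your walk setup, filled-slot invariance, and use of the $2\sqrt{k}$ plateau are all on target, but the decomposition into ``per-element bound $+$ tail'' cannot work as stated; you need the aggregated accounting.
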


\section{Bipartite Matching Environments}\label{sec:matching}

Before we give our algorithm, we establish some notation to make our exposition clearer. 

\paragraph{Edge Indices} Let $G = (L \cup R, E)$ be a degree-d bipartite graph, and let $e  = (\ell,r)$ be an edge in this graph. There are at most $d$ edges incident to $\ell$, and we can assign them an arbitrary order $\{0,1,...,d-1\}$. Analogously, we can assign the edges incident to $r$ an order $\{0,1...,d-1\}$. Without loss of generality, assume that $e$ is the $j^{th}$ edge incident to $\ell$, and the $k^{th}$ edge incident to $r$. Define $Index(e) = 1 +  j + d \cdot k$. This index function has two key properties
\begin{enumerate}
\item $Index (e) \in \{1,...,d^{2}\}$
\item If $e,e'$ share a vertex, then $Index(e) \neq Index(e')$. 
\end{enumerate}

\paragraph{Edge Thresholds} Given an vector of values $v=  (v_1,...,v_{|E|})$ and an edge $e \in E$ define $x_e(v)$ to be 1 if $e$ is in the maximum weight matching when the weights are given by $v$, and 0 if $e$ is not in this maximum weight matching.\footnote{We can set a tie-braking rule so the maximum weight matching is unique.} Note that $x_e$ is a deterministic increasing function of $v_e$ when all the other weights $v_{-e}$ are fixed. Thus, there exists a threshold function that takes as input the weight $v_{-e}$ of all the other edges, and outputs the lowest weight that edge $e$ needs to have to be in the maximum weight matching.
$$T_e(v_{-e}) = \inf \{v_e: x_e(v_e, v_{-e}) = 1\}.$$

\paragraph{Our algorithm.} We construct an algorithm $\cP_{Matching}$ that takes as offline input a collection $s^1 = (s^1_1,...,s^1_n),...,s^{d^2} = (s^{d^2}_1,...,s^{d^2}_{n})$ of samples, and as online input a vector $v$ of values $(v_{i_1},...,v_{i_n})$. It proceeds as follows:

\begin{newcenter}
\framebox{
\begin{minipage}{0.9\textwidth}{
\smallskip
$\cP_{Matching}(s^1,...,s^{d^2}; v_{i_1},...,v_{i_{|E|}})$
\smallskip
\begin{newitemize}
\item[] \textbf{Offline Phase:}
\begin{newitemize}
\item[1] For each edge $e$, compute $i = Index(e)$.
\item[2] For each edge $e$, set its corresponding sample to be $s^i$. Set its price to be $p_e = T_e(s^i_{-e})$. 
\end{newitemize}
\smallskip
\item[] \textbf{Online Phase:}
\begin{newitemize}
\item [3] Initialize a set $A$ of accepted items to $\emptyset$. 
\item [4]  For $e \in \{i_1,...,i_{|E|}\}$:
\begin{newitemize}
\item[4.a] Flip a coin $c_e =
\begin{cases}
1 & \text{ with probability }\frac{1}{3} \\
0 & \text{ with probability } \frac{2}{3}
\end{cases}
$
\item[4.b] If $c_e = 0$, discard edge $e$ and move on to the next edge. 
\item[4.c] If $c_e = 1$, accept edge $e$ if and only if $v_e > p_e$ and $A \cup \{e\}$ is a matching in the bipartite graph $G$.
\end{newitemize}
\end{newitemize}
\end{newitemize}

}\end{minipage}
}
\end{newcenter}

\begin{theorem}
\label{thm:matching}
The algorithm $\cP_{Matching}$ guarantees a $\frac{1}{6.75}$ competitive ratio for environments $\cI$ that are degree-d bipartite matchings.
\end{theorem}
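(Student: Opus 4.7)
The plan is to prove, for each edge $e$, the pointwise bound
\[
\bE[v_e \cdot \mathbb{1}[e \text{ accepted by } \cP_{Matching}]] \;\geq\; \tfrac{4}{27}\cdot \bE[v_e \cdot \mathbb{1}[e \in MAX(v)]],
\]
and then sum over $e$ to conclude $\bE[\mathrm{ALG}] \geq \tfrac{4}{27}\bE[OPT] = \tfrac{1}{6.75}\bE[OPT]$.

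First I would establish a sampling identity. Fix $e$ and let $i = \mathrm{Index}(e)$. Because $v$ and $s^i$ are independent draws from the product distribution $\cD$, the vector $(v_e, s^i_{-e})$ is itself distributed as $\cD$. Writing $M_e := \mathbb{1}[v_e > p_e]$, the definition $p_e = T_e(s^i_{-e})$ immediately gives $M_e = \mathbb{1}[e \in MAX((v_e, s^i_{-e}))]$, so
\[
\bE[v_e\cdot\mathbb{1}[M_e]] \;=\; \bE_{v' \sim \cD}[v'_e \cdot \mathbb{1}[e \in MAX(v')]] \;=:\; C_e, \qquad \Pr[M_e] = \Pr[e \in MAX(v)].
\]

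Next I would exploit the key structural property of $\mathrm{Index}$: two edges sharing a vertex receive distinct indices. Hence for every $e'$ adjacent to $e$, the event $M_{e'}$ is determined by $(v_{e'}, s^{\mathrm{Index}(e')}_{-e'})$, which uses randomness disjoint from $(v_e, s^i_{-e})$; so $(v_e, M_e)$ is independent of $\{M_{e'}\}_{e' \sim e}$, and all coin flips $c_\cdot$ are independent of everything. Since acceptance of any $e'$ before $e$ requires (at a minimum) $c_{e'} = 1$ and $M_{e'}$, we have the pointwise relaxation
\[
\mathbb{1}[e \text{ accepted}] \;\geq\; \mathbb{1}[c_e = 1]\cdot\mathbb{1}[M_e]\cdot\prod_{e' \sim e}\bigl(1 - c_{e'}\mathbb{1}[M_{e'}]\bigr),
\]
where extending the product to \emph{all} adjacent edges (not just those before $e$) only weakens the right-hand side, so this bound holds for every adversarial order. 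Taking $v_e$-weighted expectations and pulling out the independent $c_e$ and $(v_e, M_e)$ factors gives $\bE[v_e\cdot\mathbb{1}[e \text{ accepted}]] \geq \tfrac{1}{3}\cdot C_e\cdot\bE[\prod_{e' \sim e}(1 - c_{e'}\mathbb{1}[M_{e'}])]$.

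Finally I would compute the product expectation by conditioning on $\{M_{e'}\}_{e' \sim e}$ and invoking Jensen. Conditional on these indicators, the coins are independent $\mathrm{Bernoulli}(1/3)$, so $\bE[\prod(1 - c_{e'}\mathbb{1}[M_{e'}]) \mid \{M_{e'}\}] = (2/3)^K$ where $K := \sum_{e' \sim e} \mathbb{1}[M_{e'}]$. Since $x \mapsto (2/3)^x$ is convex and decreasing, Jensen yields $\bE[(2/3)^K] \geq (2/3)^{\bE[K]}$. By the sampling identity, $\Pr[M_{e'}] = \Pr[e' \in MAX(v)]$, and since any matching uses at most one edge at each vertex, $\sum_{e' \ni u, e' \neq e}\Pr[M_{e'}] \leq 1 - \Pr[M_e]$ at each endpoint $u$ of $e$. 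Summing over the two endpoints gives $\bE[K] \leq 2(1 - \Pr[M_e]) \leq 2$, whence $\bE[(2/3)^K] \geq (2/3)^2 = 4/9$. Combining yields $\bE[v_e\cdot\mathbb{1}[e \text{ accepted}]] \geq \tfrac{1}{3}\cdot\tfrac{4}{9}\cdot C_e = \tfrac{4}{27}C_e$, and summing over $e$ completes the proof.

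The main subtlety will be justifying the ``considered'' relaxation: retaining the stronger indicator ``$e'$ accepted before $e$'' in the product leaves a recursive dependence (since $e'$'s own acceptance depends on further blocking), whereas the weaker event $\{c_{e'}=1, M_{e'}\}$ depends only on $c_{e'}$ and on samples at indices different from $\mathrm{Index}(e)$. Acceptance implies consideration, so the relaxation is free, but it is precisely what decouples the blocking analysis into a product of conditionally-independent Bernoulli factors over which the Jensen step applies cleanly---lifting the bound from the $1/9$ obtainable via naive union-bounding over the two endpoints to the claimed $4/27 = 1/6.75$.
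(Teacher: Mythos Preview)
Your proof is correct, and the overall architecture matches the paper's: the same sampling identity $\Pr[M_e]=\Pr[e\in MAX(v)]$ and $\bE[v_e\mathbb{1}[M_e]]=C_e$, the same relaxation from ``$e$ is blocked'' to ``some adjacent edge $e'$ is \emph{considered} ($c_{e'}=1$ and $M_{e'}$),'' and the same use of the $\mathrm{Index}$ property to decouple $(v_e,M_e)$ from the adjacent events.

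The genuine difference is how you establish the key bound $\Pr[\text{no adjacent edge considered}]\ge 4/9$. The paper splits this event into $E_1\wedge E_2$ (no considered edge at the left endpoint, respectively the right endpoint), bounds each $\Pr[\neg E_j]\le \tfrac13\sum_{e'}\Pr[M_{e'}]\le \tfrac13$ by a union bound, and then invokes positive correlation between $E_1$ and $E_2$ to conclude $\Pr[E_1\wedge E_2]\ge \Pr[E_1]\Pr[E_2]\ge 4/9$. You instead condition on $\{M_{e'}\}_{e'\sim e}$, compute the conditional probability exactly as $(2/3)^{K}$ with $K=\sum_{e'\sim e}\mathbb{1}[M_{e'}]$, and apply Jensen together with $\bE[K]\le 2$. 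Your route is a bit cleaner: it sidesteps the need to justify that $E_1$ and $E_2$ are positively correlated, which in the paper is asserted rather than argued (and is not entirely trivial, since non-adjacent edges at the two endpoints may share a sample index and hence have correlated $M$-indicators). The paper's route, on the other hand, makes the endpoint structure explicit and would more readily generalize to bounding each side separately if one wanted asymmetric degree bounds. Both yield the same $\tfrac{1}{3}\cdot\tfrac{4}{9}=\tfrac{1}{6.75}$.
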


We present the proof of this theorem in appendix \ref{app:matching}. We remark that, for general bipartite matchings (and, more generally, for intersections of two partition matroids), an analogous algorithm with $n$ samples obtains the same competitive ratio. 

Even though our algorithm is not an auction, it is inspired by an approximately optimal auction for bipartite matching environments given by Chawla, Hartline, Malec and Sivan \cite{ChawlaHMS10}. Their auction requires knowledge of the distribution from which edge weights are drawn, and requires knowledge of the virtual values associated with these distributions, which can be estimated in their paper with $n^4 \log n$ samples. In contrast, our algorithm only requires a constant number of samples and approximately maximizes the weight of the matching (as opposed to its virtual weight).

\section{Mechanism Design with Limited Information}\label{sec:mechanisms}
In this section, we give new limited-information auctions for online and multi-dimensional mechanism design. In particular, we improve over existing literature as follows
\begin{itemize}
\item \textbf{Single-Dimensional SPMs with Non-Identical Distributions}  We give the first limited-information sequential posted price mechanisms (SPMs) for matroids and constant-degree bipartite matching settings. Our results guarantee a constant approximation to revenue when distributions are identical and regular, or when distributions are distinct and MHR. The best previously known limited-information SPM \cite{QiqiThesis} applies only to $k$-uniform matroids and requires distributions to be i.i.d. 


\item \textbf{OPMs for Multidimensional Unit-Demand Mechanism Design} We give the first limited-information OPMs for partition, graphic, laminar, and transversal matroid settings, as well as constant-degree bipartite matchings. For bipartite matchings, there exist limited-information auctions that approximately maximize revenue when bidders have identical distributions \cite{DevanurHKN11} \cite{RoughgardenTY12}. Our auction is the first that is approximately optimal for bidders with distinct distributions satisfying the monotone hazard rate condition. 


\item \textbf{A new reduction from welfare to revenue maximization} We give a new reduction from approximate welfare maximization to approximate revenue maximization for single-dimensional environments when buyers' preferences are identical and regular. This reduction generalizes the well know fact that the Vickrey Clarke Groves (VCG) auction with appropriate reserves is approximately optimal for matroid environments \cite{HartlineRoughgarden, DhangwatnotaiRY10} to show that {\em any mechanism that approximately maximizes welfare} (not necessarily VCG) also approximately maximizes revenue when valuations are regular and i.i.d. 
\end{itemize}

Before stating our results more formally, we establish some preliminaries and recall prior work on mechanism design. 
   

\subsection{Mechanism Design Preliminaries}\label{sec:mechprelim}
\mattnote{Due to space constraints, some details are deferred to the appendix. Contained in Appendix~\ref{app:mechprelim} is a formal definition of a mechanism, posted-price mechanism, as well as the specific mechanism design problems we solve (called Bayesian Single-Dimensional Mechanism Design (BSMD) and Bayesian Multi-Dimensional Unit-Demand Mechanism Design (BMUMD) in~\cite{ChawlaHMS10}). Contained also is a brief list of facts related to mechanism design (such as the connection between revenue and virtual valuations). We include here the relevant related work necessary to understand our approach. }

\paragraph{Mechanisms with Reserves} The idea of combining simple, welfare-optimizing mechanisms with revenue-optimizing reserve prices originated in~\cite{HartlineRoughgarden}. In ~\cite{HartlineRoughgarden}, the authors first  remove every bidder who does not meet their reserve, and then run the welfare maximizing mechanism. This process was later dubbed an ``eager'' combination of mechanisms with reserves. The authors of \cite{DhangwatnotaiRY10} introduce a ``lazy'' combination of mechanisms with reserves that first runs the mechanism, and then removes all bidders who do not meet their reserve. In this work, we concern ourselves primarily with lazy reserves. When we refer to \emph{monopoly reserves}, we mean setting the reserve price $\phi_i^{-1}(0)$ for each bidder $i$. When we refer to \emph{sample reserves}, we mean setting a random reserve price $r_i \leftarrow \cD_i$ for bidder $i$, that is drawn from the same distribution as $\cD_i$.



\paragraph{A reduction from OPMs to multi-dimensional mechanism design}  Chawla, Hartline, Malec and Sivan~\cite{ChawlaHMS10} show how to reduce designing (approximately) optimal multi-dimensional mechanisms to (approximately) solving a related single-dimensional problem in a specific way. Given an instance $\cI$ of a multi-dimensional mechanism design problem with $n$ items and $m$ buyers, they construct an analogous single-dimensional instance $\icopies$ with $nm$ buyers. That is, each buyer $i$ in the original setting gets split into $m$ buyers in $\icopies$. The $(i,j)^{th}$ buyer in $\icopies$ only values the $(i,j)^{th}$ good, and her valuation $v_{ij}$ is drawn from the same distribution $\cD_{ij}$ as in the original setting. We use the following result from~\cite{ChawlaHMS10}:

\begin{lemma}\label{thm:CHMS}(\cite{ChawlaHMS10})
Let $\cI$ be an instance of the BMUMD, and let $\icopies$ be its analogous single-dimensional environment. If there exists an $OPM$ for $\icopies$ that achieves an $\alpha$-approximation to the optimal revenue, then there exists a truthful mechanism for $\cI$ that achieves an $\alpha$-approximation to the optimal revenue.~\footnote{Formally, they show that there exists a truthful mechanism for $\cI$ that obtains an $\alpha$-approximation to the optimal revenue achievable by any deterministic mechanism. It is shown in~\cite{ChawlaMS10} that the optimal revenue achievable by any (possibly randomized) mechanism is at most five times larger than that of the optimal deterministic mechanism. So an OPM for $\icopies$ that achieves an $\alpha$-approximation to the optimal revenue implies the existence of a truthful mechanism for $\cI$ that achieves an $\alpha /5$ approximation to the optimal revenue of any (possibly randomized) mechanism.}
\end{lemma}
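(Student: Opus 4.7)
The plan is to convert the given OPM $M'$ for $\icopies$ directly into a truthful posted-price mechanism $M$ for $\cI$ while preserving the revenue guarantee (up to the constant-factor overhead noted in the footnote). When bidder $i$ arrives in $\cI$, I would simulate $M'$ on the $m$ copies of bidder $i$, processing them in the order prescribed by the OPM's arrival ordering. Since $M'$ is an OPM, the price $p_{ij}$ that $M'$ would post to copy $(i,j)$ depends only on which items have already been allocated, and not on the value $v_{ij}$, so I can compute the whole menu $\{(j, p_{ij}) : j \text{ still available}\}$ before bidder $i$ reports anything. I would then offer bidder $i$ this menu and let them pick the item maximizing $v_{ij} - p_{ij}$ (or nothing, if all options yield negative utility). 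This mechanism is truthful by construction, since each bidder faces a menu of posted prices that does not depend on their own report.

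Next I would establish the upper bound $\bE[\OPT(\icopies)] \geq \bE[\OPT(\cI)]$. Given any deterministic truthful mechanism for $\cI$ with expected revenue $R$, one can mimic it in $\icopies$: whenever the $\cI$-mechanism allocates item $j$ to unit-demand bidder $i$ at price $p$, have the $\icopies$-mechanism allocate item $j$ to copy $(i,j)$ at the same price. This remains feasible in $\icopies$ because each item is still sold to at most one copy, and the matching constraint in $\icopies$ is exactly the matching constraint in $\cI$ viewed through the copy correspondence. Combining this with the $\alpha$-approximation of $M'$ gives $\rev(M') \geq \alpha \cdot \bE[\OPT(\cI)]$, so it suffices to show that $M$ extracts in $\cI$ at least the revenue $M'$ extracts in $\icopies$.

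The hardest step will be this last revenue comparison, because $M'$ could in principle sell several copies of the same bidder while $M$ is unit-demand and can sell at most one item per bidder. The natural coupling processes bidders in $\cI$ in an order so that the $m$ corresponding copies arrive consecutively in the simulated $M'$; this keeps the allocation states of $M'$ and $M$ synchronized. The easy direction of the comparison is that whenever $M'$ would sell some copy $(i,j)$ we have $v_{ij} \geq p_{ij}$, so bidder $i$'s favorite available item $j^*$ satisfies $v_{ij^*} - p_{ij^*} \geq 0$ and bidder $i$ does purchase in $M$. The delicate part, and the main obstacle, is absorbing the potential revenue gap when $M'$ would have served multiple copies of bidder $i$ at distinct positive prices. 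The cleanest resolution I see is to restrict attention without loss of generality to OPMs that serve at most one copy per bidder---enforced a priori by having $M'$ halt its processing of bidder $i$'s block of copies after its first sale---at the cost of absorbing a constant into $\alpha$, which matches the footnote's caveat about the comparison being to the optimal deterministic mechanism.
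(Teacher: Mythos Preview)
The paper does not prove this lemma; it is quoted from \cite{ChawlaHMS10} and used as a black box, so there is no proof in the paper to compare against. I will comment on your argument on its own merits.

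Your high-level structure is right: build a menu for each multi-dimensional bidder from the OPM's posted prices, upper-bound the deterministic $\OPT(\cI)$ by $\OPT(\icopies)$, and then argue the menu mechanism's revenue dominates the OPM's. But you misidentify the obstacle in the last step. The feasibility set $\cJ$ of $\icopies$ is the \emph{same} set system as in $\cI$ (on the ground set $[n]\times[m]$), and since $\cI$ is unit-demand, serving two copies $(i,j)$ and $(i,j')$ of the same original bidder is already infeasible in $\icopies$. Any OPM that maintains feasibility therefore never sells to two copies of the same bidder, so the ``at most one copy per bidder'' restriction you propose to impose is automatic, and your fix is vacuous. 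The footnote's constant $5$ has nothing to do with this step; it concerns only the gap between deterministic and randomized optima in $\cI$.

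The genuine obstacle is different: even though $M'$ sells to at most one copy of bidder $i$, it may sell to a \emph{different} copy (at a different price) than the item $j^*$ that bidder $i$ selects from her menu in $M$. Once that happens the allocation states of the two processes diverge, all subsequent prices can differ, and your ``keeps the allocation states synchronized'' claim fails. The resolution in \cite{ChawlaHMS10} exploits precisely what makes $M'$ an OPM: its revenue guarantee holds for \emph{every} arrival order, including one chosen by an adversary who sees all the values. So couple $M$ with the run of $M'$ in which, for each bidder $i$, the copy $(i,j^*)$ corresponding to bidder $i$'s utility-maximizing menu choice is scheduled \emph{first} among her copies. In that run, copy $(i,j^*)$ buys at price $p_{ij^*}$ (since $v_{ij^*}\ge p_{ij^*}$), every other copy of $i$ is thereafter infeasible and rejected, the two processes stay in lockstep, and $\rev(M)$ equals the revenue of this particular run of $M'$, which is at least $\alpha\cdot\OPT(\icopies)$ by order-obliviousness. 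No constant is lost in this step.
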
 

\subsection{From Prophet Inequalities to Mechanisms}

Let $\cP(v_{i_1},...,v_{i_n})$ be a limited-information prophet inequality with a competitive ratio of $\alpha$. All of the limited-information algorithms that we gave in the previous sections are monotonic in $v$, meaning that the higher a value $v_i$ is, the higher the probability that our algorithms accept item $i$. This means that any of our limited-information algorithms induces a {\em limited-information online allocation rule} $x(v)$, and this allocation rule is monotonic. When each value corresponds to a different bidder (single-dimensional setting), this monotonic allocation rule implies a pricing rule $p(v)$ which makes the mechanism $(x,p)$ truthful. This means that all our limited-information algorithms can be used to give truthful online mechanisms to maximize welfare. Furthermore, our mechanisms are posted price mechanisms. This is because when we need to decide whether to accept bidder $i$ or not, the decision to accept depends only on the set $A$ of already accepted bidders  and on the samples that we have from $\cD$.  If $\cP$ obtains a competitive ratio of $\alpha$, we have $\bE_{v}[x_i(v) \cdot v] \geq \alpha \bE_{v}[OPT(v)]$. Thus, our prophet inequalities give sequential posted price mechanisms that approximately maximize welfare in single-dimensional settings.

\subsection{From Welfare to Revenue: The I.I.D. Case}

\label{sec:welfare}
At this point, we have proven prophet inequalities and turned them into posted-price mechanisms  with good welfare guarantees, but have said nothing about revenue. We show in this section how to guarantee a good revenue approximation given a guarantee for a good approximation to welfare. \mattnote{We again note that this process is novel and cannot be replaced by simply plugging our prophet inequalities into the machinery of~\cite{ChawlaHMS10}, which requires full knowledge of the distributions to apply, even if our prophet inequalities do not.}

\paragraph{Comparison Based Mechanisms} Our reduction from welfare to revenue when distributions are i.i.d.  requires the mechanism $\bM$ to be comparison-based. We define below what it means for a mechanism to be comparison based when it uses samples. 
\begin{definition}
Let $\bM(v;s^1,...,s^d)$ be a mechanism for single-dimensional settings which depends on a vector of bids $v = (v_1,...,v_n) \leftarrow \cD$ and also on a collection of samples $s^1 = (s^1_1,...,s^1_n),...,s^d = (s^d_1,...,s^d_n)$, each drawn from $\cD$. Let $x$ be the allocation rule associated with $\bM$. We say that $\bM$ is comparison-based if the allocation rule $x(v_1,...v_n,s^1_1,...,s^d_n)$ only depends on the relative order of its arguments, and not on their respective values.
\end{definition}

\mattnote{The rehearsal algorithm and the algorithms derived from our black-box reduction in corollary \ref{cor:existing} are all comparison-based. The only algorithm which is not comparison-based is our matching algorithm $\cP_{Matching}$, which uses an algorithm for computing maximum weight matchings as a black-box to set a threshold price $p_e = \inf \{v_e : $  e is in a maximum weight matching when all other weights are  $s^{Index(e)}_{-e}\}$. Since $p_e$ cannot necessarily be computed by comparisons between the samples in $s^{Index(e)}$, $\cP_{Matching}$ is not comparison-based. If we use the Greedy algorithm (which is comparison-based) instead of an optimal bipartite matching algorithm, then $\cP_{Matching}$ becomes comparison-based but loses a factor of 2 in its competitive ratio.}



\begin{theorem}\label{thm:regular}
Let $\cJ$ be any downwards-closed set system, and let each $\cD_i$ be identical and regular. Let also $\bM$ be any single-dimensional comparison-based mechanism whose expected welfare competitive ratio is $\alpha$. Then the mechanism that combines (either eagerly or lazily) $\bM$ with monopoly reserves has expected revenue competitive ratio $\alpha$. 
\end{theorem}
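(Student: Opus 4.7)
The plan is to reduce the revenue analysis of $\bM_R$ to a welfare analysis of $\bM$ on a transformed distribution, exploiting the comparison-based structure of $\bM$ together with regularity of $\cD$.

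By Myerson's lemma, the revenue of any truthful single-dimensional mechanism equals its expected virtual welfare. For the lazy combination $\bM_R$, the allocation of bidder $i$ is $x_i^{\bM}(v)\cdot\mathbf{1}[v_i \geq r^*]$ where $r^* = \phi^{-1}(0)$, and regularity of $\cD$ gives $\phi(v_i)\cdot\mathbf{1}[v_i \geq r^*] = \phi(v_i)^+$. Thus
$$\text{Rev}(\bM_R) \;=\; \bE_{v,s}\!\left[\sum_i \phi(v_i)^+\, x_i^{\bM}(v,s)\right],$$
and Myerson similarly gives $\text{OptRev} = \bE[\max_{S \in \cJ}\sum_{i \in S}\phi(v_i)^+]$, where $s$ denotes the samples from $\cD$ used by $\bM$.

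The crux is a coupling claim. Define $w_i = \phi(v_i)^+$, $w^s = \phi(s)^+$, and let $\cD'$ be the induced i.i.d.\ distribution on $w_i$. I will argue that
$$\sum_i w_i\, x_i^{\bM}(v,s) \;=\; \sum_i w_i\, x_i^{\bM}(w,w^s) \quad \text{almost surely.}$$
Indeed, $\bM$ is comparison-based, so its allocation depends only on the relative order of its inputs. By regularity, $\phi^+$ is non-decreasing and strictly increasing on $[r^*,\infty)$, so under $v\mapsto w$ the above-reserve values retain their strict relative order while below-reserve values collapse to $0$ and remain strictly below every above-reserve value. Hence the positions that above-reserve values occupy in the sorted order of all inputs, and the bidders sitting at those positions, are identical before and after the transformation; a comparison-based $\bM$ therefore allocates to the same above-reserve bidders in both. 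Below-reserve bidders may be reshuffled owing to ties at $0$ in $w$, but they contribute $w_i = 0$ to the sum either way. Invoking the $\alpha$-welfare guarantee of $\bM$ on the distribution $\cD'$ (which holds because the prophet-inequality-based mechanisms considered are $\alpha$-competitive for any underlying distribution of bids and samples) yields
$$\bE_w\!\left[\sum_i w_i\, x_i^{\bM}(w,w^s)\right] \;\geq\; \alpha\cdot\bE_w\!\left[\max_{S \in \cJ}\sum_{i \in S} w_i\right] \;=\; \alpha\cdot\text{OptRev},$$
and chaining completes the proof for the lazy variant. The eager variant is handled by applying the same substitution to $\bM$ restricted to the subset of above-reserve bidders, where the transformation again preserves the order of the (now only positive) $w$'s.

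The main obstacle is the coupling claim: a priori, tie-breaking among below-reserve values (all tied at $0$ in $w$) could reshuffle which below-reserve bidders are allocated, and a downwards-closed constraint could conceivably propagate this into different allocations of above-reserve bidders. The resolution is the clean separation in the sorted order: every above-reserve value strictly exceeds every below-reserve value in both $v$ and $w$, so the prefix of positions occupied by above-reserve bidders is the same before and after the transformation, and the bidder sitting at each such position is the same. Hence a comparison-based $\bM$ makes identical decisions about above-reserve bidders, and any discrepancies involve only bidders contributing $0$ to the virtual welfare sum.
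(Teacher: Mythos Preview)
Your reduction to a welfare analysis on the transformed distribution $\cD'$ is a natural idea, and for the eager variant the one-line argument is fine and matches the paper. The lazy case, however, has a real gap in the coupling claim.

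You assert that a comparison-based $\bM$ makes identical decisions about above-reserve bidders whether it is run on $(v,s)$ or on $(w,w^s)=(\phi(v)^+,\phi(s)^+)$, on the grounds that above-reserve bidders occupy the same prefix positions in the sorted order. But ``comparison-based'' means the allocation is a function of the \emph{entire} permutation of the inputs, not just of its top segment. Nothing prevents $\bM$ from basing its decision about an above-reserve bidder $i$ on the relative order of two below-reserve bidders $j,k$; under $v$ that order is determined, while under $w$ those bidders are tied at~$0$, so $x_i^{\bM}(v,s)$ and $x_i^{\bM}(w,w^s)$ need not agree pointwise (indeed, the latter is not even well-defined without a tie-breaking rule). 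Your ``clean separation'' observation is correct but does not rule this out, and the almost-sure equality you claim is false in general.

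The paper takes a different route for the lazy combination. It first proves a structural lemma about comparison-based mechanisms: for any fixed relative ordering $v_1>\cdots>v_n$, if $q_j$ is the probability $\bM$ selects bidder $j$ and $J(i)=\max_{S\in\cJ}|S\cap\{1,\dots,i\}|$, then $\sum_{j\le i}q_j\ge\alpha\,J(i)$ for every $i$ (proved by plugging in the adversarial profile $v_j=1$ for $j\le i$, $v_j=0$ otherwise, into the welfare guarantee). With this in hand, an Abel-summation rewriting of $\sum_{j\le m}q_j\,\phi(v_j)$ (where $m$ is the number of above-reserve bidders) lets one compare term-by-term against Myerson's virtual welfare, using $Q_i\ge\alpha J(i)\ge\alpha P_i$ together with the nonnegativity of $\phi(v_i)-\phi(v_{i+1})$ (regularity) and of $\phi(v_m)$. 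This yields the $\alpha$ ratio pointwise in $v$, hence in expectation, without ever needing to run $\bM$ on the degenerate input $w$.

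Your approach can be repaired, but the fix is not the one you give. If ties in $(w,w^s)$ are broken using $(v,s)$, the permutations coincide exactly, so $x^{\bM}(v,s)=x^{\bM}(w,w^s)$ for \emph{all} bidders. One must then argue that $\bM$'s welfare guarantee on $\cD'$ holds under this particular random tie-breaking; since, conditioned on $(w,w^s)$, the below-reserve components of $(v,s)$ are i.i.d.\ from $\cD$ restricted below $r^*$ and hence induce a uniformly random order among the tied zeros, this amounts to running $\bM$ on a vanishingly perturbed $\cD'$ and passing to the limit. That works, but it is a distributional argument, not the pointwise ``prefix'' argument you offered.
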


Of course, computing the monopoly reserves requires knowledge of the distributions. These reserves can be replaced by samples, using a result \mattnote{(stated in Appendix~\ref{app:mechanisms})} from Azar, Daskalakis, Micali and Weinberg~\cite{AzarDMW13}.

\begin{corollary}\label{cor:iidrevenue}
If $\bM$ is a single-dimensional mechanism that guarantees an $\alpha$ approximation to welfare when distributions are i.i.d. and regular then $\bM$ combined with lazy sample reserves guarantees an $\frac{\alpha}{2}$ approximation to revenue and an $\frac{\alpha}{2}$ approximation to welfare.
\end{corollary}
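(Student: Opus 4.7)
The plan is to derive this corollary directly from Theorem~\ref{thm:regular} combined with the sample-reserve result of~\cite{AzarDMW13} quoted in Appendix~\ref{app:mechanisms}. Theorem~\ref{thm:regular} already does the heavy lifting of converting a welfare approximation into a revenue approximation under i.i.d.\ regular distributions once monopoly reserves are attached, so the only additional work is to argue that a single fresh sample from $\cD_i$ is almost as good as the (unknown) monopoly reserve, both for revenue and for welfare.

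Concretely, I would proceed in two steps. First, apply Theorem~\ref{thm:regular} to $\bM$: since $\bM$ is comparison-based and $\alpha$-approximates welfare on i.i.d.\ regular distributions, $\bM$ combined with lazy monopoly reserves $\alpha$-approximates the expected revenue (and also retains an $\alpha$-approximation to welfare, because $\bM$ already does and lazy monopoly reserves can only drop bidders who are above their monopoly price). Second, invoke the sample-reserve lemma from~\cite{AzarDMW13}: replacing each monopoly reserve by an independent draw $r_i \leftarrow \cD_i$ loses at most a factor of two relative to the monopoly-reserve version, simultaneously for expected revenue and expected welfare under i.i.d.\ regular distributions. Chaining the two losses yields the claimed $\alpha/2$ guarantees for both objectives.

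The main technical subtlety lies in verifying the hypotheses of the two citations. Theorem~\ref{thm:regular} requires $\bM$ to be comparison-based, so I would point back (as Section~\ref{sec:welfare} already does) to the observation that the Rehearsal Algorithm and all algorithms produced by Corollary~\ref{cor:existing} are comparison-based, while $\cP_{Matching}$ becomes comparison-based at an additional factor-of-two cost when its maximum-weight-matching subroutine is replaced by Greedy. The~\cite{AzarDMW13} lemma is then used as a black box: the revenue half is essentially the Dhangwatnotai--Roughgarden--Yan argument adapted to lazy reserves, and the welfare half follows by a symmetry/coupling argument between the bid $v_i$ and the independent reserve $r_i$ that exploits regularity and the monotonicity of the allocation rule induced by $\bM$. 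I expect the most delicate part of the writeup to be this welfare half, since the ``sample reserve loses at most a factor of two'' intuition is considerably more standard in the revenue literature than in the welfare literature, and care is needed to ensure the coupling argument goes through for every comparison-based monotone allocation rule that our prophet inequalities can produce.
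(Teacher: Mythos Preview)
Your plan is essentially the paper's own argument: Theorem~\ref{thm:regular} gives the $\alpha$ revenue guarantee for $\bM$ with lazy monopoly reserves, and then Lemma~\ref{thm:ADMW} (the~\cite{AzarDMW13} result quoted in Appendix~\ref{app:mechanisms}) downgrades monopoly reserves to sample reserves at a factor-of-two cost. That is exactly how the paper derives the corollary.

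One small correction on the welfare half. You route welfare through monopoly reserves as well, asserting that ``lazy monopoly reserves can only drop bidders who are above their monopoly price'' and hence preserve the $\alpha$ welfare guarantee. Besides the typo (you mean \emph{below}), this intermediate claim is not true in general: adding any reserves to $\bM$ can only lose welfare, and there is no reason the loss is bounded. Fortunately you do not need this step. Read the welfare clause of Lemma~\ref{thm:ADMW} carefully: it takes as hypothesis that $\bM$ \emph{itself} has welfare competitive ratio $\beta$, and concludes directly that $\bM$ with lazy sample reserves has welfare competitive ratio $\beta/2$. So for welfare you bypass monopoly reserves entirely and apply the lemma to $\bM$ with $\beta=\alpha$; there is nothing delicate left to verify. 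The ``coupling argument'' you anticipate is just the observation that $\Pr[v_i \geq r_i] = 1/2$ when $v_i,r_i$ are i.i.d., so each winner of $\bM$ survives the sample reserve with probability $1/2$ independently of everything else.
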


\subsection{From Welfare to Revenue: the MHR case}

Since we want mechanisms that guarantee good revenue for asymmetric bidders, we also need a reduction from welfare maximization to revenue maximization when distributions are not identical. It is well known (and stated in Appendix~\ref{app:mechanisms}) that, when bidders' distributions have a monotone hazard rate, a {\em single-dimensional} mechanism that approximates welfare combined with lazy monopoly reserves gives a good approximation to revenue~\cite{DhangwatnotaiRY10}. We emphasize that an analogous result is not known for multi-dimensional settings.\footnote{If such a result existed, then the VCG auction together with appropriate reserves would be a very simple, approximately optimal multidimensional mechanism when distributions are MHR.}. Combining this with lemma \ref{thm:ADMW}, we obtain the following corollary.

\begin{corollary}\label{cor:mhrrevenue}
If $\bM$ guarantees an $\alpha$ approximation to welfare and distributions are $MHR$ then $\bM$ combined with lazy sample reserves guarantees an $\frac{\alpha}{2e}$ approximation to revenue and an $\frac{\alpha}{2}$ approximation to welfare.
\end{corollary}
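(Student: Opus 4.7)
The plan is a short two-step concatenation of prior reductions, plus an elementary symmetrization for the welfare bound.

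\emph{Revenue bound.} First I would invoke the MHR reduction referenced in the paragraph immediately preceding the corollary (essentially the result of~\cite{DhangwatnotaiRY10}): if $\bM$ achieves an $\alpha$-approximation to welfare and each $\cD_i$ is MHR, then $\bM$ combined with lazy monopoly reserves attains expected revenue at least $(\alpha/e)$ times the optimal revenue. Then I would apply Lemma~\ref{thm:ADMW}, which states that replacing monopoly reserves by a single sample drawn from each $\cD_i$ loses at most a factor of $2$ in expected revenue under MHR. Composing the two yields the asserted $\alpha/(2e)$ revenue bound.

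\emph{Welfare bound.} Let $x(v)$ be the allocation rule of $\bM$ and $s\leftarrow\cD$ the sample vector used to set reserves. Lazy sample reserves retain winner $i$ iff $x_i(v)=1$ and $v_i>s_i$, so the mechanism's expected welfare is $\sum_i\bE[v_i\,x_i(v)\,\mathbf{1}[v_i>s_i]]$. I would condition on $v_{-i}$; by monotonicity of $\bM$ there is a threshold $t_i(v_{-i})$ with $\{x_i(v)=1\}=\{v_i\ge t_i(v_{-i})\}$, and integrating out the independent sample $s_i$ leaves $\bE[v_i\,\mathbf{1}[v_i\ge t_i]\,F_i(v_i)\mid v_{-i}]$. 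A symmetrization between $v_i$ and an independent copy $v_i'\sim\cD_i$, using the pointwise inequality $\max(v_i,v_i')\,\mathbf{1}[\max\ge t_i]\ge \min(v_i,v_i')\,\mathbf{1}[\min\ge t_i]$, shows this is at least $\tfrac12\,\bE[v_i\,x_i(v)\mid v_{-i}]$. Summing over $i$ and applying $\bE[\sum_i v_i\,x_i(v)]\ge\alpha\,\bE[\OPT(v)]$ gives the $\alpha/2$ welfare guarantee.

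\emph{Main obstacle.} The only subtlety is to confirm that both the MHR revenue reduction and Lemma~\ref{thm:ADMW} are stated for the \emph{lazy} (not eager) combination used here. The sentence preceding the corollary is explicit about lazy monopoly reserves, and Lemma~\ref{thm:ADMW} is the lazy-reserves version used in~\cite{AzarDMW13}; once that is verified, the proof is essentially a direct concatenation of two prior results with the short symmetrization argument above.
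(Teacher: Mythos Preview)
Your proposal is correct and matches the paper's approach. The revenue bound is exactly the paper's two-step concatenation of Proposition~\ref{prop:MHR} with Lemma~\ref{thm:ADMW}; for the welfare bound the paper simply invokes the second clause of Lemma~\ref{thm:ADMW} (``if $\bM$ obtains expected welfare competitive ratio $\beta$, then the lazy combination of $\bM$ with single sample reserves obtains expected welfare competitive ratio $\beta/2$''), whereas you supply a short symmetrization proof of precisely that clause---so the only difference is that you unpack one citation.
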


\subsection{Our mechanisms}
Since our limited-information prophet inequalities guarantee a good approximation to welfare, we are now ready to give our approximately optimal multi-dimensional OPMs. Given an environment $\cJ$ for which we have a limited-information online algorithm $\cP$, our online mechanism for $\cJ$ will behave as follows

\begin{enumerate}
\item Use $\cP$ to choose a set $W \in \cJ$ of winners that approximately maximizes welfare.
\item Use a sample $r \leftarrow \cD$ as a vector of lazy reserves. Keep only winners $i \in W$ that satisfy $v_i \geq r_i$. 
\end{enumerate}

We note that for all the limited-information algorithms that we obtain from our black-box reduction in section \ref{sec:existing}, we only uses the samples $s_i$ corresponding to items $i$ that are never chosen by our algorithms. The samples $s_i$ corresponding to items $i$ that are chosen by the algorithm (that is, corresponding to auction winners) are never used, and hence can be used to set reserve prices. 

\mattnote{In Appendix~\ref{app:mechanisms}, we state two theorems for OPMs, one when distributions are i.i.d. and regular, and the other one when distributions have a monotone hazard rate, but are not necessarily identical. We remark, as described above, that to apply our algorithm $\cP_{Matching}$ in the i.i.d. regular setting, we need to modify it so it uses the greedy matching algorithm as a black-box. Theorems~\ref{thm:IIDBMUMD} and~\ref{thm:MHRBMUMD} are direct applications of Corollaries~\ref{cor:iidrevenue} and~\ref{cor:mhrrevenue}. Essentially, they state that we can obtain limited-information multi-dimensional for in any unit-demand setting for which we have a limited-information prophet inequality. If we start with a limited-information prophet inequality with competitive ratio $\alpha$, then the corresponding mechanism for i.i.d. regular environments has revenue and welfare competitive ratio $\alpha/2$, and the corresponding mechanism for non-i.i.d. MHR environments has revenue competitive ratio $\alpha/2e$ and welfare competitive ratio $\alpha/2$. We separately state below our theorems as they apply to bipartite matching, which models settings where goods are matched to buyers. }


\begin{theorem}
For the BMUMD problem on constant-degree bipartite matching settings, there exists a $\frac{1}{13.5e}$-competitive  auction using a constant number of samples when buyers' valuations are drawn from MHR distributions. A modification of this algorithm gives a  $\frac{1}{27}$-competitive limited-information auction when buyers' valuations are drawn from i.i.d. regular distributions.
\end{theorem}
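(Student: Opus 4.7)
The plan is to assemble three pieces already in the paper: the constant-sample matching prophet inequality (Theorem~\ref{thm:matching}), the welfare-to-revenue reductions (Corollaries~\ref{cor:iidrevenue} and~\ref{cor:mhrrevenue}), and the multi-to-single-dimensional reduction (Lemma~\ref{thm:CHMS}). Given a BMUMD instance $\cI$ on a constant-degree bipartite matching, I would first form the single-dimensional analog $\icopies$; because each copy $(i,j)$ values only the $(i,j)^{th}$ good, the feasibility constraint of $\icopies$ is still a constant-degree bipartite matching, so $\cP_{Matching}$ applies as an OPM on $\icopies$ with welfare competitive ratio $\frac{1}{6.75}$ using $d^2+1$ samples.

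For the MHR case, directly append lazy sample reserves to $\cP_{Matching}$ and invoke Corollary~\ref{cor:mhrrevenue}. The reserve vector $r \leftarrow \cD$ is just one additional sample, so the overall algorithm still uses a constant number of samples; moreover, it remains order-oblivious since lazy reserves are applied to whatever winner set the OPM produces. Corollary~\ref{cor:mhrrevenue} gives revenue ratio $\frac{1}{6.75 \cdot 2e} = \frac{1}{13.5e}$ on $\icopies$, which Lemma~\ref{thm:CHMS} transports to a truthful BMUMD mechanism for $\cI$ with the same ratio.

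The i.i.d.\ regular case is where the obstacle lies, because Corollary~\ref{cor:iidrevenue} requires the welfare mechanism to be comparison-based, and $\cP_{Matching}$ is not: the prices $p_e = T_e(s^{Index(e)}_{-e})$ are obtained by running an optimum matching subroutine on sample values, whose outcome depends on the actual sample magnitudes and not just their relative order. The fix, flagged in the discussion preceding Theorem~\ref{thm:regular}, is to replace the optimum matching subroutine with the greedy matching algorithm, which only compares edge weights and therefore makes the modified $\cP_{Matching}$ comparison-based. Greedy is a $\tfrac{1}{2}$-approximation to maximum weight bipartite matching, and I would re-examine the proof of Theorem~\ref{thm:matching} to verify that this approximation passes through unchanged everywhere thresholds are computed, degrading the welfare ratio by exactly a factor of $2$ to $\frac{1}{13.5}$. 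Applying Corollary~\ref{cor:iidrevenue} with lazy sample reserves then gives revenue ratio $\frac{1}{27}$ on $\icopies$, and Lemma~\ref{thm:CHMS} lifts this to the BMUMD setting.

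The main subtlety to pin down is therefore the greedy-substitution step: one must check that the proof of Theorem~\ref{thm:matching} (in Appendix~\ref{app:matching}) only uses the matching subroutine in places where a $2$-approximation suffices, so that the welfare loss is a clean factor of $2$ rather than something worse that would spoil the $1/27$ constant.
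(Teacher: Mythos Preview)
Your proposal is correct and follows essentially the same approach as the paper: the theorem is not proved separately but is presented as a special case of Theorems~\ref{thm:IIDBMUMD} and~\ref{thm:MHRBMUMD}, which the paper says are ``direct applications of Corollaries~\ref{cor:iidrevenue} and~\ref{cor:mhrrevenue}'' combined with the CHMS reduction (Lemma~\ref{thm:CHMS}). The greedy-substitution point you flag is exactly the modification the paper asserts (without a detailed verification) in the paragraph preceding Theorem~\ref{thm:regular}, so your caution about checking that the factor-$2$ loss is clean is the one place where you are being more careful than the paper itself.
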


Finally, even for settings where we do not have limited-information prophet inequalities, we can leverage existing results to obtain improved mechanism design results. Jaillet, Soto and Zenklusen \cite{JailletSZ12} give an algorithm for the matroid secretary problem in the {\em free order model}, where the algorithm gets to choose the order in which values are revealed. This model corresponds to a Sequential Posted Price Mechanism. We give in appendix \ref{app:freeorder} an improved analysis of Jaillet, Soto and Zenklusen, improving their competitive ratio from $\frac{1}{9}$ to $\frac{1}{4}$. We use this improved analysis to give the following SPM.

\begin{theorem}{\label{thm:SPM}}
Let $\cJ$ be any matroid and let each $\cD_i$ be MHR. The there exists a truthful SPM requiring only a single sample from $\cD$ that guarantees a revenue competitive ratio of $\frac{1}{8e}$ and a welfare competitive ratio of $\frac{1}{8}$. When the distributions $\cD_i$ are independent and regular, this algorithm obtains a revenue competitive ratio of $\frac{1}{8}$. 
\end{theorem}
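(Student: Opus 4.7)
The plan is to assemble Theorem~\ref{thm:SPM} from three pieces already present in the paper: (i) the improved $\tfrac14$-competitive analysis of the Jaillet--Soto--Zenklusen (JSZ) free-order matroid secretary algorithm proved in Appendix~\ref{app:freeorder}; (ii) the single-sample black-box reduction of Theorem~\ref{thm:existing}; and (iii) the welfare-to-revenue conversions in Corollaries~\ref{cor:iidrevenue} and~\ref{cor:mhrrevenue}. Because an SPM is exactly an online selection procedure in which the seller (i.e., the mechanism) chooses the order in which bidders arrive, the free-order matroid secretary model and the SPM setting are essentially interchangeable as online selection problems.

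First I would verify that the JSZ free-order algorithm, equipped with the improved analysis of Appendix~\ref{app:freeorder}, fits the definition of an order-oblivious secretary algorithm in Section~\ref{sec:existing}: the algorithm's first phase consists of observing a prefix of the input without accepting, and the competitive-ratio analysis of the second phase does not rely on any further randomization of the ordering. In the free-order setting the mechanism itself picks the arrival order, so in particular it may schedule the ``observation'' elements first and then process the remaining ones in any order it likes; this slots straight into the template of Definition~1. Applying Theorem~\ref{thm:existing} with $\alpha = \tfrac14$ then yields a single-sample online selection algorithm for arbitrary matroids $\cJ$ whose expected weight is at least $\tfrac14 \, \bE[\OPT(v)]$, using exactly one sample vector $s \leftarrow \cD$. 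Monotonicity of the induced allocation rule (inherited from the threshold structure of JSZ and preserved by the reduction, exactly as in Section~\ref{sec:mechanisms}) means that this procedure, when reinterpreted as an auction with the appropriate threshold prices, is a truthful SPM and has welfare competitive ratio $\tfrac14$.

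Next I would plug this SPM, call it $\bM$, into the two welfare-to-revenue corollaries. The key logistical point, already flagged at the end of Section~\ref{sec:mechanisms}, is that the black-box reduction of Theorem~\ref{thm:existing} only uses the coordinates $s_i$ corresponding to bidders in the \emph{observation} phase; the coordinates corresponding to bidders who are actually allocated during the online phase are untouched. Those unused coordinates are distributed exactly as fresh draws from $\cD$, so they are legal inputs to the lazy sample-reserve rule used in Corollaries~\ref{cor:iidrevenue} and~\ref{cor:mhrrevenue} without increasing the sample budget beyond one. Feeding $\bM$ into Corollary~\ref{cor:mhrrevenue} with $\alpha = \tfrac14$ produces a revenue competitive ratio of $\tfrac{\alpha}{2e} = \tfrac{1}{8e}$ and a welfare competitive ratio of $\tfrac{\alpha}{2} = \tfrac18$ under MHR; feeding it into Corollary~\ref{cor:iidrevenue} (after observing that JSZ, like the other algorithms derived from the black-box reduction, is comparison-based) gives revenue $\tfrac{\alpha}{2} = \tfrac18$ under i.i.d.\ regular distributions.

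The main obstacle I anticipate is the order-obliviousness verification in the first step: one must check that the improved $\tfrac14$ analysis of JSZ in Appendix~\ref{app:freeorder} really only exploits the randomness in which elements land in the observation phase, and not any additional randomness in the ordering of the accepting phase. Once this is confirmed, the rest of the argument is a mechanical composition of the cited results, and both the sample complexity (one sample) and the truthfulness/SPM structure follow directly from the properties preserved by Theorem~\ref{thm:existing} and the lazy reserves construction.
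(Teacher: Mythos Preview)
Your proposal is correct and follows essentially the same approach the paper takes: use the improved $\tfrac14$ analysis of the JSZ free-order algorithm (Appendix~\ref{app:freeorder}) together with the black-box single-sample reduction, then apply Corollaries~\ref{cor:iidrevenue} and~\ref{cor:mhrrevenue}, reusing the untouched sample coordinates as lazy reserves. One terminological point worth tightening: JSZ does \emph{not} literally satisfy Definition~1, since its second-phase guarantee requires the algorithm to choose the processing order (the span-based schedule in step~5), not merely tolerate an adversarial one; the paper itself flags this as a ``modified definition of order-oblivious appropriate for the free-order model.'' Your argument already accounts for this correctly via the SPM/free-order correspondence, but you should phrase it as invoking the obvious free-order analogue of Theorem~\ref{thm:existing} rather than claiming JSZ ``slots straight into'' Definition~1.
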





\bibliography{writeup}{}
\bibliographystyle{plain}

\begin{appendix}
\begin{center}
\huge{Appendix}
\end{center}

\section{\mattnote{Matroids and Feasibility Constraints}}\label{app:preliminaries}
\mattnote{
\begin{itemize}
\item \textbf{Matroids.} $\cJ$ is a matroid if and only if $\cJ$ is downwards-closed\footnote{$\cJ$ is downward-closed if for any $S \in \cJ$ and any $T \subset S$, we have $T \in \cJ$.}, contains $\emptyset$, and satisfies the augmentation property: for all $S,S' \in \cJ$ with $|S| > |S'|$, there exists some $x \in S - S'$ such that $S' \cup \{x\} \in \cJ$.
\item \textbf{Uniform matroids of rank $k$.} A set $S \subset \cU$ is in $\cJ$ if and only if $|S| \leq k$.
\item \textbf{Partition matroids.} Let $B_1,...,B_{\ell}$ be disjoint subsets of $\cU$ such that $\cU  = B_1 \cup ... \cup B_{\ell}$. Associate a positive integer capacity $c_i$ with each block $B_i$. A set $S \subset \cU$ is in $\cJ$ if and only if $|S \cap B_i | \leq c_i$ for every $i \in \{1,...,\ell\}.$  
\item \textbf{Laminar matroids.} Let $\cF \in 2^{\cU}$ be a \emph{laminar family} of subsets of $\cU$. $\cF$ is a laminar family iff for all $A,B \in \cF$, we have $A \subseteq B$, $B \subseteq A$, or $A \cap B = \emptyset$. Associate also, for every set $A \in \cF$, a positive integer capacity $c_A$. A set $S \in \cJ$ if and only if $|S \cap A| \leq  c_A$ for all $A \in \cF$.  
\item \textbf{Graphic Matroids.} Let $G = (V,E)$ be a graph with vertex set $V$ and edge set $E$. The universe $\cU$ of the set system is given by the set of edges $E$. A subset $S \subset E$ is in $\cJ$ if and only if $E$ induces no cycles in the graph $G$. \mattnote{In other words, a subset of edges is feasible if and only if it is a forest.}
\item \textbf{Transversal Matroids.} Let $G = (L \cup R, E)$ be a bipartite graph, with left-vertex set $L$ and right-vertex set $R$. The universe $\cU$ of the set system is $L$, and a subset $S \subset L$ is in $\cJ$ if and only if there is a matching in the graph $G$ that matches every vertex of $S$ to some vertex in $R$. 
\item \textbf{Bipartite Matchings.} Let $G = (L \cup R,E)$ be a bipartite graph and let $\cU = E$. A set $S \subset E$ is independent if and only if it induces a matching in $G$. The bipartite matching has degree $d$ if at most $d$ edges are incident to any given vertex.
\end{itemize}}

\section{\mattnote{Omitted Details From Section~\ref{sec:mechprelim}}}\label{app:mechprelim}
\paragraph{Mechanisms} An instance of the Bayesian Single-Dimensional Mechanism Design problem (BSMD) is specified by a set system $(\cU,\cJ)$ and a product distribution $\cD = \cD_1 \times ... \cD_n$, where $n = |\cU|$. Each element of $\cU$ represents a buyer, interested in obtaining a service. The collection $\cJ \subset 2^{\cU}$  represents constraints on which buyers can receive service simultaneously. Each buyer $i$'s value for receiving service is a random variable $v_i$ drawn from the distribution $\cD_i$.  A mechanism is said to be \emph{dominant strategy truthful} if it is in each bidder's interest to report truthfully their value for each item, no matter what values are reported by the other bidders. 

Formally, a mechanism is a pair of vector-valued functions $(x,p)$ where, given a vector of bids $b = (b_1,...,b_n)$, $x_{i}(b)$ is player $i$'s probability of receiving service  and $p_i(b)$ is player $i$'s expected payment. If bidder $i$'s true preferences are given by $v_i$, then her expected utility when the profile of reported bids is $\vec{b}$ is  $U(\vec{v}_i, b_i, b_{-i}) = x_i(\vec{b}) \cdot v_{i} - p_i(\vec{b})$. A mechanism is dominant strategy truthful if for all $\vec{v}_i,\vec{b}_i,\vec{b}_{-i}$, we have $U(\vec{v}_i,\vec{v}_i,\vec{b}_{-i}) \geq U(\vec{v}_i,\vec{b}_i,\vec{b}_{-i})$. We also require mechanisms to be individually rational. That is, $U(\vec{v}_i,\vec{v}_i,\vec{b}_{-i}) \geq 0$ for all $\vec{v}_i,\vec{b}_{-i}$.

\paragraph{Allocation Rules Determine Prices \cite{Myerson81,ArcherTardos}} If $\bM = (x,p)$ is a single-dimensional mechanism, then $\bM$ is truthful if and only if $x_i(b_{i},b_{-i})$ is a monotonically increasing function of $b_i$ (regardless of the vector of other bids $b_{-i}$) and the price function satisfies
$$p_i (b_i) = b_i x_i(b_i)  - \int_{0}^{b_i} x_i(z) dz$$
where the dependence on $b_{-i}$ has been omitted. Thus, a monotonic allocation rule immediately specifies a truthful mechanism for single-dimensional settings.

\paragraph{Monotone Hazard Rate} The hazard rate function $h(v)$ of a distribution with cumultive distribution function $F(v)$ and probability density function $f(v)$ is defined as $h(v) = \frac{f(v)}{1 - F(v)}$. The distribution has a monotone hazard rate (MHR) if $h(v)$ is increasing in $v$. 

\paragraph{Virtual Valuations and Revenue} The virtual value of a bidder with value $v$ sampled from a distribution with CDF $F$ and PDF $f$ is usually denoted by $\phi(v)$, and is equal to $v - \frac{1-F(v)}{f(v)}$. The distribution is called regular if  $\phi(v)$ is monotonically increasing in $v$. It is immediate that all $MHR$ distributions are regular.   Myerson's famous theorem shows that in all single dimensional settings, the expected revenue of a truthful mechanism is exactly its expected virtual welfare.  That is $\bE_{v} [\sum_{i=1}^n p_i(v)] =  \bE_{\vec{v}}[\sum_i x_i(\vec{v})\phi_i(v_i)]$. 

\paragraph{Posted Price Mechanisms} A single-dimensional {\em sequential posted price mechanism} (SPM) serves bidders one at a time, offering each a price upon arrival that depends only on the previously observed bids and the underlying distributions. The mechanism maintains a set $S$ of bidders who have been assigned service, initialized to be $\emptyset$, and adds each bidder to $S$ iff their reported bid exceeds the price offered. An {\em order-oblivious posted price mechanism} (OPM) is a sequential posted price mechanism that maintains its approximation guarantee when the order is chosen by an adversary instead of the mechanism. \footnote{We remark that our definition matches that of~\cite{KleinbergW12}, which extends the one given in~\cite{ChawlaHMS10}.}

\paragraph{Bayesian Multi-parameter Unit-demand Mechanism Design (BMUMD)} In a Bayesian multidimensional mechanism design problem, 
there are $n$ buyers interested in $m$ items for sale. 
Each buyer $i$ has a value $v_{ij}$ for receiving item $j$. Let $\cU = [n] \times [m]$, with the element $(i,j)$ denoting the event that bidder $i$ receives item $j$. Further denote by $\cJ$ the subsets of $\cU$ corresponding to feasible allocations. That is, a set $S \in \cJ$ iff it is feasible to simultaneously allocate item $j$ to bidder $i$ for all $(i,j) \in S$. A setting is said to be \emph{unit-demand} if for all $S \in \cJ$, $(i,j) \in S \Rightarrow (i,j') \notin S$ for all $j \neq j'$ (i.e. it is infeasible to allocate any bidder more than one item).  As in~\cite{ChawlaHMS10}, we also assume that each $v_{ij}$ is sampled independently from a known distribution $\distr_{ij}$. As in the single dimensional setting, we seek to devise a truthful mechanism whose expected revenue is (approximately) optimal with respect to the maximum over all truthful mechanisms. 

\section{Omitted Proofs and Algorithms from section \ref{sec:existing}}
\label{app:existing}

We now give a proof of theorem \ref{thm:existing}. 
\begin{theorem*}[Theorem \ref{thm:existing}]
If $\cS$ is an order-oblivious algorithm for the secretary problem with competitive ratio $\alpha$, then $\cP_{\cS}$ is a single-sample algorithm for the prophet problem with competitive ratio $\alpha$. 
\end{theorem*}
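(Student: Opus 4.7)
The plan is to construct a coupling that makes $\cP_{\cS}$'s execution look identical to a run of $\cS$ on a fresh sample from $\cD$, and then leverage the order-oblivious property. Concretely, I would introduce an auxiliary random vector $w = (w_1, \ldots, w_n)$ defined by
\[
w_i = \begin{cases} s_i & \text{if } i \in \{j_1, \ldots, j_k\}, \\ v_i & \text{otherwise,} \end{cases}
\]
where $j_1, \ldots, j_n$ is the random permutation of $[n]$ chosen by $\cP_{\cS}$ in its offline stage. My first step is to observe that, conditional on any choice of $j$, each $w_i$ is independently distributed as $\cD_i$ (because $s_i, v_i \sim \cD_i$ independently and $j$ is drawn independently of both), so $w \sim \cD$ marginally, and in particular $\bE[OPT(w)] = \bE[OPT(v)]$.

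Next I would argue that from $\cS$'s point of view, the sequence of arguments it is fed inside $\cP_{\cS}$ is indistinguishable from the sequence it would receive if run directly on the vector $w$ under a particular order of arrivals. Specifically, the first $k$ values $\cS$ sees are $s_{j_1}, \ldots, s_{j_k} = w_{j_1}, \ldots, w_{j_k}$, which is exactly the prefix of a uniformly random permutation of $w$. After that, $\cS$ is fed $v_{i_t} = w_{i_t}$ for each $i_t \notin \{j_1, \ldots, j_k\}$, in the order imposed by the adversary on $v$. Crucially, every item $\cS$ accepts is a non-sampled index, and on those indices $w_i = v_i$, so the reward $\cP_{\cS}$ collects on $v$ equals the reward $\cS$ collects on $w$.

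Now I would invoke order-obliviousness. The ordering on the remaining $n - k$ items is generated jointly by the adversary's choice (a function of $v$) and the random permutation $j$; this is some possibly complicated, even adversarial, order on $w$. Since $\cS$ is order-oblivious, its competitive guarantee is preserved for \emph{any} such order, provided only that the first $k$ items form the prefix of a uniformly random permutation of $w$, which we have already ensured. Therefore $\bE\bigl[\sum_{i \in A^*} w_i\bigr] \geq \alpha \cdot \bE[OPT(w)]$, and combining with the two identities above yields $\bE\bigl[\sum_{i \in A^*(v)} v_i\bigr] \geq \alpha \cdot \bE[OPT(v)]$.

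The one place where care is needed, and what I would view as the main conceptual obstacle, is justifying that using the adversarial $v$-ordering to sequence the non-sampled entries of $w$ really does fall within the scope of the order-oblivious guarantee. The adversary knows $v$ and hence knows the non-sampled coordinates of $w$; one must be explicit that the order-oblivious definition tolerates any order chosen with full knowledge of $w$, so that no independence between the arrival order and the late-arriving values is required. Once this is laid out, the rest is a clean coupling calculation and the theorem follows.
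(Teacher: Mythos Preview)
Your proposal is correct and follows the same approach as the paper: both arguments rest on the observation that the hybrid vector $w$ (samples on the first $k$ randomly-chosen indices, values elsewhere) is distributed as $\cD$, so $\cS$'s order-oblivious guarantee on $w$ transfers directly to $\cP_{\cS}$'s guarantee on $v$. The paper's proof is terser and leaves the coupling vector $w$ and the identity $\bE[OPT(w)]=\bE[OPT(v)]$ implicit, whereas you make them explicit; your more careful treatment of whether the adversarial ordering on the non-sampled entries is covered by the order-oblivious definition is a useful clarification the paper skips.
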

\begin{proof}

The algorithm $\cP_{\cS}$ first permutes the vector $s$ of samples into a random permutation $s_{j_1},...,s_{j_n}$ and takes the first $k$ elements $s_{j_1},...,s_{j_k}$ of this permutation and passes them as inputs to the secretary algorithm $\cS$. After that, the secretary algorithm $\cS$ is passed all the inputs $v_i$ where $i \not \in \{j_1,...,j_k\}$ in an arbitrary order. Since $\cS$ is order-oblivious, the set it selects has a weight of at least $\alpha \cdot OPT(v)$, where $OPT(v) = \max_{A \in \cJ} \sum_{i \in A} v_i$.  So if we let $f(v)$ denote the probability density function associated with the joint distribution $\cD$, we have that our algorithm $\cP_{\cS}$ obtains expected reward of at least

$$\int_{\vec{v}} f(\vec{v}) \alpha \cdot OPT(\vec{v}) d\vec{v}$$

The prophet's expected reward is 
$$OPT = \int_{\vec{v}} f(\vec{v}) \cdot OPT(\vec{v}) d\vec{v}$$

which immediately says that $\cP_{\cS}$ obtains competitive ratio $\alpha$, completing the proof. 
\end{proof}

\subsection{Existing order-oblivious secretary algorithms}
We sketch some existing secretary algorithms in this subsection, and argue why they are order-oblivious.

\paragraph{Oveis Gharan and Vondrak \cite{GharanV11}'s algorithm for general matroids in the random assignment model.} If the rank of the matroid given by $\cJ$ is less than 12, this algorithm runs the rank-1 matroid algorithm. Otherwise it observes a set the first half of its input and sets a threshold $T$ equal to the $\lfloor \frac{r}{4} \rfloor +1^{st}$ largest value it observes, where $r$ is the. For the second half of the input, it accepts all items above the threshold $T$, as long as accepting them does not violate the matroid constraints. It is immediate that this algorithm is order-oblivious. 

\paragraph{Dimitrov and Plaxton's algorithm for transversal matroids \cite{DimitrovP12}.} A transversal matroid is given by a graph $G = (L \cup R, E)$. The universe $\cU$ is the set of left-vertices $L$. The algorithm begins by assigning an ranking to the set $R$ of right vertices.  It then chooses a set $S$ of ``samples" consisting of the first $k = Binom(n, \frac{1}{2})$  values seen. All the values in $S$ are discarded, but they are used to construct an auxiliary matching $M_0(S)$, where each item in $S$ is matched to the highest ranking right-node that is still available.  The algorithm then constructs the ``real matching" $M_1$ using elements from $V = L - S$. As each of the remaining left-vertices  $\ell \in L - S$ arrives, $\ell$ is matched with the highest ranked right vertex $r$ that is not matched in $M_0(S)$, as long as $r$ is not already matched in $M_1$. Dimitrov and Plaxton show that this is a $\frac{1}{16}$ competitive algorithm, and that this competitive ratio holds regardless of the order in which elements from $V$ are revealed. Thus, the algorithm is order-oblivious. 

\paragraph{Rank-1 matroids} Before giving the algorithms for graphic and laminar matroids, we first give a very simple $\frac{1}{4}$-competitive algorithm for the classical secretary problem (choosing one out of $n$ items) that is order oblivious.

\begin{newcenter}
\framebox{
\begin{minipage}{0.9\textwidth}{
\smallskip
$\cS_{Rank-1}(v_{i_1},...,v_{i_n})$

\begin{newitemize}
\item[1] Let $k = Binomial(n,\frac{1}{2})$.
\item[2] Let $T = \max \{v_{i_1},...,v_{i_k}\}.$
\item[3] Accept the first element in $v_{i_{k+1}},...,v_{i_n}$ satisfying $v_i > T$.
\end{newitemize}
}\end{minipage}
}
\end{newcenter}

With probability $1/4$, the highest element is somewhere in $v_{i_{k+1}},...,v_{i_n}$ and the second-highest is a ``sample" in $v_{i_1},...,v_{i_k}$. In this case, the highest element is accepted no matter what order the elements in $V$ are revealed. Thus $\cS_{Rank-1}$ is order-oblivious.

\paragraph{Korula and Pal's algorithm for graphic matroids \cite{KorulaP09}.} A graphic matroid is given by a graph $G = (V,E)$. The universe  $\cU$ is the set of edges and a set $S \subset E$ is independent if it does not induce a cycle in $G$. Korula and Pal start by giving an arbitrary ordering $\{1,...,n\}$ to the vertices in $V$. This induces two directed graphs $G_0 = (V,E_0),G_1 =(V,E_1)$ where an edge $e = (i,j) \in E_0$ if and only if $i < j$ in the assigned ordering of $V$ and either $(i,j)$ or $(j,i)$ are in $E$. Analogously, an edge $e = (i,j) \in E_1$ if and only if $j < i$ and either $(i,j)$ or $(j,i)$ are in $E$. Note that both graphs $G_0,G_1$ are acyclic. 

Korula and Pal's algorithm first flips a coin $c$ to choose a graph $G_c$, and then runs, for each vertex $v \in V$, the rank-1 secretary algorithm to choose a unique edge $e$ leaving $v$ in $G_c$.  They show that this algorithm is $\frac{1}{2e}$ competitive by using Dynkin's algorithm \cite{Dynkin}. By replacing Dynkin's algorithm with its order-oblivious counterpart $\cS_{Rank-1}$, we can obtain a $\frac{1}{8}$ competitive secretary algorithm for graphic matroids. This algorithm is order-oblivious in a ``partitioned sense": it first randomly partitions the universe (set of edges) into blocks $B_1,...,B_{|V|}$, where block $B_v$ consists of the edges leaving $v$ in graph $G_c$. Then, it runs the order-oblivious algorithm for rank-1 matroids on each block. It is not hard to see that our proof reducing order-oblivious secretary algorithms to single-sample prophet inequalities also applies to this setting. 

\paragraph{Jaillet, Soto and Zenklusen's laminar matroid algorithm \cite{JailletSZ12}.} Like Korula and Pal's algorithm, the algorithm for laminar matroids also reduces to running the rank-1 matroid algorithm on a sequence of disjoint blocks. Thus, it is also order-oblivious in a partitioned sense, and also implies a single-sample prophet inequality.

\section{Omitted Proofs from Section \ref{sec:matching}}\label{app:matching}
\begin{theorem*}[Theorem \ref{thm:matching}]
The algorithm $\cP_{Matching}$ guarantees a $\frac{1}{6.75}$ competitive ratio for environments $\cI$ that are degree-d bipartite matchings.
\end{theorem*}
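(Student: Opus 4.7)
My plan is an edge-by-edge argument: for each edge $e$ with endpoints $u, w$, I will show
\[
\bE[v_e \cdot 1(e \text{ accepted})] \geq \tfrac{4}{27}\,\bE[v_e \cdot x_e(v)],
\]
where $x_e(v) = 1(e \in \text{MAX}(v))$ (using the notation of the paper's setup of $\cP_{Matching}$). Summing over $e$ then gives $\bE[\text{ALG}] \geq \tfrac{4}{27}\bE[\text{OPT}(v)] = \tfrac{1}{6.75}\bE[\text{OPT}(v)]$.

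The opening observation is a distributional identity. Because $p_e = T_e(s^{Index(e)}_{-e})$ is a deterministic function of an independent sample $s^{Index(e)} \sim \cD$ and $\cD$ is a product distribution, the hybrid vector $(v_e,\,s^{Index(e)}_{-e})$ has the same law as $v$. Therefore
\[
\bE[v_e \cdot 1(v_e > p_e)] = \bE\big[v_e \cdot 1(e \in \text{MAX}(v_e,\,s^{Index(e)}_{-e}))\big] = \bE[v_e \cdot x_e(v)],
\]
and writing $q_e := \Pr[v_e > p_e] = \Pr[e \in \text{MAX}(v)]$, the contribution of $e$ if we accepted it whenever it is a \emph{candidate} (meaning $v_e > p_e$ and $c_e = 1$) would be exactly $\tfrac{1}{3}\bE[v_e \cdot x_e(v)]$.

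The crux is to show that, conditioned on $e$ being a candidate, $e$ is accepted by $\cP_{Matching}$ with probability at least $4/9$, under any adversarial arrival order. A sufficient condition (which removes arrival-order dependence) is that no neighbor of $e$ is ever a candidate. Let $N_u$ (resp.\ $N_w$) denote the edges incident to $u$ (resp.\ $w$) other than $e$, and let $A$ (resp.\ $B$) denote the event that some edge in $N_u$ (resp.\ $N_w$) is a candidate. Because $\text{MAX}(v)$ is a matching, $\sum_{e' \ni u} q_{e'} \leq 1$ and therefore $\sum_{e' \in N_u} q_{e'} \leq 1$; a union bound combined with the $1/3$ coin factor yields $\Pr[A] \leq 1/3$, and symmetrically $\Pr[B] \leq 1/3$.

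The main obstacle is combining these bounds multiplicatively into $\Pr[A^c \cap B^c] \geq (2/3)^2 = 4/9$, since $A$ and $B$ need not be outright independent: edges within $N_u$ have pairwise distinct indices (they share $u$, and the $Index$ function separates edges sharing a vertex), but an edge $e' \in N_u$ may collide in index with an edge $e'' \in N_w$ and thus share a sample vector. I handle this by partitioning $N_u \cup N_w$ by sample index: each group has size at most $2$ with at most one representative from each side, and conditional on the shared sample $s^i$ the two candidacy events depend on disjoint $(v_{e'}, c_{e'})$ pairs and hence are conditionally independent. A positive-correlation (FKG-style) argument on each size-$2$ group, together with honest independence across different indices, delivers
\[
\Pr[A^c \cap B^c] \;\geq\; \prod_{e' \in N_u \cup N_w}\!\!\big(1 - \tfrac{q_{e'}}{3}\big) \;\geq\; \Big(1 - \tfrac{\sum_{N_u} q_{e'}}{3}\Big)\Big(1 - \tfrac{\sum_{N_w} q_{e'}}{3}\Big) \;\geq\; \tfrac{4}{9}.
\]
Finally, since $Index(e) \neq Index(e')$ for every neighbor $e' \sim e$, the candidacy of $e$ (which depends only on $v_e, c_e, s^{Index(e)}$) is independent of $A^c \cap B^c$ (which depends on disjoint samples and on $(v_{e'}, c_{e'})$ for $e' \sim e$), giving $\bE[v_e \cdot 1(e\text{ accepted})] \geq \tfrac{1}{3}\bE[v_e x_e(v)] \cdot \tfrac{4}{9} = \tfrac{4}{27}\bE[v_e x_e(v)]$.
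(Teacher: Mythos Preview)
Your proof follows essentially the same route as the paper's: the same distributional identity for $\bE[v_e\,1(v_e>p_e)]=\bE[v_e\,x_e(v)]$, the same events $E_1,E_2$ (your $A^c,B^c$), the same use of the $Index$ function to decouple $Y_e$ from $(E_1,E_2)$, and the same target bound $\Pr[E_1\cap E_2]\ge 4/9$. Your treatment of the correlation step is more explicit---partitioning neighbors by sample index and noting conditional independence given the shared sample---whereas the paper simply asserts in one line that ``$E_1$ is more likely to happen when $E_2$ happens''; both arguments ultimately rest on this same positive-correlation claim.
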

\begin{proof}

Let $v = (v_1,...,v_{|E|})$ be drawn from a joint distribution $\cD_1 \times ... \times \cD_{|E|}$. Recall that $T_e(v_{-e}) = \inf \{v_e: $ e is in the maximum weight matching, given all other weights are $v_{-e}\}.$ Thus, the optimal offline algorithm selects a matching that has an expected weight of 
$$OPT = \sum_{e=1}^{|E|} Pr_{v \leftarrow \cD} [v_{e} \geq T_e(v_{-e})] \cdot \bE_{v \leftarrow \cD} [v_e | v_e \geq T_e(v_{-e})]$$

Let $q_e  = Pr_{v \leftarrow \cD}[ v_{e} \geq T_{e}(v_{-e})]$ and recall that $p_e = T_e(s^{Index(e)}_{-e})$. Since $s^{Index(e)}$ is a sample drawn from the same distribution  that $v$ is drawn, we have that $Pr[v_e \geq p_e] = q_e$. We also have $\bE[v_e | v_e \geq p_e] =   \bE_{v \leftarrow \cD} [v_e | v_e \geq T_e(v_{-e})]$. So we can write the optimal reward as
$$OPT = \sum_{e} Pr[v_e \geq p_e] \bE[v_e \geq p_e].$$

What is the reward obtained by our algorithm $\cP_{Matching}$? Recall that $\cP_{Matching}$ first sets a price $p_e$ for each edge $e$. When the value $v_e$ is revealed, the algorithm flips a coin $c_e$ that is equal to one with probability $\frac{1}{3}$, and accepts $e$ if and only if $c_e = 1$ and $v_e \geq p_e$ and $A \cup \{e\}$ is an independent set (i.e. a matching in the given bipartite graph). For each edge $e \in E$, define the following three random events
\begin{enumerate}
\item $c_e = 1$,
\item $v_e \geq p_e$,
\item $A \cup \{e\}$ is an independent set.
\end{enumerate}

Call these events $X_e,Y_e$ and $Z_e$, respectively. 

Thus, the expected reward obtained by $\cP_{Matching}$ is
$$W = \sum_{e} Pr[X_e \bigwedge Y_e \bigwedge Z_e] \cdot \bE[v_e | X_e,Y_e, Z_e]$$
Clearly, $X_e$ is independent from $Y_e,Z_e$ and $v_e$. This means we can write
$$W = \sum_{e} \frac{1}{3} Pr[Y_e \bigwedge Z_e] \cdot \bE[v_e | Y_e,Z_e].$$
However, $Y_e$ and $Z_e$ are not necessarily independent. Recall that $Z_e = ``A \cup \{e\}$ is an independent set'', where $A$ is the set of items accepted before $e$, and $Y_e = ``v_e \geq p_e''$. The price $p_e$ depends on a sample $s^{Index(e)}$ that may have been used to price an edge $e'$ arriving before $e$, and hence to influence the set $A$.

For any edge $e = (\ell,r)$, we can define the following two events $E_1,E_2$, stating that no other edge $e'$ incident to $\ell$ and no  edge $e'$ incident to $r$ get chosen by $\cP$

$$E_{1} = | \{e' =(\ell,r'):    e' \neq e \text{ and } v_{e'} \geq p_{e'} \text { and } c_{e'} = 1\}| = 0$$
$$E_{2} = | \{e' =(\ell',r) :   e' \neq e \text{ and } v_{e'} \geq p_{e'} \text { and } c_{e'} = 1\}| = 0$$

If both events $E_{1}$ and $E_2$ hold, then $A \cup \{e\}$ will always be an independent set.  Recall that edge $e$'s contribution to the $\cP_{Matching}$'s expected reward is $\frac{1}{3} Pr[Y_e \bigwedge Z_e] \cdot \bE[v_e | Y_e \bigwedge Z_e].$ Since $Z_e$ always holds whenever both $E_1,E_2$ hold, we have
$$Pr[Y_e \bigwedge Z_e] \cdot \bE[v_e | Y_e \bigwedge Z_e] \geq Pr[Y_e \bigwedge E_1 \bigwedge E_2] \cdot \bE[v_e | Y_e \bigwedge E_1 \bigwedge E_2].$$
Note that events $E_1,E_2$ only depend on values $v_{e'}$ and prices $p_{e'}$ for $e' \neq e$. Since $\cD$ is a product distribution, $v_e$ is independent of $v_{e'}$. Also, since $e,e'$ share a vertex, we have that the prices $p_e,p_e'$ are determined using different samples $s^{Index(e)}, s^{Index(e')}$. Thus $Y_e$ is independent of  $E_1$ and of $E_2$. This means that we can write 
$$Pr[Y_e \bigwedge E_1 \bigwedge E_2] \cdot \bE[v_e | Y_e \bigwedge E_1 \bigwedge E_2] = Pr[E_1 \bigwedge E_2] \cdot Pr[Y_e] \cdot \bE[v_e | Y_e].$$
Thus, it suffices to give a a constant lower bound on $Pr[E_1 \bigwedge E_2]$ in order to guarantee a constant factor competitive ratio for $\cP_{Matching}$. 

We now follow a line of argument from Chawla, Hartline, Malec and Sivan \cite{ChawlaHMS10}. Since the edges in a maximum matching form an independent set, and the probability of any edge $e$ being present in a maximum matching is $Pr[v_e \geq p_e] = Pr[Y_e]$, we have
$$\sum_{e' : e' = (\ell,r')} Pr[Y_{e'}] \leq 1$$
$$\sum_{e': e' = (\ell',r)} Pr[Y_{e'}] \leq 1.$$

Now, the probability of $\cP_{Matching}$ choosing an element $i$ is $Pr[X_e \bigwedge Y_e \bigwedge Z_e] \leq Pr[X_e] \cdot Pr[Y_e] = \frac{1}{3} Pr[Y_e]$, so we have
$$ \sum_{e' : e' = (\ell,r')} Pr[X_e \bigwedge Y_e \bigwedge Z_e] \leq \frac{1}{3}$$
$$ \sum_{e' : e' = (\ell',r)} Pr[X_e \bigwedge Y_e \bigwedge Z_e] \leq \frac{1}{3}$$

This means that the probability that event $E_1$ does not happen is at most $\frac{1}{3}$, and analogously for event $E_2$. Thus, $Pr[E_1] \geq \frac{2}{3}, Pr[E_2] \geq \frac{2}{3}$. Since events $E_1$ is more likely to happen when event $E_2$ happens, we have
$$Pr[E_1 \bigwedge E_2] \geq Pr[E_1] \cdot Pr[E_1 | E_2] \geq \frac{2}{3} \cdot \frac{2}{3} = \frac{4}{9}.$$

We can conclude that
$$W =  \sum_{i=1}^n Pr[X_i \bigwedge Y_i \bigwedge Z_i] \cdot \bE[v_i | X_i,Y_i, Z_i] $$
$$ = \sum_{i=1}^n \frac{1}{3} Pr[Y_i \bigwedge Z_i] \cdot \bE[v_i | Y_i, Z_i]$$
$$ \geq \sum_{i=1}^n \frac{1}{3} Pr[Y_i] \cdot Pr[E_1 \bigwedge E_2] \cdot \bE[v_i | Y_i]$$
$$ \geq \sum_{i=1}^n \frac{1}{6.75} Pr[Y_i] \cdot \bE[v_i | Y_i]$$
$$ = \frac{1}{6.75} OPT$$

\end{proof}
We remark that the only place where we needed $d^2$ samples was to argue that any two incident edges $e,e'$ have independent prices $p_e, p_e'$. For general bipartite matchings, if we have $|E|$ samples $s^1,...,s^{|E|}$, we can use sample $s^e$ to compute $p_e$, and then all prices are independent. Thus, our algorithm can be used for general matchings if we have access to $|E|$ samples from $\cD$.

\section{Omitted Proofs from Section \ref{sec:mechanisms}}\label{app:mechanisms}
\begin{lemma}\label{thm:ADMW}(\cite{AzarDMW13}\footnote{This result was stated for VCG auctions, but it applies without modifying the proof to any auction that approximately maximizes welfare. We note that Dhangwatnotai, Roughgarden and Yan proved this result for VCG auctions with sample reserves. \cite{DhangwatnotaiRY10}. We also note that the result depends on the fact, proved in \cite{DhangwatnotaiRY10}, that when there is only a single-buyer with distribution $\cD$, the mechanism that offers a posted price equal to a sample from $\cD$ obtains $\frac{1}{2}$ of the optimal revenue.}) Let $\cJ$ be any downwards-closed set system and let each $\cD_i$ be regular (not necessarily identical). Let $\bM$ be a mechanism such that the lazy combination of $\bM$ with monopoly reserves has an expected revenue competitive ratio of $\alpha$. Then the lazy combination of $\bM$ with single sample reserves\footnote{Sample each bidder's reserve $r_i$ independently from $\cD_i$}  obtains an expected revenue competitive ratio of $\frac{\alpha}{2}$.\footnote{We could also replace the median with the $p^{th}$ quantile and get a competitive ratio of $\alpha \cdot \min\{p,1-p\}$. Any error in approximating the median (or quantile) is directly absorbed into the competitive ratio as well.} Furthermore, if $\bM$ obtains expected welfare competitive ratio of $\beta$, then the lazy combination of $\bM$ with single sample reserves or median reserves obtains expected welfare competitive ratio of $\frac{\beta}{2}$.
\end{lemma}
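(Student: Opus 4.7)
The plan is to reduce both claims to a per-bidder pointwise comparison, exploiting that $\bM$ is a truthful (hence monotone) single-dimensional mechanism. Fix a bidder $i$ and condition on the other bids $v_{-i}$ and on all of $\bM$'s internal samples and coin flips. Monotonicity of $\bM$'s allocation rule in $v_i$ then implies there is a threshold $t = t(v_{-i},\cdot)$ such that $\bM$ accepts $i$ iff $v_i \ge t$ and charges $t$ upon winning. The lazy combination with reserve $r_i$ therefore accepts $i$ iff $v_i \ge \max(t,r_i)$ and charges $\max(t,r_i)$, so both the per-bidder expected revenue and expected welfare reduce to single-bidder statements parameterized by $t$ and the reserve policy.

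\textbf{Revenue ($\alpha/2$).} The technical heart is the following single-bidder inequality: for any regular $\cD$, any $t \ge 0$, and $v,r \overset{\mathrm{iid}}{\sim} \cD$,
\[
\bE_r\!\left[\max(r,t)\cdot \Pr_v[v \ge \max(r,t)]\right] \;\ge\; \tfrac12\,\max_p\,\max(p,t)\cdot \Pr_v[v \ge \max(p,t)].
\]
I would prove this in quantile space: writing $q = 1-F(p)$, $q_r = 1-F(r) \sim U[0,1]$, $q_t = 1-F(t)$, and $R(q) := F^{-1}(1-q)\cdot q$, regularity makes $R$ concave on $[0,1]$ with $R(0)=R(1)=0$. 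The LHS becomes $\int_0^{q_t} R(q)\,dq + (1-q_t)R(q_t)$ and the RHS becomes $\tfrac12 \max_{q\le q_t} R(q)$. Lower-bounding $R$ by the triangle through $(0,0)$, $(q^\star, R(q^\star))$, $(1,0)$ at its peak $q^\star$---the standard Dhangwatnotai-Roughgarden-Yan area-under-revenue-curve trick---yields the inequality after splitting into the cases $q^\star \le q_t$ and $q^\star > q_t$. The per-bidder monopoly-reserve revenue conditional on $v_{-i}$ equals the RHS (with $t = p_i^{\bM}(v_{-i})$, by definition of the monopoly reserve), and the per-bidder sample-reserve revenue equals the LHS. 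Taking expectations and summing over $i$ yields $\mathrm{Rev}(\bM+\mathrm{sample}) \ge \tfrac12 \mathrm{Rev}(\bM+\mathrm{monopoly}) \ge (\alpha/2)\cdot\mathrm{OPT}$.

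\textbf{Welfare ($\beta/2$).} Let $x_i(v_i,v_{-i}) \in [0,1]$ be the allocation rule of $\bM$, monotone non-decreasing in $v_i$. For sample reserves I would establish the distribution-free fact: for every non-decreasing $g \ge 0$ and i.i.d.\ $v, r \sim \cD$,
\[
\bE_{v,r}\!\left[g(v)\cdot v \cdot \mathbf{1}[v\ge r]\right] \;\ge\; \tfrac12\,\bE_v\!\left[g(v)\cdot v\right].
\]
Integrating out $r$ (so $\mathbf{1}[v\ge r] \mapsto F(v)$) reduces this to $\bE_v[g(v)\cdot v\cdot (2F(v)-1)] \ge 0$, which is Chebyshev's sum inequality applied to the co-monotone functions $g(v)\cdot v$ and $2F(v)-1$ (the latter having mean $0$). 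Applying this with $g(\cdot) := x_i(\cdot,v_{-i})$ and averaging over $v_{-i}$ gives the welfare bound. For median reserves, $r_i$ is the constant median $m$ of $\cD_i$; monotonicity of $x_i$ in $v_i$ together with $\Pr[v_i\ge m] = \Pr[v_i<m]$ gives $\bE[x_i(v_i,v_{-i})\,v_i\,\mathbf{1}[v_i\ge m]] \ge \bE[x_i(v_i,v_{-i})\,v_i\,\mathbf{1}[v_i<m]]$ pointwise in $v_{-i}$, which is the desired factor-$\tfrac12$ bound.

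\textbf{Main obstacle.} The only step that genuinely uses regularity is the single-bidder revenue inequality with an embedded threshold $t$. Without $t$ it is the classical ``sample-reserve loses factor $2$ to Myerson'' statement; incorporating $t$ requires care because the RHS optimization is restricted to quantiles $q \le q_t$ while the LHS expectation mixes the regimes $q_r \le q_t$ (effective price $r$) and $q_r > q_t$ (effective price $t$). Once the triangle-under-concave-curve argument is set up to handle both regimes, the rest of the proof is a pointwise per-bidder reduction that works for any monotone $\bM$, not only VCG, which is exactly what lets this lemma be applied to the limited-information mechanisms built from the prophet inequalities in the earlier sections.
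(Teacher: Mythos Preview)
The paper does not prove this lemma; it is stated with attribution to \cite{AzarDMW13} (the footnote notes that the proof there, written for VCG, extends verbatim to any welfare-approximating mechanism) and no argument is given in the paper itself. So there is no in-paper proof to compare your attempt against.

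That said, your sketch is correct and is essentially the approach the cited works take. The per-bidder reduction via monotonicity (condition on $v_{-i}$ and the mechanism's coins, get a threshold $t$, then compare lazy monopoly reserve to lazy sample reserve as single-buyer posted prices with floor $t$) is exactly the right decomposition, and it is what makes the result transfer from VCG to arbitrary monotone $\bM$. Your single-bidder revenue inequality in quantile space,
\[
\int_0^{q_t} R(q)\,dq + (1-q_t)R(q_t) \;\ge\; \tfrac12 \max_{q \le q_t} R(q),
\]
is the correct generalization of the Dhangwatnotai--Roughgarden--Yan ``random reserve loses factor $2$'' fact to the case of an embedded threshold, and the triangle-under-concave-curve argument with the $q^\star \le q_t$ versus $q^\star > q_t$ split goes through exactly as you describe (in the second case one uses $R(q) \ge (q/q_t)R(q_t)$ on $[0,q_t]$ to get $(1 - q_t/2)R(q_t) \ge \tfrac12 R(q_t)$). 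Your welfare argument via Chebyshev applied to the co-monotone pair $v\,x_i(v,v_{-i})$ and $2F(v)-1$ is a clean way to get the $\beta/2$ bound and works identically for median reserves. There is no gap.
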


\begin{proposition}\label{prop:MHR}(\cite{DhangwatnotaiRY10})
Let $\cJ$ be any downwards-closed set system, and let each $\cD_i$ be MHR. Let also $\bM$ be any single-dimensional universally truthful mechanism\footnote{A mechanism is universally truthful if it is a distribution over deterministic truthful mechanisms. All posted-price mechanisms  are universally truthful.} whose expected welfare competitive ratio is $\alpha$. Then the mechanism $\bM'$ that combines (lazily) $\bM$ with monopoly reserves has a revenue competitive ratio of $\frac{\alpha}{e}$. 
\end{proposition}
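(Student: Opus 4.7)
The plan is to follow the approach of Dhangwatnotai, Roughgarden and Yan~\cite{DhangwatnotaiRY10}. Since $\bM$ is universally truthful (i.e., a distribution over deterministic truthful mechanisms), I would fix an arbitrary deterministic realization and analyze its expected virtual welfare, then average over the mechanism's internal randomness at the end. Myerson's theorem, applied to $\bM'$ (whose allocation $x_i(v)\,\mathbb{1}[v_i \geq r_i]$ remains monotone in $v_i$ because $x_i$ is), gives that the expected revenue of $\bM'$ equals the expected virtual welfare $\bE[\sum_i x_i(v)\phi_i(v_i)\mathbb{1}[v_i \geq r_i]]$, where $r_i = \phi_i^{-1}(0)$ is bidder $i$'s monopoly reserve.

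Next, I would decouple the analysis one bidder at a time. Fixing the other bidders' values $v_{-i}$ and exploiting that $x_i(\cdot,v_{-i})$ is a monotone $\{0,1\}$-valued step function with critical threshold $t_i = t_i(v_{-i})$, bidder $i$'s contribution to the revenue of $\bM'$ equals the expected posted-price revenue at price $\max\{t_i,r_i\}$, namely $\max\{t_i,r_i\}\,\Pr[v_i \geq \max\{t_i,r_i\}]$ (by Myerson's revenue formula for a single bidder with a threshold allocation), while her contribution to the welfare of $\bM$ is $\bE[v_i\,\mathbb{1}[v_i \geq t_i]]$. Hence the theorem reduces to the per-bidder inequality
\[
\max\{t_i,r_i\}\,\Pr\bigl[v_i \geq \max\{t_i,r_i\}\bigr] \;\geq\; \frac{1}{e}\,\bE\bigl[v_i\,\mathbb{1}[v_i \geq t_i]\bigr].
\]

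The hard part is this inequality, which I would establish in two cases using only standard MHR facts. When $t_i \geq r_i$, I would integrate the tail bound $\Pr[v_i \geq s] \leq \Pr[v_i \geq t_i]\exp(-(s-t_i)h_i(t_i))$ for $s \geq t_i$ (valid because $h_i$ is nondecreasing) to obtain $\bE[v_i\,\mathbb{1}[v_i \geq t_i]] \leq \Pr[v_i \geq t_i]\bigl(t_i + 1/h_i(t_i)\bigr) \leq 2\,t_i\,\Pr[v_i \geq t_i]$, where the last step uses $t_i h_i(t_i) \geq r_i h_i(r_i) = 1$; since $2 \leq e$ this suffices. When $t_i < r_i$, the required bound reduces, via $\bE[v_i\,\mathbb{1}[v_i \geq t_i]] \leq \bE[v_i]$, to the classical MHR fact that the monopoly revenue $r_i\,\Pr[v_i \geq r_i]$ is at least $\bE[v_i]/e$; this is the ``monopoly extracts $1/e$ of the expected welfare'' inequality established in~\cite{DhangwatnotaiRY10}, and is the only place where the $e$ factor is incurred.

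Taking expectations over $v_{-i}$, summing the per-bidder inequalities over $i$, and applying linearity together with the welfare guarantee of $\bM$, I would conclude that the revenue of $\bM'$ is at least $\tfrac{1}{e}\bE[\sum_i x_i(v)\,v_i] \geq \tfrac{\alpha}{e}\,\bE[\OPT(v)]$, yielding the claimed $\alpha/e$ revenue competitive ratio. The main conceptual obstacle is the $t_i < r_i$ case: one cannot directly compare the revenue extracted at price $r_i$ to the welfare extracted above the (smaller) threshold $t_i$, so the local tail argument used for $t_i \geq r_i$ no longer applies and one must invoke the global MHR bound on $\bE[v_i]$, which is what forces the $1/e$ loss.
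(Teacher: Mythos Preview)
Your proposal is correct and follows the same overall structure as the paper's proof: reduce to deterministic mechanisms, fix a bidder $i$ and the other bids $v_{-i}$, and compare the single-bidder posted-price revenue at $\max\{t_i,r_i\}$ to the single-bidder welfare at threshold $t_i$. The only difference is that the paper invokes the per-bidder inequality $R(\max\{t,r^*\}) \geq \tfrac{1}{e}V(t)$ as a black box (Lemma~\ref{lem:qiqi}, from Yan's thesis), whereas you prove it directly via a two-case MHR tail argument; your argument in the $t_i \geq r_i$ case in fact yields the stronger factor $\tfrac{1}{2}$, while the $t_i < r_i$ case is exactly where the $\tfrac{1}{e}$ bound from~\cite{DhangwatnotaiRY10} is needed, matching the paper's reliance on the same cited fact.
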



%
In order to prove Proposition~\ref{prop:MHR}, we need to borrow a lemma from Yan~\cite{QiqiThesis}.

\begin{lemma}\label{lem:qiqi}(~\cite{QiqiThesis}) Let $\cD$ be an MHR distribution with Myerson reserve $r^*$. Let also $V(t)$ denote the expected welfare of the single bidder mechanism that sets price $t$, and $R(t)$ denote the expected revenue of the single bidder mechanism that sets price $t$ (when the bidder's value is drawn from $\cD$). Then:

$$R(\max\{t,r^*\}) \geq \frac{1}{e}V(t)$$
\end{lemma}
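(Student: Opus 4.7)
My plan is to split on whether $t \geq r^*$ or $t < r^*$, and in each case derive a bound that is at worst off by a factor of $e$.

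If $t \geq r^*$, then $\max\{t,r^*\} = t$, and I would actually establish the stronger bound $V(t) \leq 2 R(t)$. Integration by parts gives $V(t) = t(1-F(t)) + \int_t^\infty (1-F(v))\,dv = R(t) + \int_t^\infty (1-F(v))\,dv$. Since $t \geq r^*$, the virtual value $\phi(t) \geq 0$, equivalently $h(t) \geq 1/t$, and by MHR $h(v) \geq h(t)$ for all $v \geq t$. Hence $1-F(v) \leq (1-F(t))\,e^{-h(t)(v-t)}$ for $v \geq t$, so $\int_t^\infty (1-F(v))\,dv \leq (1-F(t))/h(t) \leq t(1-F(t)) = R(t)$, and thus $V(t) \leq 2R(t) \leq e\,R(t)$.

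If $t < r^*$, then $\max\{t,r^*\} = r^*$, and since $V$ is non-increasing in $t$ it suffices to handle the extreme case and show $V(0) \leq e\,R(r^*)$. This is the crux of the proof. The key MHR-specific ingredient is that $g(v) := -\log(1-F(v)) = \int_0^v h(u)\,du$ is convex, since its derivative $h$ is non-decreasing. The supporting hyperplane at $r^*$ therefore yields $g(v) \geq g(r^*) + h(r^*)(v-r^*)$ for all $v \geq 0$. First-order optimality of $r^*$ as a maximizer of $t(1-F(t))$ gives $h(r^*) = 1/r^*$, and combining these gives the global tail bound $1-F(v) \leq e\cdot(1-F(r^*))\cdot e^{-v/r^*}$ for every $v \geq 0$. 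Integrating, $V(0) = \int_0^\infty (1-F(v))\,dv \leq e\,r^*(1-F(r^*)) = e\,R(r^*)$.

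The main obstacle is establishing this MHR exponential tail bound $1-F(v) \leq e\,(1-F(r^*))\,e^{-v/r^*}$ via convexity of $-\log(1-F)$ and the identity $r^* h(r^*) = 1$; once it is in hand, both cases reduce to routine integration. The factor of $e$ is tight in general, approached in the limit by (truncated) exponential distributions, which explains why no better constant is available even though Case~1 gave the tighter bound $V(t) \leq 2 R(t)$.
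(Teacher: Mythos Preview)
Your proof is correct. Note, however, that the paper does not actually prove this lemma: it is imported verbatim from Yan's thesis (the citation \cite{QiqiThesis}) and used as a black box in the proof of Proposition~\ref{prop:MHR}. So there is no ``paper's own proof'' to compare against; what you have supplied is a self-contained argument that the paper chose to outsource.

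For the record, your argument is sound in both cases. In the case $t \geq r^*$, the chain $h(t) \geq 1/t$ (from $\phi(t) \geq 0$) together with the MHR tail bound $1-F(v) \leq (1-F(t))e^{-h(t)(v-t)}$ cleanly yields $V(t) \leq 2R(t)$. In the case $t < r^*$, the key step---convexity of $g(v) = -\log(1-F(v))$ combined with the first-order condition $r^* h(r^*) = 1$ to obtain the global tail bound $1-F(v) \leq e\,(1-F(r^*))\,e^{-v/r^*}$---is exactly the standard MHR machinery, and the integration to $V(0) \leq e\,R(r^*)$ is immediate. Your tightness remark about the exponential distribution is also correct. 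The only minor stylistic point is that in Case~1 you could have stopped at $V(t) \leq 2R(t)$ and noted $2 \leq e$ without further comment, but this is already what you do.
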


The proof of Proposition~\ref{prop:MHR} parallels that of Theorem 4.9 from~\cite{QiqiThesis}, but replaces VCG with an arbitrary truthful mechanism. We again note that it is observed in~\cite{DhangwatnotaiRY10} that their proof for VCG applies to any approximation algorithm, but as their setting and claim is slightly different, we repeat it here for clarity.

\begin{prevproof}{Proposition}{prop:MHR}
Observe first that if we prove the claim for deterministic mechanisms, then the claim immediately follows for universally truthful mechanisms as well. So we can fix bidder $i$ and $\vec{v}_{-i}$ for the remaining bids and look at the conditional expected revenue from bidder $i$ in this case. For deterministic mechanisms $\bM$, there is some threshold $t$ such that bidder $i$ wins the item if and only if his value is above $t$. So the conditional contribution to the expected welfare of $\bM$ is $V(t)$, and the conditional contribution to the expected revenue of the lazy combination of $\bM$ with Myerson reserves is $R(\max\{t,r^*_i\})$. By Lemma~\ref{lem:qiqi}, this is at least $\frac{1}{e}V(t)$. So in all cases, the conditional contribution to the expected revenue of the lazy combination of $\bM$ with Myerson reserves is at least a $\frac{1}{e}$ fraction of the conditional contribution to the expected welfare of $\bM$, and therefore the expected revenue of $\bM$ combined lazily with Myerson reserves is at least a $\frac{1}{e}$ fraction of the expected welfare of $\bM$. As the optimal expected welfare upper bounds the optimal expected revenue, this completes the proof.
\end{prevproof}

To prove Theorem~\ref{thm:regular} for the lazy combination with Myerson reserves, we need a technical lemma regarding properties of comparison-based algorithms. Lemma~\ref{lem:comparison} below says that in order for a comparison-based mechanism to achieve good welfare, it must accept a good fraction of the highest bidders in expectation (where ``good fraction'' means relative to the best possible). 

\begin{lemma}\label{lem:comparison}
Let $\bM$ be any comparison-based mechanism for feasibility constraints $\cJ$ whose expected welfare competitive ratio is $\alpha$. Fix an ordering of bidders $x_1,\ldots,x_n$ and relative ordering of values $v_1 > \ldots > v_n$ (but not the values themselves). Let also $J(i) = \max_{S \in \cJ} \{|S \cap \{1,\ldots,i\}|\}$, and $q_j$ denote the probability that $\bM$ selects $x_j$. Then for all $i$, we have:

$$\sum_{j \leq i} q_j \geq \alpha J(i)$$

\end{lemma}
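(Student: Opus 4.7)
The plan is to exploit the comparison-based property of $\bM$ to fix the selection probabilities $q_j$ across a whole family of value profiles consistent with the given ranking, and then let the profile degenerate to a ``thresholded'' one on which both $\OPT$ and the mechanism's welfare become transparent. Because $\bM$ is comparison-based, for every value profile whose bid ranking matches the fixed permutation $v_1 > v_2 > \cdots > v_n$, the selection probabilities $q_1,\ldots,q_n$ are identical: they depend only on the relative order of the inputs, not on the magnitudes. Hence we may freely pick the numerical magnitudes of the $v_j$ without changing any $q_j$.

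For each $\delta \in (0,1/n)$, I would consider the profile
\[
v^{(\delta)}_j \;=\; \begin{cases} 1 + (i-j)\,\delta, & j \le i, \\ \delta \cdot 2^{-(j-i)}, & j > i, \end{cases}
\]
which is strictly decreasing (so it agrees with the fixed ordering) and satisfies $v^{(\delta)}_j \to 1$ for $j \le i$ and $v^{(\delta)}_j \to 0$ for $j > i$ as $\delta \to 0^+$. Applying the competitive-ratio assumption to the point-mass distribution at $v^{(\delta)}$ gives
\[
\sum_{j=1}^{n} q_j\, v^{(\delta)}_j \;\ge\; \alpha\cdot \OPT\!\bigl(v^{(\delta)}\bigr).
\]
As $\delta \to 0^+$, the left-hand side tends to $\sum_{j \le i} q_j$. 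For the right-hand side, for each fixed feasible set $S \in \cJ$ the quantity $\sum_{j \in S} v^{(\delta)}_j$ is continuous in $\delta$ and converges to $|S \cap \{1,\ldots,i\}|$; since $\cJ$ is a finite collection (because $\cU$ is finite), $\OPT(v^{(\delta)})$ is a maximum of finitely many continuous functions, hence continuous, and converges to $\max_{S \in \cJ} |S \cap \{1,\ldots,i\}| = J(i)$. Passing to the limit yields the desired bound $\sum_{j \le i} q_j \ge \alpha J(i)$.

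The only subtlety I foresee is reconciling the ``expected welfare'' formulation of the competitive ratio (which is stated against an input distribution) with a claim about a fixed ordering: this is handled by instantiating the distribution as a point mass, which gives the inequality pointwise on each $v^{(\delta)}$ and so survives the limit. Apart from that, the comparison-based property does essentially all the work by collapsing the dependence of the $q_j$ on the magnitudes, so the argument reduces to a one-line limit computation.
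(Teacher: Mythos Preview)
Your proposal is correct and follows essentially the same approach as the paper: both arguments exploit that the $q_j$ are invariant under any value profile consistent with the fixed ranking, and then instantiate a profile concentrated on the top $i$ bidders so that the welfare guarantee collapses to $\sum_{j\le i} q_j \ge \alpha J(i)$. The paper does this more bluntly, simply setting $v_j = 1$ for $j\le i$ and $v_j = 0$ for $j>i$ and reading off the contradiction; your limiting argument via strictly decreasing $v^{(\delta)}$ is the same idea but more careful about preserving the strict order $v_1 > \cdots > v_n$ that the comparison-based hypothesis is stated for.
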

\begin{proof}
Observe first that $q_j$ is well-defined: As $\bM$ is a comparison-based mechanism, once we fix the bidders and their relative ordering of values, the behavior of the mechanism is also fixed, independent of what the actual values are. So assume for contradiction that the lemma is false, and let $i$ be an index for which $\sum_{j \leq i} q_j < \alpha J(i)$. Then set $v_j = 1$ for all $j \leq i$ and $v_k = 0$ for all $k > i$. Then $\bM$ obtains expected welfare $\sum_{j \leq i} q_j < \alpha J(i)$, and the optimal mechanism obtains expected welfare $J(i)$. So $\bM$ does not have expected welfare competitive ratio $\alpha$.
\end{proof}

We now give the proof of theorem \ref{thm:regular}
\begin{theorem*}[Theorem \ref{thm:regular}]
Let $\cJ$ be any downwards-closed set system, and let each $\cD_i$ be identical and regular. Let also $\bM$ be any single-dimensional comparison-based mechanism whose expected welfare competitive ratio is $\alpha$. Then the mechanism that combines (either eagerly or lazily) $\bM$ with monopoly reserves has expected revenue competitive ratio $\alpha$. 
\end{theorem*}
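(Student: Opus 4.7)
The key idea is to combine Myerson's revenue-equals-virtual-welfare identity with an Abel summation (summation by parts) that exploits the fact that, for i.i.d.\ regular distributions, the virtual valuation $\phi$ is a \emph{common}, monotone function of value. This will let me turn the welfare-style guarantee of Lemma~\ref{lem:comparison} into a pointwise virtual-welfare comparison between $\bM$ (with lazy monopoly reserves) and the revenue-optimal mechanism, which then yields the result in expectation.

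\textbf{Setup and main computation.} Fix any realization of the bids, let $\sigma$ be the permutation ordering the bidders by decreasing value, set $w_j = v_{\sigma(j)}$, and let $\psi_j := \phi(w_j)^+ = \phi(w_j)\cdot\mathbb{1}[w_j\ge r^*]$. Regularity and i.i.d.\ make $\psi_1\ge \psi_2\ge \cdots \ge \psi_n\ge 0$. Because $\bM$ is comparison-based, its per-bidder selection probabilities conditional on $\sigma$ are $q_j$'s satisfying Lemma~\ref{lem:comparison}. The revenue of the lazy combination of $\bM$ with monopoly reserves conditional on the values is, by Myerson, exactly the expected virtual welfare $\sum_j q_j \phi(w_j)\mathbb{1}[w_j\ge r^*]=\sum_j q_j \psi_j$, where I use that $\phi(v)\ge 0\iff v\ge r^*$ since distributions are identical and regular. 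Meanwhile the optimal revenue conditional on the values is $OPT_\phi=\max_{S\in\cJ}\sum_{i\in S}\phi(v_i)$; downward-closedness of $\cJ$ lets me restrict the max to sets contained in $\{i:\phi(v_i)\ge 0\}$, so $OPT_\phi=\max_S\sum_j \mathbb{1}[\sigma(j)\in S]\psi_j$.

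\textbf{Abel summation.} Writing $\psi_j = \sum_{\ell\ge j}(\psi_\ell - \psi_{\ell+1})$ with $\psi_{n+1}=0$, and noting each difference is nonnegative, I push the max past the sum to get
\begin{equation*}
OPT_\phi \;\le\; \sum_\ell (\psi_\ell-\psi_{\ell+1})\,J(\ell),
\end{equation*}
where $J(\ell)=\max_{S\in\cJ}|S\cap\{\sigma(1),\dots,\sigma(\ell)\}|$. Applying the same rearrangement to $\bM$'s conditional revenue, then invoking Lemma~\ref{lem:comparison}, gives
\begin{equation*}
\sum_j q_j\psi_j \;=\; \sum_\ell(\psi_\ell-\psi_{\ell+1})\sum_{j\le\ell}q_j \;\ge\; \alpha \sum_\ell(\psi_\ell-\psi_{\ell+1})J(\ell) \;\ge\; \alpha\, OPT_\phi.
\end{equation*}
This inequality is pointwise in the value vector, so taking expectations and applying Myerson on both sides gives the claim for lazy reserves.

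\textbf{Eager case and the main obstacle.} For eager reserves, I would repeat the argument on the ``filtered'' environment: let $F=\{i:v_i\ge r^*\}$ and apply Lemma~\ref{lem:comparison} to $\bM$ run on $\cJ\vert_F$, using $J'(\ell)=\max_{S\in\cJ,\,S\subseteq F}|S\cap\{\sigma(1),\dots,\sigma(\ell)\}|$; since $OPT_\phi$ is also achieved by some $S\subseteq F$, the same telescoping bound goes through. The only subtlety I expect is verifying that the eager combination is monotone-hence-truthful so that Myerson applies, but for comparison-based $\bM$ this follows because a bidder crossing the reserve threshold can only weakly improve its rank among the filtered bidders, hence weakly increase its allocation probability. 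The core conceptual step --- that i.i.d.\ regularity lets a welfare guarantee expressed via $J(\ell)$ be transported to a virtual-welfare guarantee via Abel summation --- is the whole proof, and once it is in hand there is no remaining obstacle.
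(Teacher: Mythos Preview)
Your proof is correct and follows essentially the same approach as the paper: both arguments convert revenue to virtual welfare via Myerson, use that i.i.d.\ regularity makes the value ordering coincide with the virtual-value ordering (so a comparison-based $\bM$ on values is the same as $\bM$ on virtual values), and then apply an Abel summation together with Lemma~\ref{lem:comparison} to compare the partial sums $\sum_{j\le \ell} q_j \ge \alpha J(\ell)$ against the upper bound $J(\ell)$ for the Myerson optimum. The paper's presentation differs only cosmetically---it writes the summation by parts with explicit prefix sums $Q_i,P_i$ up to the last nonnegative-virtual-value index $m$, whereas you package the truncation into $\psi_j=\phi(w_j)^+$---and its eager case is argued more tersely (directly invoking $\bM$'s welfare guarantee on the filtered instance) rather than redoing the Abel summation, but the content is the same.
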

\begin{proof}
We first recall Myerson's lemma that expected revenue (for all truthful mechanisms) is exactly expected virtual welfare~\cite{Myerson81}. We now make the same observation as~\cite{ChawlaHMS10}: if we run a good welfare mechanism on the \emph{virtual values} instead of the values, then the welfare guarantee of the original mechanism immediately gives us a virtual welfare (i.e. revenue) guarantee. As the original mechanism was truthful, its allocation rule must have been monotone, and therefore whenever the virtual valuation function, $\phi_i$, is monotone, the resulting mechanism is also truthful. $\phi_i$ is monotone exactly when $\cD_i$ is regular. 

So the mechanism we would like to implement is $\bM$ on the virtual values (which we will denote by $\phi(\bM)$), but we want to implement $\phi(\bM)$ without knowing the virtual values. Because each $\cD_i$ is identical and regular, whenever $\phi(\bM)$ wants to compare two virtual values, we can just compare the values instead. This is because the comparison will yield the same result. So all that's left is to handle negative virtual values. 

We could just remove all negative virtual values first, and then run $\phi(\bM)$ on the remaining bidders. This is exactly the same as removing all bidders who don't meet their Myerson reserve first, and running $\bM$ on the remaining bidders by the observation in the previous paragraph. As $\bM$ obtains expected welfare competitive ratio $\alpha$ when all values are positive, we get that $\phi(\bM)$ obtains expected virtual welfare (revenue) competitive ratio $\alpha$ when run only on bidders with positive virtual values. Therefore, the eager combination of $\bM$ with Myerson reserves gives a revenue competitive ratio of $\alpha$.

We also could just run $\phi(\bM)$ first, and remove the negative virtual values after. However, it's not obvious that this mechanism succeeds, as we are no longer directly running $\phi(\bM)$ on bidders with positive virtual value. Nevertheless, we can use Lemma~\ref{lem:comparison} to argue that we still get good revenue with lazy removal of negative virtual values. For any fixed bids, relabel the bidders so that $v_1 > \ldots > v_n$. Let $m$ denote the largest index such that $v_m \geq 0$, and $q_j$ denote the probability that $\bM$ selects bidder $x_j$, and $Q_i = \sum_{j=1}^i q_j$. Then we can write the expected virtual welfare of $\phi(\bM)$ with lazy removal of negative virtual values as:

$$\sum_{j=1}^m q_j\cdot \phi(v_j) =  Q_m \cdot \phi(v_m) + $$
$$\sum_{i=1}^{m-1} Q_i \cdot (\phi(v_i) - \phi(v_{i+1}))$$

We can also let $p_j = 1$ if Myerson's auction selects $x_j$ and $0$ otherwise, and $P_i = \sum_{j=1}^i p_j$. Then the expected revenue of Myerson's auction is just:

$$P_m \cdot \phi(v_m) + \sum_{i=1}^{m-1} P_i \cdot (\phi(v_i) - \phi(v_{i+1}))$$

Again let $J(i)$ denote the maximum size of a feasible set in $\cJ$ using only bidders in $\{x_1,\ldots,x_i\}$. Then we clearly have $P_i \leq J(i)$. By Lemma~\ref{lem:comparison}, we also have $Q_i \geq \alpha \cdot J(i)$. Putting this together with the above work we get:

$$Q_m \cdot \phi(v_m) + \sum_{i=1}^{m-1} Q_i \cdot (\phi(v_i) - \phi(v_{i+1}))$$
$$ \geq \alpha \cdot J(m) \cdot \phi(v_m) + \sum_{i=1}^{m-1} \alpha \cdot J(i) \cdot (\phi(v_i) - \phi(v_{i+1}))$$

and 
$$P_m \cdot \phi(v_m) + \sum_{i=1}^{m-1} P_i \cdot (\phi(v_i) - \phi(v_{i+1}))$$
$$\leq J(m) \cdot \phi(v_m) + \sum_{i=1}^{m-1} J(i) \cdot (\phi(v_i) - \phi(v_{i+1}))$$

which exactly says that the expected virtual welfare competitive ratio of $\phi(\bM)$ with lazy removal of negative virtual values is $\alpha$. Again, we observe that this is exactly the same mechanism as $\bM$ combined lazily with Myerson reserves and complete the proof of the Theorem.

\end{proof}

\begin{theorem}\label{thm:IIDBMUMD}
Let $\cJ$ be a downwards-closed set system and let each $\cD_i$ be identical and regular. Then there exist truthful OPMs with the following guarantees:
\begin{enumerate}
\item When $\cJ$ is a $k$-uniform matroid, we have a revenue competitive ratio of $\frac{1}{2} - O(\frac{1}{\sqrt{k}})$ and a welfare competitive ratio of $\frac{1}{2} - O(\frac{1}{\sqrt{k}})$ using two samples from $\cD$.\footnote{Alternatively, instead of using the rehearsal algorithm, we can use a simpler single-sample algorithm which guarantees a competitive ratio of $\frac{1}{4}$ for the prophet problem. Recall that our motivation for the rehearsal algorithm was purely algorithmic: we want to obtain a single-sample prophet inequality whose competitive ratio of $1 - O(\frac{1}{\sqrt{k}})$ is asymptotically optimal in $k$. While this motivation still holds from an algorithmic point of view, its not very strong in a mechanism design setting since our use of reserves reduces the competitive ratio by a factor of at least $\frac{1}{2}$.}
\item When $\cJ$ is a graphic matroid we have a revenue competitive ratio of $\frac{1}{16}$, and a welfare competitive ratio of $\frac{1}{16}$ using one sample from $\cD$.
\item When $\cJ$ is a transversal matroid, we have a revenue competitive ratio of $\frac{1}{32}$ and a welfare competitive ratio of $\frac{1}{32}$ using one sample from $\cD$.
\item When $\cJ$ is a laminar matroid, we have a revenue competitive ratio of $\frac{1}{24\sqrt{3}}$ and a welfare competitive ratio of $\frac{1}{24\sqrt{3}}$ using one sample from $\cD$.
\item When $\cJ$ is a general matroid , we have a revenue competitive ratio of $\frac{1-\frac{1}{e}}{40}$ and a welfare competitive ratio of $\frac{1 - \frac{1}{e}}{40}$ using one sample from $\cD.$
\item When $\cJ$ is a degree $d$-bipartite matching, we have a revenue competitive ratio of  $\frac{1}{27}$  and a $\frac{1}{27}$ welfare competitive ratio using $d^2 +1$ samples from $\cD$. 
\end{enumerate}
\end{theorem}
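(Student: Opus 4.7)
The plan is to combine each of our limited-information prophet inequalities with lazy sample reserves and invoke Corollary~\ref{cor:iidrevenue}. Concretely, for each setting I would take the corresponding algorithm: the Rehearsal Algorithm (Theorem~\ref{thm:prophet}) for $k$-uniform, the black-box constructions of Corollary~\ref{cor:existing} for graphic, transversal, laminar, and general matroids, and $\cP_{Matching}$ (Theorem~\ref{thm:matching}) for degree-$d$ bipartite matching. Each of these is a posted-price mechanism: when bidder $i$ arrives with value $v_i$, she is accepted iff her value exceeds a threshold that is a function only of the samples and the set of previously accepted bidders. Since the underlying prophet inequality holds against adversarial arrival orders, the resulting mechanism is order-oblivious; monotonicity of the allocation rule (and hence truthfulness of the induced pricing) is immediate from the threshold structure. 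Lazy sample reserves preserve all of these properties.

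The one setting that requires extra care is bipartite matching, because $\cP_{Matching}$ sets $p_e = T_e(s^{Index(e)}_{-e})$ using the optimal max-weight matching, and $T_e$ is not determined by a comparison subroutine. To satisfy the hypothesis of Theorem~\ref{thm:regular}, I would replace the max-weight matching subroutine with the greedy matching algorithm, which \emph{is} comparison-based (it sorts edges by weight and adds greedily). Since greedy is a $1/2$-approximation to max-weight matching, rerunning the analysis of Theorem~\ref{thm:matching} with greedy-based thresholds costs a factor of $2$ in the competitive ratio, giving $1/13.5$ welfare guarantee in place of $1/6.75$. For all other settings, comparison-basedness was already noted in Section~\ref{sec:welfare}.

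Applying Corollary~\ref{cor:iidrevenue} to each of these comparison-based mechanisms loses another factor of $2$ in revenue and welfare, yielding: $k$-uniform $\Rightarrow \frac{1}{2}-O(1/\sqrt{k})$; graphic $\Rightarrow 1/16$; transversal $\Rightarrow 1/32$; laminar $\Rightarrow 1/(24\sqrt{3})$; general $\Rightarrow (1-1/e)/40$; matching $\Rightarrow 1/27$. For the sample count, I would exploit the observation made in Section~\ref{sec:existing}: in $\cP_{\cS}$, the samples $s_i$ corresponding to indices $i \in \{j_{k+1},\ldots,j_n\}$ are never read by the algorithm, and the eventual winners all lie in this set, so these same samples serve as the reserve draws $r_i \leftarrow \cD_i$ at no extra cost, giving a single-sample mechanism for graphic, transversal, laminar, and general matroids. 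The Rehearsal Algorithm reads all of $s$ to form its thresholds, so one additional sample from $\cD$ is needed for reserves, yielding $2$ samples for $k$-uniform; analogously, $\cP_{Matching}$ uses all $d^2$ sample vectors to set edge prices, and one further sample suffices for reserves, giving $d^2+1$. The main obstacle is the greedy substitution in the matching case: I would need to re-verify that the proof of Theorem~\ref{thm:matching}, which invokes independence of $p_e$ and $p_{e'}$ whenever $e,e'$ share a vertex and uses the matching structure of the edges picked by the prophet, goes through when thresholds are defined via greedy rather than optimal matching. The key point is that greedy is still monotone in $v_e$ with all other weights fixed (so the threshold $T_e^{\mathrm{greedy}}$ is well-defined), the $d^2$-coloring of edges still ensures $p_e \perp p_{e'}$ for incident $e,e'$, and the only loss is the factor of $2$ between greedy and optimal offline matching, which I absorb into the final competitive ratio.
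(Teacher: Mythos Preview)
Your proposal is correct and matches the paper's own approach essentially line for line: the paper states that Theorem~\ref{thm:IIDBMUMD} is a direct application of Corollary~\ref{cor:iidrevenue} to each of the prophet inequalities, after noting (in Section~\ref{sec:welfare}) that all the algorithms except $\cP_{Matching}$ are comparison-based and that substituting greedy for optimal matching fixes the latter at a factor-$2$ loss. Your accounting of the sample counts (reusing unused samples as reserves for the black-box constructions, one extra sample for Rehearsal and for $\cP_{Matching}$) is exactly what the paper does as well.
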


Our results for MHR distributions are very similar, with the exception that for the MHR case, our $\cP_{Matching}$ algorithm is the same one as the one described in section \ref{sec:matching}.

\begin{theorem}\label{thm:MHRBMUMD}
Let $\cJ$ be a downwards-closed set system and let each $\cD_i$ be MHR (not necessarily identical). Then there exist truthful OPMs with the following guarantees:
\begin{enumerate}
\item When $\cJ$ is a $k$-uniform matroid, we have a revenue competitive ratio of $\frac{1}{2e} - O(\frac{1}{\sqrt{k}})$ and a welfare competitive ratio of $\frac{1}{2} - O(\frac{1}{\sqrt{k}})$ using two samples from $\cD$
\item When $\cJ$ is a graphic matroid we have a revenue competitive ratio of $\frac{1}{16e}$, and a welfare competitive ratio of $\frac{1}{16}$ using one sample from $\cD$.
\item When $\cJ$ is a transversal matroid, we have a revenue competitive ratio of $\frac{1}{32e}$ and a welfare competitive ratio of $\frac{1}{32}$ using one sample from $\cD$.
\item When $\cJ$ is a laminar matroid, we have a revenue competitive ratio of $\frac{1}{24e\sqrt{3}}$ and a welfare competitive ratio of $\frac{1}{24\sqrt{3}}$ using one sample from $\cD$.
\item When $\cJ$ is a degree $d$-bipartite matching, we have a revenue competitive ratio of  $\frac{1}{13.5e}$  and a $\frac{1}{13.5}$ welfare competitive ratio using $d^2+1$ samples. 
\end{enumerate}
\end{theorem}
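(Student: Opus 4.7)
The plan is to obtain Theorem~\ref{thm:MHRBMUMD} as a direct corollary of the limited-information prophet inequalities proved earlier, combined with the three-step reduction pipeline already assembled in the paper. For each feasibility constraint $\cJ$ listed in the theorem, I would proceed by first converting the corresponding prophet inequality into a welfare-approximating OPM for the single-dimensional $\icopies$ instance, then adding lazy sample reserves to convert welfare guarantees into revenue guarantees under MHR, and finally lifting the resulting OPM to the multi-dimensional BMUMD via Lemma~\ref{thm:CHMS}.

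\textbf{Stage 1 (prophet inequality to OPM on $\icopies$).} I would invoke the appropriate single- or constant-sample prophet inequality: the rehearsal algorithm of Theorem~\ref{thm:prophet} for $k$-uniform matroids; the black-box reductions of Corollary~\ref{cor:existing} for graphic, transversal, and laminar matroids; and $\cP_{Matching}$ of Theorem~\ref{thm:matching} for constant-degree bipartite matchings. In every case, item $i$ is accepted only when $v_i$ exceeds a sample-determined threshold and the resulting partial selection is feasible, so the induced allocation rule is monotone in $v_i$. By the Archer--Tardos payment identity this produces a truthful order-oblivious posted-price mechanism on $\icopies$ whose welfare competitive ratio equals the prophet inequality's competitive ratio $\alpha \in \{1-O(1/\sqrt{k}),\,1/8,\,1/16,\,1/(12\sqrt{3}),\,1/6.75\}$.

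\textbf{Stage 2 (reserves and sample bookkeeping).} Combining the OPM lazily with a vector of single-sample reserves and invoking Corollary~\ref{cor:mhrrevenue}, the MHR hypothesis converts an $\alpha$-welfare-competitive mechanism into one that is simultaneously $\alpha/(2e)$-revenue-competitive and $\alpha/2$-welfare-competitive, giving exactly the ratios claimed in items (1)--(5). The only delicate point is counting samples. For the graphic, transversal, and laminar matroid reductions of Section~\ref{sec:existing}, the algorithm never reads the sample coordinates belonging to items it eventually accepts, so those unused coordinates double as reserves at no additional cost --- one sample vector suffices. The rehearsal algorithm and $\cP_{Matching}$, in contrast, use every coordinate of their samples, so one extra sample is required for the reserves, yielding the counts $2$ and $d^2+1$.

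\textbf{Stage 3 and main obstacle.} Finally, apply Lemma~\ref{thm:CHMS} to promote the OPM for $\icopies$ into a truthful multi-dimensional mechanism for the original BMUMD instance $\cI$, which preserves both the revenue and welfare competitive ratios. The only nontrivial verification, and the main obstacle beyond arithmetic, is checking that each of the five underlying algorithms really does yield an OPM on $\icopies$ --- i.e., is truthful, posted-price, and carries a performance guarantee that holds under adversarial arrival order. Monotonicity is immediate from the threshold form of acceptance in every algorithm; order-obliviousness follows because the prophet-inequality guarantees stated in Sections~\ref{sec:existing}--\ref{sec:matching} are already proved against adversarial orderings. With these verifications in hand, plugging the five values of $\alpha$ and the three sample counts into the pipeline yields exactly items (1)--(5) of the theorem.
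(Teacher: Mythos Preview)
Your proposal is correct and follows exactly the route the paper takes: it explicitly states that Theorem~\ref{thm:MHRBMUMD} is ``a direct application of Corollary~\ref{cor:mhrrevenue},'' i.e., take each prophet inequality as a welfare-approximating posted-price mechanism (your Stage~1, which is Section~\ref{sec:mechanisms}.2 of the paper) and then add lazy sample reserves to get the $\alpha/(2e)$ revenue and $\alpha/2$ welfare ratios (your Stage~2). Your sample-count bookkeeping is also the paper's: the order-oblivious secretary reductions leave the accepted bidders' sample coordinates unused, so reserves are free there, whereas the rehearsal algorithm and $\cP_{Matching}$ need one extra sample for reserves.

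One small framing issue: your Stage~3 is superfluous for this theorem. Theorem~\ref{thm:MHRBMUMD} as stated asks for \emph{OPMs} on the given single-dimensional constraint $\cJ$, not for multi-dimensional BMUMD mechanisms; the lifting via Lemma~\ref{thm:CHMS} is a separate downstream application (and that lemma, as stated, only transfers the revenue guarantee, not the welfare one). Stages~1 and~2 already deliver the OPMs the theorem claims, so you can simply drop Stage~3.
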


\section{The Free-Order Model}\label{app:freeorder}
In this section, we provide an improved and simplified analysis of the secretary algorithm in the free-order model proposed by Jaillet, Soto, and Zenklusen~\cite{JailletSZ12}. It is easy to see that their algorithm satisfies a modified definition of ``order-oblivious'' from Section~\ref{sec:existing} appropriate for the free-order model (the algorithm can choose the order of $P$ instead of having them come in adversarial order), meaning that their algorithm implies a single-sample prophet inequality for the free-order model as well. Let's first recall their algorithm:

\begin{enumerate}
\item Initialize the set of accepted elements, $A$, to $\emptyset$.
\item Sample $k = Binomial(n,1/2)$ elements uniformly at random from $\cU$ and call these the sample set, $S$. Call the remaining elements $P$.
\item Find the max-weight basis of $S$ under $\cJ$. Label these elements in decreasing order of weight, $X_1,\ldots,X_k$.
\item Set $i = 1$.
\item Draw one at a time in any order each element $y \in P \cap (\matspan(\{X_1,\ldots,X_i\})-\matspan(\{X_1,\ldots,X_{i-1}\}))$. Add $y$ to $A$ iff $A \cup \{y\} \in \cJ$ and $v_y > v_{X_i}$.
\item Increment $i$ by one and return to step 5. If $i = k$, and there are any elements not spanned by $\{X_1,\ldots,X_m\}$, process them as in step 5.
\end{enumerate}

We first recall a lemma from~\cite{JailletSZ12}:

\begin{lemma}\label{lem:JSZ}(\cite{JailletSZ12})
If $y$ is in the max-weight basis of $\cU$ under $\cJ$, and $y \in P$, then we will always have $v_y > v_{X_i}$ when it is processed in step 5. The only way the algorithm will not accept $y$ is if $A$ already spans $y$.
\end{lemma}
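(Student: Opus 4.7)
The plan is to deduce both clauses via the greedy characterization of maximum-weight bases. I would use the fact that, for any ground set $T \subseteq \cU$ and any element $z \in T$, the greedy algorithm produces $MAX(v, T)$, and in particular $z \in MAX(v, T)$ if and only if $z \notin \matspan(\{w \in T : v_w > v_z\})$. This holds because when greedy considers $z$, the set of previously added elements is exactly a max-weight basis of $\{w \in T : v_w > v_z\}$ and hence has the same span as that full set.

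The first substantive step is a monotonicity observation: if $y \in MAX(v, \cU)$ and $y \in T \subseteq \cU$, then $y \in MAX(v, T)$. This is immediate from the characterization above, since $\{w \in T : v_w > v_y\} \subseteq \{w \in \cU : v_w > v_y\}$ and spans are monotone under inclusion. Applying this with $T = S \cup \{y\}$, the hypothesis $y \in MAX(v, \cU)$ yields $y \in MAX(v, S \cup \{y\})$.

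Next I would explicitly simulate greedy on $S \cup \{y\}$. Since $\{X_1, \ldots, X_k\}$ is the max-weight basis of $S$ sorted by weight, greedy on $S$ alone adds the $X_j$'s in order. Now let $l$ be the position at which $y$ is inserted into the weight ordering, so greedy on $S \cup \{y\}$ processes $X_1, \ldots, X_{l-1}, y, X_l, \ldots, X_k$ with $v_{X_{l-1}} \geq v_y > v_{X_l}$. For the greedy output to contain $y$ we need $y \notin \matspan(\{X_1, \ldots, X_{l-1}\})$. Because the algorithm defines $i$ by $y \in \matspan(\{X_1, \ldots, X_i\}) \setminus \matspan(\{X_1, \ldots, X_{i-1}\})$, this forces $l - 1 < i$, i.e., $l \leq i$. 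Monotonicity of the sorted weights then gives $v_y > v_{X_l} \geq v_{X_i}$, which is the first claim.

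For the second claim, the acceptance rule in step 5 accepts $y$ exactly when $A \cup \{y\} \in \cJ$ and $v_y > v_{X_i}$; having established $v_y > v_{X_i}$, the only possible obstruction to acceptance is $A \cup \{y\} \notin \cJ$, which, since $A$ is always maintained to be independent, is exactly the statement that $A$ spans $y$. The main subtlety I would need to handle is ties between weights, as the lemma asserts a strict inequality: I would either invoke a generic-weights assumption (e.g.\ continuous distributions, so that ties occur with probability zero) or fix a consistent lexicographic tie-breaker shared by the greedy simulation and the algorithm, so that the positional argument $l \leq i \Rightarrow v_y > v_{X_i}$ goes through unchanged.
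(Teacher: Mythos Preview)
Your proof is correct and takes essentially the same approach as the paper: both hinge on the greedy characterization of the max-weight basis to conclude that $y$ being spanned by $\{X_1,\ldots,X_i\}$ forces $v_y > v_{X_i}$. The paper's version is just a one-line contrapositive (if $v_y < v_{X_i}$ then $y$ is spanned by higher-weight elements of $\cU$, so greedy on $\cU$ would reject it), whereas your detour through $MAX(v, S \cup \{y\})$ and the insertion position $l$ is correct but unnecessary.
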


\begin{proof}
By definition, we know that $y \in \matspan(\{X_1,\ldots,X_i\})$, and $v_{X_1} > \ldots > v_{X_i}$. So if $v_y < v_{X_i}$, greedy would not select $y$, and $y$ cannot possibly be in the max-weight basis of $\cU$ under $\cJ$.
\end{proof}

\begin{definition}
Let $Z_1,\ldots,Z_{m'}$ list elements of $S$ in decreasing order of weight for any $S\subseteq \cU$. Let $i(y)$ be the minimum $i$ such that $y \in \matspan(\{Z_1,\ldots,Z_i\})$ (if one exists). Then we say the cost of $y$ with respect to $S$ is $v(Z_{i(y)})$ (or $0$ if no $i(y)$ exists). Denote this by $C(y,S)$.
\end{definition}

\begin{lemma}\label{lem:cost}
For all $y \in \cU$, if $y \in P$ and $C(y,S) > C(y,P-\{y\})$, $A$ will \emph{not} span $y$ when it is processed by the algorithm in step 5.
\end{lemma}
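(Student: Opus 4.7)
The plan is to prove the contrapositive: assume $A$ already spans $y$ at the moment $y$ is processed in step~5, and derive $C(y,P-\{y\}) > C(y,S)$. The key step is to identify the abstractly defined cost $C(y,S)$ with the concrete weight threshold that the algorithm uses for $y$.

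First I would establish $C(y,S) = v_{X_{i^*}}$, where $i^*$ is the stage at which the algorithm processes $y$, i.e., the unique index with $y \in \matspan(\{X_1,\ldots,X_{i^*}\}) - \matspan(\{X_1,\ldots,X_{i^*-1}\})$. Listing $S$ in decreasing order of weight as $Z_1,Z_2,\ldots$, the greedy basis $\{X_1,\ldots,X_k\}$ is obtained by discarding exactly those $Z_j$ that are already spanned by $\{Z_1,\ldots,Z_{j-1}\}$, and such a discard does not change the running span. So the prefix spans of the $Z_j$'s coincide with the prefix spans of the $X_i$'s; in particular the minimum $Z$-prefix spanning $y$ ends at some $Z_{i(y)}$ which must equal $X_{i^*}$ (otherwise the running span would not grow at position $i(y)$, contradicting minimality of $i(y)$). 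This yields $C(y,S) = v_{Z_{i(y)}} = v_{X_{i^*}}$.

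Next I would observe that every $z \in A$ present at the time $y$ is processed satisfies $v_z > C(y,S)$. Indeed $z$ was accepted at some earlier or equal stage $i \leq i^*$, where step~5 required $v_z > v_{X_i}$, and $v_{X_i} \geq v_{X_{i^*}} = C(y,S)$ because $X_1,\ldots,X_k$ are listed in decreasing weight order. Hence if $A$ spans $y$, then so does the larger set $T := \{z \in P-\{y\} : v_z > C(y,S)\}$. Listing $P-\{y\}$ in decreasing weight order as $Z'_1,Z'_2,\ldots$, the set $T$ is exactly the prefix $\{Z'_1,\ldots,Z'_{|T|}\}$, so the smallest prefix of $P-\{y\}$ spanning $y$ has length at most $|T|$, and its last element has weight strictly greater than $C(y,S)$. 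This gives $C(y,P-\{y\}) > C(y,S)$, contradicting the hypothesis.

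I expect no serious obstacle. The only delicate point is the first identification $C(y,S) = v_{X_{i^*}}$, since $C(y,S)$ is defined relative to all of $S$ rather than the greedy basis; once this is in place, the remainder is a short monotonicity argument and requires no matroid exchange machinery.
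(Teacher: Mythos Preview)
Your proposal is correct and follows essentially the same contrapositive argument as the paper: identify $C(y,S)$ with the stage-$i^*$ threshold $v_{X_{i^*}}$, note that everything already in $A$ has weight strictly above that threshold, and conclude that the corresponding prefix of $P-\{y\}$ already spans $y$, forcing $C(y,P-\{y\}) \geq C(y,S)$. The only difference is that you spell out why $C(y,S)=v_{X_{i^*}}$ via the ``greedy discards don't change running span'' observation, whereas the paper simply declares this identification ``obvious.''
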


\begin{proof}
First, we observe by the definition of the algorithm that when $y$ is processed, the only elements that could possibly be added to $A$ are of weight at least $v_{X_i}$. So if $y$ is already spanned, it must be spanned by a subset of $P-\{y\}$ whose elements all have weight at least $v_{X_i}$. However, it is obvious that $C(y,S) = v_{X_i}$. It is also obvious that if $y$ is spanned by a subset of $P-\{y\}$ whose elements all have weight at least $v_{X_i}$, that $C(y,P-\{y\})$ is at least $v_{X_i}$. Therefore, if $A$ spans $y$ at the time the algorithm processes $y$, it must be the case that $C(y,P-\{y\}) > C(y,S)$, proving the lemma.
\end{proof}

\begin{theorem}\label{thm:freeorder}
The algorithm of~\cite{JailletSZ12} obtains a competitve ratio of $\frac{1}{4}$ whenever $\cJ$ is a matroid.
\end{theorem}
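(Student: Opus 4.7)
The plan is to show that $\Pr[y\text{ is accepted}] \geq 1/4$ for every $y \in M$ (the max-weight basis of $\cU$); summing $v_y$ over $y \in M$ then yields $\bE[\mathrm{ALG}] \geq \frac{1}{4}\mathrm{OPT}$ by linearity of expectation.

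First I would identify two sufficient conditions for $y \in M$ to be accepted. By Lemma~\ref{lem:JSZ}, if $y \in M \cap P$ then $v_y$ exceeds the relevant threshold when $y$ is processed, so $y$ is accepted whenever the current $A$ fails to span it at that moment. A first sufficient condition, from Lemma~\ref{lem:cost}, is ``$y \in P$ and $C(y,S) > C(y, P-\{y\})$''. A second, which will be crucial for reaching the constant $1/4$, is ``$y \in P$ and $C(y, P-\{y\}) = 0$'': since every element the algorithm accepts before processing $y$ lies in $P-\{y\}$, if $P-\{y\}$ itself does not span $y$ then neither does $A$.

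Next I would exploit the symmetry of the random $1/2$-split. Assume without loss of generality (via an arbitrarily small perturbation of the weights, or a fixed tie-breaker) that all weights are distinct. The event $\{y \in P\}$ has probability $1/2$ and is independent of the restriction of the partition to $\cU-\{y\}$; conditional on $\{y\in P\}$, the pair $(S, P-\{y\})$ is a uniformly random ordered bipartition of $\cU-\{y\}$, so swapping the two sides is a measure-preserving involution. Consequently $\Pr[C(y,S) > C(y, P-\{y\}) \mid y\in P] = \Pr[C(y,S) < C(y, P-\{y\}) \mid y\in P]$. Because the weights are distinct, a tie $C(y,S) = C(y, P-\{y\}) = c > 0$ is impossible: both ``critical'' elements in the two greedy constructions would have to be the unique element of weight $c$, contradicting $S \cap (P-\{y\}) = \emptyset$. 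Hence all ties correspond to $q_0 := \Pr[C(y,S) = C(y,P-\{y\}) = 0 \mid y\in P]$, and the symmetric comparison gives $\Pr[C(y,S) > C(y, P-\{y\}) \mid y\in P] = (1-q_0)/2$.

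Combining the two sufficient conditions, a short inclusion-exclusion shows that the conditional probability of the union $\{C(y,S) > C(y,P-\{y\})\} \cup \{C(y,P-\{y\})=0\}$ equals $(1-q_0)/2 + q_0 = 1/2 + q_0/2 \geq 1/2$. Multiplying by $\Pr[y\in P] = 1/2$ yields $\Pr[y\text{ accepted}] \geq 1/4$, completing the argument. The main obstacle will be handling weight ties correctly: symmetry by itself gives only $(1-\Pr[\text{tie}])/2$, and the role of the second sufficient condition is precisely to absorb the tie case in which both critical costs vanish, which is exactly where naive symmetry loses.
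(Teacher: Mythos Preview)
Your argument is correct and follows essentially the same route as the paper: show that each element $y$ of the max-weight basis is accepted with probability at least $1/4$ by combining Lemmas~\ref{lem:JSZ} and~\ref{lem:cost} with the symmetry of the random bipartition of $\cU-\{y\}$. Your explicit treatment of the tie case $C(y,S)=C(y,P-\{y\})=0$ via the second sufficient condition is in fact more careful than the paper, which simply asserts $\Pr[C(y,S)>C(y,P-\{y\})\mid y\in P]=1/2$ by the swap argument without addressing ties (and ties with both costs zero can occur with positive probability even when all weights are distinct, whenever $P-\{y\}$ fails to span $y$).
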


\begin{proof}
Clearly, for all $y$, $y \in P$ with probability $1/2$. Conditioned on this, it is also clear that $C(y,S) > C(y,P-\{y\})$ with probability $1/2$. This is because whenever we sample $P-\{y\}$ and $S$, they are switched with probability $1/2$ and the costs are flipped as well. By Lemma~\ref{lem:JSZ} and~\ref{lem:cost}, every element in the max-weight basis of $\cU$ under $\cJ$, $y$, is accepted whenever $y \in P$ and $C(y,S) > C(y,P-\{y\})$. As this happens with probability $1/4$, every element of the max-weight basis is accepted with probability $1/4$, so the algorithm obtains a competitive ratio of $1/4$. 
\end{proof}

\section{Analysis of the Rehearsal Algorithm}\label{app:prophet}
In this appendix we prove Theorem \ref{thm:prophet}
\begin{theorem*}[Theorem 2]
Let $\cI = (\cU,\cJ)$ be a $k$-uniform matroid. The rehearsal algorithm is a single-sample algorithm for the prophet problem with a competitive ratio of $1 - O(\frac{1}{\sqrt{k}})$. 
\end{theorem*}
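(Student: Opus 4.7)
The plan is to prove the stronger per-element statement that, for each $i \leq k$, the $i$-th largest value $v^{(i)}$ is accepted by the Rehearsal Algorithm with probability $1 - O(1/\sqrt{k})$. From this, linearity of expectation immediately yields
$$\bE[\text{ALG}] \;\geq\; \sum_{i=1}^k \bE\!\left[v^{(i)} \cdot \mathbf{1}[v^{(i)}\text{ accepted}]\right] \;\geq\; \left(1 - O(1/\sqrt{k})\right) \bE[\text{OPT}],$$
which is the claimed competitive ratio.

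The central object of the analysis is a random walk built from the combined list. Merge the $n$ samples and $n$ values into a single list sorted in decreasing order, and set $W_t = U_t - V_t$, where $U_t$ and $V_t$ denote the numbers of samples and values among the top $t$ elements. This walk moves $+1$ at each sample step and $-1$ at each value step. Because each pair $(s_i,v_i)$ is drawn from the same distribution $\cD_i$, the walk is \emph{correlated}: the positions of $s_i$ and $v_i$ in the list are coupled through $\cD_i$, rather than independent as in a simple symmetric random walk. This is the ``correlated random walk'' that the introduction advertises.

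Next, I would rephrase acceptance in terms of this walk. For the $i$-th largest value $v^{(i)}$, its slot-threshold index equals $\alpha_i = 1 + (\text{number of samples above } v^{(i)})$, which reads directly off $W$ at the position of $v^{(i)}$ in the combined list. The event ``$v^{(i)}$ is accepted'' decomposes into (a) $v^{(i)}$ exceeds the smallest useful threshold $s^{(k-2\sqrt{k})}$, and (b) some slot in $\{\alpha_i,\dots,k\}$ is free when $v^{(i)}$ is presented. Both conditions are implied by a single geometric condition: that $W_t$ does not wander more than $O(\sqrt{k})$ away from its typical behavior in the relevant prefix of the list. The $2\sqrt{k}$ repetition of the threshold $s^{(k-2\sqrt{k})}$ in the algorithm provides exactly the slack needed to absorb such fluctuations, which is the role of Lemmas~\ref{lem:both} and~\ref{lem:delete}.

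The last step is to show that the correlated walk stays within this $O(\sqrt{k})$ band except with probability $O(1/\sqrt{k})$. In the fully symmetric (i.i.d.) case, conditional on the underlying $2n$ values, the assignment of sample/value labels along the combined list is a uniformly random sequence of $n$ S's and $n$ V's, so $W$ is a simple random-walk bridge; the reflection principle (or a direct hypergeometric computation) gives the $O(1/\sqrt{k})$ tail on fluctuations of size $\geq 2\sqrt{k}$. In the general non-i.i.d.\ setting, one exploits the within-pair exchangeability of each $(s_i,v_i)$---swapping $s_i$ and $v_i$ is measure-preserving---to couple the correlated walk to a symmetric one and recover the same tail bound. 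The main obstacle I expect is precisely this last point: the walk's increments are not independent, so standard concentration arguments fail off the shelf; the novelty of the proof is isolating enough symmetry from the paired structure of $(s_i,v_i)$ to preserve the $O(1/\sqrt{k})$ rate even under arbitrary and distinct $\cD_i$. Once that correlated-walk deviation estimate is in place, the per-value acceptance bound of $1 - O(1/\sqrt{k})$ is immediate, and summing over $i$ completes the proof.
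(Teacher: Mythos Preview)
Your outline has the right high-level ingredients---the correlated random walk on the merged sample/value list, pairwise exchangeability, and the $2\sqrt{k}$ slack built into the thresholds---but there are two genuine gaps.

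\textbf{The per-value claim is false.} You propose to show that every $v^{(i)}$ with $i\le k$ is accepted with probability $1-O(1/\sqrt{k})$. This cannot hold for $i$ near $k$. Every threshold in the algorithm is at least $s^{(k-2\sqrt{k})}$, so $v^{(k)}$ is accepted only if $v^{(k)}>s^{(k-2\sqrt{k})}$. But in the merged list the $k$-th value sits at rank $\approx 2k$ while the $(k-2\sqrt{k})$-th sample sits at rank $\approx 2k-4\sqrt{k}$, with fluctuations of order $\sqrt{k}$; hence $\Pr[v^{(k)}>s^{(k-2\sqrt{k})}]$ is a constant bounded away from $1$, not $1-O(1/\sqrt{k})$. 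The paper avoids this by proving an \emph{aggregate} statement: with $Y_1>\cdots>Y_{2n}$ the merged list and $q_j$ the probability the algorithm selects $Y_j$, it shows $\sum_{j\le i} q_j \ge (1-O(1/\sqrt{k}))\cdot i/2$ for every $i\le 2k$, and compares to the upper bound $\mathrm{OPT}\le \tfrac12\sum_{j\le 2k}Y_j$. Only for $j\le k/2$ does the paper actually prove a per-element bound ($\Pr[H^R_j=0]=O(1/\sqrt{k})$); for larger $i$ it bounds $\bE[H^L_i]$ instead and appeals to the identity $\#\{\text{missed values among }Y_1,\dots,Y_i\}=\max\{H^L_i-H^R_i,0\}$.

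\textbf{The adversarial order is not handled.} You write that acceptance of $v^{(i)}$ depends on whether ``some slot in $\{\alpha_i,\dots,k\}$ is free when $v^{(i)}$ is presented,'' but this depends on the adversary's order, which the walk does not encode. The paper's very first step (its Lemma~\ref{lem:order}) is a swapping argument showing that revealing the values in \emph{increasing} order is always the worst case for the Rehearsal Algorithm. Only after fixing this order can one prove the clean characterization ``$Y_j$ is selected iff it is a value and $H^R_j>0$'' (Fact~\ref{fact:RW}), which is what makes the walk analysis work at all. Your outline needs this reduction before any walk-based accounting can begin.

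Finally, the paper does not ``couple to a symmetric walk.'' It proves two monotonicity lemmas: decorrelating a negatively-correlated pair can only increase the expected height of the walk (used for $i\ge k/2$), and deleting the earliest-ending correlated pair can only increase $\Pr[H=0\text{ and }RW(n)\le -\sqrt{k}]$ (used for $i<k/2$). These let the correlated walk be compared to a truly independent one in the right direction for each half of the argument; a single symmetric coupling will not do both jobs.
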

\subsection{Part I: The worst adversarial ordering and defining the random walk $RW$}
Here, we provide the first step in analyzing the rehearsal algorithm, reducing the analysis to answering a question about correlated random walks. We first state a convenient property of the rehearsal algorithm. (In fact, it holds no matter how the thresholds $T_1,\ldots,T_k$ are set.) 

\begin{lemma}\label{lem:order} For any vector of values $v=(v_1,v_2,...,v_n)$, and any thresholds $T_1,\ldots,T_k$, the worst-case order for the rehearsal algorithm is when the values $v_i$ are revealed in increasing order.
\end{lemma}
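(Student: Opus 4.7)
The plan is an exchange argument. I will show that in any arrival order, swapping an adjacent pair in which the earlier value $a$ exceeds the later value $b$ into the opposite order ($b$ then $a$) can only weakly decrease the algorithm's total, and moreover leaves the set of available slots after the pair is processed unchanged. Since the latter property means that all subsequent processing is identical, a bubble-sort sequence of such swaps converts any arrival order into the fully increasing order without ever raising the total, establishing that the increasing order is worst case.

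Fix the adjacent pair, let $S$ denote the set of available slots just before the pair is processed, and let $\alpha_a \leq \alpha_b$ be the indices defined in step 2.a for $a$ and $b$ respectively (the inequality holds because $a > b$). Set $T_a = S \cap \{\alpha_a,\ldots,k\}$ and $T_b = S \cap \{\alpha_b,\ldots,k\}$, so that $T_b \subseteq T_a$; when nonempty, let $m_a = \min T_a$ and $m_b = \min T_b$. The analysis splits on these two minima.

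If $T_b = \emptyset$, then $b$ is rejected under both orders and $a$ fills $m_a$ (if that set is nonempty) in both, so the contribution and the remaining slot set are identical. If $m_a < m_b$, then $m_a \notin \{\alpha_b,\ldots,k\}$, so $m_a$ is not a candidate slot for $b$; consequently $b$ fills $m_b$ no matter which order is used, $a$ fills $m_a$ no matter which order is used, both values are accepted, and the same pair of slots is consumed. The only subtle case is $m_a = m_b$, which actually forces $T_a = T_b$: any element of $T_a \setminus T_b$ would lie in $\{\alpha_a,\ldots,\alpha_b-1\}$ and hence be strictly less than $m_b = m_a$, contradicting $m_a = \min T_a$. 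Within this case, if $|T_a|\geq 2$ then both values are accepted under either order and the same pair of slots is consumed; if $|T_a|=1$ then in the original order $a$ takes the unique slot and $b$ is rejected (contribution $a$), while in the swapped order $b$ takes it and $a$ is rejected (contribution $b$). Since $a > b$, the original order contributes strictly more, and in either order exactly one slot (the same slot) is consumed.

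The main obstacle is verifying the $m_a = m_b$ case cleanly; the key observation that makes the bookkeeping tractable is that $m_a = m_b$ forces $T_a = T_b$, which collapses what could be a messy sub-analysis into the simple dichotomy above. Once the single-swap claim is in hand, together with the fact that the state passed to the remainder of the algorithm is identical under either order, iterating the swap to fully sort the arrival sequence into increasing order proves the lemma.
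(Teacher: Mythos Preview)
Your proof is correct and follows essentially the same adjacent-swap argument as the paper. Your case split (on whether $T_b=\emptyset$, $m_a<m_b$, or $m_a=m_b$) is a mild repackaging of the paper's five cases on the sizes of $S_j,S_{j'}$; your observation that $m_a=m_b$ forces $T_a=T_b$ is exactly what collapses things to the paper's cases 3 and 5, and your emphasis that the post-pair slot set is identical under either ordering is precisely what licenses the bubble-sort iteration in both arguments.
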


\begin{proof}
 Consider any fixed $v_1,\ldots,v_n$ and $T_1,\ldots,T_n$ and assume w.l.o.g. that $v_1 < \ldots < v_n$. Also, say there exists some $j,j'$ such that $v_j$ is revealed right before $v_{j'}$ and $v_j > v_j'$. Clearly, such $j,j'$ exist whenever the values are not revealed in increasing order. We now want to consider the behavior of the rehearsal algorithm if we swap the order in which $v_j$ and $v_{j'}$ are revealed.

First, observe that  whether $v_i$ is accepted or not depends \emph{only} on what slots are available when $v_i$ is revealed and \emph{not} on what elements already filled the slots that are not available. So let $S$ denote the set of available slots right before $v_j$ is revealed. Let $S_j$ denote the subset of $S$ of slots whose threshold is below $v_j$, and $S_{j'}$ the subset whose threshold is below $v_{j'}$. Since $v_{j'} < v_j$, we have that $S_{j'} \subseteq S_j$. Now we consider a few cases:

First, maybe $S_j = \emptyset$. Then no matter what order $v_j$ and $v_{j'}$ are revealed in, the rehearsal algorithm will reject them both and the same set of thresholds will be available to the remaining elements. So the set of accepted elements will be exactly the same regardless of the order of $v_j$ and $v_{j'}$.

Second, maybe $S_{j'} = \emptyset$, $S_j \neq \emptyset$. Then no matter what order $v_j$ and $v_{j'}$ are revealed in, the rehearsal algorithm will reject $v_{j'}$ and accept $v_j$ to fill the lowest available slot in $S_j$. So the same set of thresholds will be available to the remaining elements and the set of accepted elements will be exactly the same regardless of the order of $v_j$ and $v_{j'}$.

Third, maybe $S_j = S_{j'}$ and $|S_j| \geq 2$. Then no matter what order $v_j$ and $v_{j'}$ are revealed, the rehearsal algorithm will accept both $v_j$ and $v_{j'}$ and fill the two lowest slots of $S_j$. So the same set of thresholds will be available to the remaining elements and the set of accepted elements will be exactly the same regardless of the order of $v_j$ and $v_{j'}$.

Fourth, maybe $|S_j| > |S_{j'}| > 0$. Then no matter what order $v_j$ and $v_{j'}$ are revealed, $v_j$ will fill the slot of $S_j$ with the highest threshold value (which is necessarily not in $S_{j'}$), and $v_{j'}$ will fill the slot in $S_{j'}$ with the highest threshold value. So the same slots will be available to the remaining elements and set of accepted elements will be exactly the same regardless of the order of $v_j$ and $v_{j'}$.

Finally, maybe $S_j = S_{j'}$ and $|S_j| = 1$. Then whichever of $v_j$ and $v_{j'}$ is revealed first will fill the single available slot. The second will be rejected. However, the same slots will be available to the remaining elements regardless of their order, so the exact same set of remaining elements will be accepted. The only difference is whether $v_j$ or $v_{j'}$ was accepted. This is the only case where the set of accepted elements will differ, and it differs exactly by replacing $v_j$ with $v_{j'}$, which strictly increases the value of accepted elements. 

So we can start from any ordering of the $v_i$'s and swapping elements a finite number of times until the $v_i$'s are sorted so that the values are revealed in increasing order. By the above argument, we did not improve the value of accepted elements at any swapping step. Therefore, revealing the $v_i$'s in order of increasing values is indeed the worst-case order for the rehearsal algorithm.
\end{proof}

Using Lemma~\ref{lem:order}, we may assume w.l.o.g.\ that all elements are revealed so that the values are in increasing order. Using this, we will now reduce the problem of analyzing the rehearsal algorithm to answering a question about correlated random walks. When we run the rehearsal algorithm, the following experiment happens. First, a sample vector $s = (s_1,...,s_n)$ is drawn from $\cD$  and thresholds $T_1,\ldots,T_k$ are set. Then, values $v_1,...,v_n$ are revealed in increasing order and accepted/rejected according to the algorithm. Instead, imagine the following equivalent experiment. First, \emph{two} samples are taken from each $\distr_i$, $y_i$ and $y'_i$. Then, independently for all $i$, we permute the pair $(y_i,y_i')$ to determine which element is a ``sample'' and which one is a ``value.'' That is, we set $v_i = y_i$ and $s_i = y_i'$ with probability $\frac{1}{2}$, or  $v_i = y_i'$ and $s_i = y_i$ with probability $\frac{1}{2}$.  We will show that, for \emph{any} $y_1,y_1',...,y_n,y_n'$, the rehearsal algorithm obtains good reward in expectation, where the expectation is taken over the coin tosses that determine which of $(y_i,y_i')$ is a ``value'' and which one is a ``sample.''

Fix the list $y_1,y_1'...,y_n,y_n'$ and let $Y_j$ denote the $j^{th}$ highest value of this list. Let $p_j$ denote the probability, over the randomness of the coin flips, that the prophet selects $Y_j$ (i.e. the probability that $Y_j$ is one of the $k$ largest ``values''). Let's observe a simple upper bound on the expected value the prophet attains with samples $Y_1,\ldots,Y_{2n}$:

\begin{observation}\label{obs:OPT} $\sum_{j=1}^{2n} p_j\cdot Y_j \leq \sum_{j=1}^{2k} \frac{1}{2}\cdot Y_j$.
\end{observation}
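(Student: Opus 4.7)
The plan is to bound $\sum_j p_j Y_j$ by maximizing over the set of all nonnegative sequences $(q_j)$ satisfying the same two natural constraints that $(p_j)$ satisfies: a per-index upper bound, and a total budget. Because the $Y_j$ are arranged in decreasing order, the maximizing sequence will trivially concentrate mass on the largest indices, and direct computation will give exactly $\tfrac{1}{2}\sum_{j=1}^{2k}Y_j$.

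First I would establish the per-index bound $p_j \le \tfrac{1}{2}$. The coin flips make each $Y_j$ into a ``value'' with probability exactly $\tfrac{1}{2}$ (and into a ``sample'' otherwise). The prophet only ever selects elements that are realized as values, so $p_j \le \Pr[Y_j \text{ is a value}] = \tfrac{1}{2}$. Next I would establish the total budget $\sum_{j=1}^{2n} p_j \le k$: since the prophet selects at most $k$ items and $\sum_j p_j$ is exactly the expected number of items the prophet selects, this sum is at most $k$ by linearity of expectation.

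Finally, I would appeal to a simple rearrangement argument. The $Y_j$'s are sorted in decreasing order, so for any vector $(q_j)$ with $0 \le q_j \le \tfrac{1}{2}$ and $\sum_j q_j \le k$, the inner product $\sum_j q_j Y_j$ is maximized by greedily saturating the upper bound on the indices with the largest $Y_j$ values; that is, by setting $q_j = \tfrac{1}{2}$ for $1 \le j \le 2k$ and $q_j = 0$ otherwise. This uses the total budget exactly (since $2k \cdot \tfrac{1}{2} = k$) and yields the claimed upper bound $\tfrac{1}{2}\sum_{j=1}^{2k} Y_j$. Applying this to $(p_j)$ completes the proof. There is no real obstacle here; the observation is essentially an LP relaxation argument, and the only subtlety worth stating carefully is why the two constraints on $(p_j)$ hold, which follow immediately from the symmetry of the coin-flip experiment and the cardinality constraint of the prophet's feasible set.
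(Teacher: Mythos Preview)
Your proposal is correct and follows essentially the same approach as the paper: establish $p_j \le \tfrac12$ from the symmetry of the coin flip, establish $\sum_j p_j \le k$ from the prophet's cardinality constraint, and then use the decreasing order of the $Y_j$'s to conclude that the maximizing allocation sets $q_j = \tfrac12$ for $j \le 2k$. The paper's proof is just a terser version of exactly this argument.
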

\begin{proof}
The prophet chooses element $Y_j$ with probability $p_j$. Thus $OPT = \sum_{j=1}^{2n} p_j Y_j$. Since the prophet cannot select more than $k$ items, we must have $\sum_{j=1}^{2n} p_j \leq k.$ Furthermore, each $Y_j$ has a $\frac{1}{2}$ chance of being a ``sample" and thus the prophet will never choose it. Thus $p_j \leq \frac{1}{2}$ for all $j$. Since $Y_1 \geq ... \geq Y_{2n}$, these constraints imply that $\sum_{j=1}^{2n} p_j Y_j \leq \sum_{j=1}^{2k} \frac{1}{2} Y_j$.
\end{proof}

Our goal is to show that the gambler can guarantee a reward of $(1 - O(\frac{1}{\sqrt{k}})) \cdot OPT$ by using the rehearsal algorithm. Let $q_j$ denote the probability that the rehearsal algorithm selects $Y_j$. By Observation~\ref{obs:OPT}, it suffices to show that $\sum_{j=1}^{2k} q_j Y_j \geq \frac{c}{2} \sum_{j=1}^{2k} Y_j$ for $c = 1 - O(\frac{1}{\sqrt{k}})$. In fact, a sufficient condition for this is that $\sum_{j=1}^i q_j \geq ci/2$ for all $i \leq 2k$.\footnote{It is easy to see that minimizing $\sum_j q_j Y_j$ subject to this condition will set $q_j = c/2$ for all $j \leq 2k$.}

The rest of this section is spent proving this claim. We do this by defining a random walk $RW$ associated with the performance of the rehearsal algorithm. The random walk starts at 0 and goes up or down depending on whether $Y_j$ is a ``sample" or  a ``value". More formally, $RW$'s definition is as follows:
\begin{newcenter}
\framebox{
\begin{minipage}{0.9\textwidth}{
\smallskip
Random Walk RW

\begin{newitemize}
\item[1] Define $RW(0) = 0$. 
\item[2] For $j > 0$, given the value $RW(j-1)$ of the random walk at time $j-1$, define the value $RW(j)$ of the random walk at time $j$ as:
\begin{newitemize}
\item[2.a]  $RW(j) = RW(j-1)-1$ if $Y_j$ is a ``value''.
\item[2.b] $RW(j) = RW(j-1)+1$ if $Y_j$ is a ``sample,'' and there are at most $k-2\sqrt{k}-2$ different $i < j$ that are also ``samples.''
\item[2.c] $RW(j) = RW(j-1)+2\sqrt{k}+1$ if $Y_j$ is a ``sample,'' and there are exactly $k-2\sqrt{k}-1$ different $i < j$ that are also ``samples."
\item[2.d] $RW(j) = RW(j-1)$ if $Y_j$ is a ``sample,'' and there are at least $k-2\sqrt{k}$ different $i < j$ that are also ``samples."
\end{newitemize}
\end{newitemize}

}\end{minipage}
}
\end{newcenter}

To clarify, if $Y_j$ is a ``value,'' the walk moves down by $1$ at step $j$. If $Y_j$ is a ``sample'' and would have set a threshold, the walk moves up by $1$ at step $j$. If $Y_j$ is a ``sample'' and would have set the threshold that is repeated $2\sqrt{k}+1$ times, then the walk moves up by $2\sqrt{k}+1$ at step $j$. If $Y_j$ is a ``sample'' and would not have set a threshold, the walk does not move at step $j$. Now we state some facts that relate the performance of the rehearsal algorithm to facts about this random walk. Still assuming that all $x_i$ are revealed so that the values are in increasing order, we show how to figure out, just by looking at this random walk, which elements are selected by the rehearsal algorithm. We first need a definition and some facts. Figure \ref{fig:RW} illustrates these facts, assigning different colors to accepted and rejected values, as well as filled and unfilled thresholds.

\begin{figure}[htbp]
\label{fig:RW}
\includegraphics[scale=0.5]{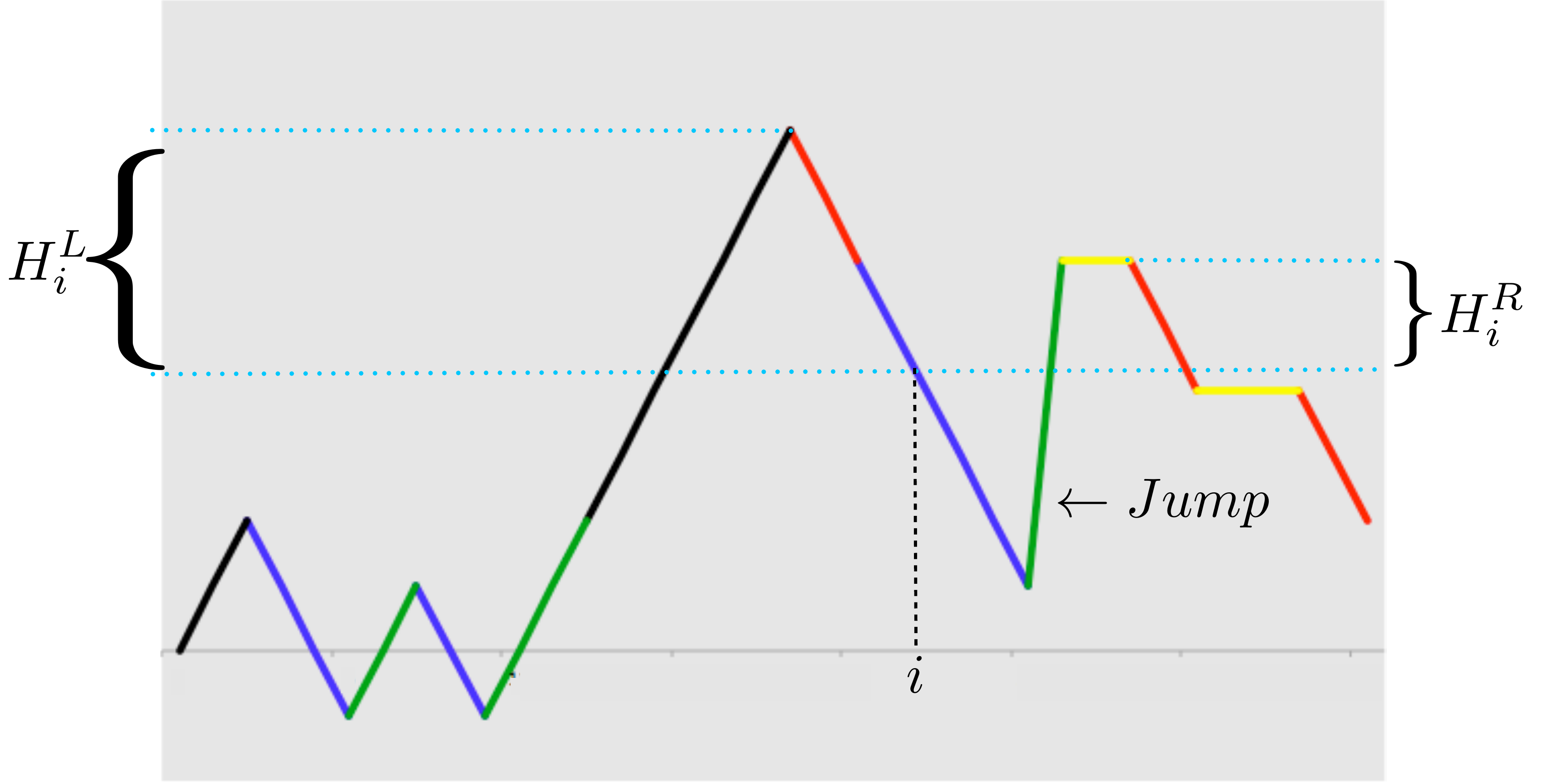}
\caption{An illustration of our random walk. The steps in blue correspond to selected values (since the random walk returns to these values eventually), the values in red correspond to rejected values. The samples in black are unfilled thresholds, the samples in green are filled thresholds. The samples in yellow are samples that do not determine a threshold. Notice that there's a threshold that produces a large jump in the random walk. We also highlight a point $i$, together with its corresponding left and right heights. The value is accepted because its right height is greater than zero. The number of values to the left that are \emph{not} accepted is exactly $H^L_i - H^R_i$.}
\end{figure}

\begin{definition} For any $j$, $H^R_j(RW)$ is the height of $RW$ to the right of $j$. Or formally, $H^R_j(RW) = \max_{i \geq j}\{RW(i) - RW(j)\}$. Similarly, $H^L_j(RW)$ is the height of $RW$ to the left of $j$. Formally, $H^L_j(RW) = \max_{i \leq j}\{RW(i) - RW(j)\}$.
\end{definition}

If it is clear from context, we will just write $H^L_j$ instead of $H^L_j(RW)$.  We can now prove two facts about this random walk and its relation to the rehearsal algorithm when values are revealed by the adversary in increasing order.

\begin{fact}\label{fact:RW} Assuming that the $v_i$ are revealed so that the values are in increasing order, for all $j$, $Y_j$ is chosen by the rehearsal algorithm if and only if $Y_j$ is a ``value'' and $H^R_j > 0$.
\end{fact}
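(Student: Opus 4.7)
The plan is to reduce acceptance of $Y_j$ to a simple ballot-type condition on a companion walk scanned in the opposite direction, and then translate that condition back to $RW$. By Lemma~\ref{lem:order} we may assume values arrive in increasing order of value, so that when it is time to process the value $Y_j$, exactly those $Y_\ell$ with $\ell>j$ marked ``value'' have already been offered to, and either accepted or rejected by, the algorithm. If $Y_j$ is marked ``sample'' then it never enters the algorithm's input stream, so Fact~\ref{fact:RW} is trivial; for the remainder we assume $Y_j$ is a ``value''.

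Introduce a companion walk $W$ defined by scanning the same $2n$ items in \emph{increasing} order of value: increment $W$ by the multiplicity of the threshold that the corresponding sample installs ($+1$ for each of the top $k-2\sqrt k-1$ samples, $+(2\sqrt k+1)$ for the $(k-2\sqrt k)$-th sample, and $0$ for every lower-ranked sample) and decrement $W$ by $1$ at each ``value''. Counting contributions gives the conservation law $W(t)+RW(2n-t)=k-n$ for every $t$, so prefix minima of $W$ translate directly into suffix maxima of $RW$; in particular, with $t_j=2n-j+1$ we obtain $W(t_j-1)=(k-n)-RW(j)$ and $\min_{s<t_j}W(s)=(k-n)-\max_{r\geq j}RW(r)$.

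The operational core of the proof is the invariant that, after the first $t$ items have been scanned in increasing order, the number of threshold slots still unfilled by the rehearsal algorithm equals $W(t)-\min_{s\leq t}W(s)$. This is proved by induction on $t$. At a sample step $W$ rises by the corresponding multiplicity while the prefix minimum is unchanged, so both sides increase by the same amount. At a value step, split into two subcases: if $W(t-1)>\min_{s\leq t-1}W(s)$ then availability is strictly positive, the algorithm fills the smallest-indexed available slot whose threshold is at most $Y_j$, and both availability and $W$ drop by $1$; otherwise $W(t-1)$ equals the prefix minimum, availability is $0$, the value is rejected, and both $W$ and the prefix minimum drop together so the identity continues to read $0$. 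The subtle piece is the multiplicity-$(2\sqrt k+1)$ jump: one must check that the $2\sqrt k+1$ equal-threshold slots are drained one-by-one by subsequent ``values'', in perfect correspondence with that single jump.

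Combining the invariant with the algorithm's acceptance rule, $Y_j$ is accepted if and only if availability just before its arrival is strictly positive, i.e.\ $W(t_j-1)>\min_{s<t_j}W(s)$. Applying the dictionary from the conservation law collapses this to $\max_{r\geq j}RW(r)>RW(j)$, which is exactly the condition $H^R_j>0$. The main obstacle in fleshing out this plan is making the multiplicity bookkeeping rigorous in the inductive case of the invariant; once the $(2\sqrt k+1)$-fold jump is handled carefully, the translation between $W$ and $RW$ is purely algebraic.
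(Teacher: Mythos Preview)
Your argument is correct, and it takes a genuinely different route from the paper's. The paper argues directly on $RW$: if $H^R_j>0$ there is an index $i>j$ with strictly more thresholds than values in $(Y_{j+1},\ldots,Y_i)$, so at least one slot below $Y_j$ is still open when $Y_j$ arrives; conversely, if $H^R_j=0$ then every suffix has at least as many values as thresholds, and the paper asserts (somewhat informally) that every slot below $Y_j$ is therefore already filled. Your approach instead reverses the scan, sets up the companion walk $W$, and proves an exact availability identity $\text{(open, already-installed slots)}=W(t)-\min_{s\le t}W(s)$, then transports the resulting ballot condition back to $RW$ via the conservation law. This is more machinery, but it makes the ``only if'' direction watertight in a way the paper's one-line claim does not; the paper's direction really does need an additional argument that values cannot ``waste'' themselves on slots outside the relevant range.

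Two small points of exposition. First, your invariant as stated (``the number of threshold slots still unfilled'') is literally false at $t=0$, where all $k$ slots are unfilled but the right-hand side is $0$; what you actually track (and what your induction proves) is the number of \emph{installed} slots still unfilled, i.e.\ those whose threshold has already been scanned. You should say this explicitly. Second, the multiplicity jump is not really the obstacle you make it out to be: because you scan in increasing order, every arriving value exceeds \emph{all} installed thresholds, so it is always eligible for every open installed slot and simply fills the lowest-index one. The $2\sqrt{k}+1$ identical-threshold slots therefore drain one per value step automatically, with no special bookkeeping required; the induction step for the jump is identical to the step for an ordinary $+1$ sample, just with a larger increment.
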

\begin{proof}
 If $H^R_j > 0$, then there is some $i > j$ with $RW(i) > RW(j)$. $RW$ increases every time it sees a threshold, and decreases every time it sees a value. So that means that there are more thresholds than ``values'' in the list $(Y_{j+1},...,Y_{i})$. This necessarily means that the first ``value'' revealed that is at least $Y_j$ will be selected, because there will be at least one available threshold between $Y_i$ and $Y_j$. Because we are assuming that the values are revealed in increasing order, $Y_j$ is exactly the first value revealed that is at least $Y_j$, and is therefore selected.

If $RW(i) \leq RW(j)$, then there are at least as many ``values'' as there are thresholds in the list $(Y_{j+1},...,Y_{i})$. Because the values are revealed in increasing order, this means that the slot using threshold $Y_i$ will certainly be filled before $Y_j$ is revealed. If $H^R_j = 0$, then it is true that $RW(i) \leq RW(j)$ for all $i > j$, which means that all possible slots that $Y_j$ could use will be filled before $Y_j$ is revealed, and therefore $Y_j$ will not be selected by the rehearsal algorithm.
\end{proof}

\begin{fact}\label{fact:numMissing} For all $i$, the number of ``values'' in $\{Y_1,...,Y_i\}$ that are \emph{not} selected by the rehearsal algorithm is $\max\{H^L_i - H^R_i,0\}$.
\end{fact}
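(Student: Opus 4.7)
I prove Fact~\ref{fact:numMissing} by induction on $i$, using Fact~\ref{fact:RW} to track which values are selected. Let $N_i$ denote the number of values in $\{Y_1, \ldots, Y_i\}$ that the rehearsal algorithm does not select. The base case $i = 0$ is immediate: $RW(0) = 0$, so $H^L_0 = H^R_0 = 0 = N_0$. For the inductive step, I split into cases by the type of step taken at position $i$, as given by the four cases in the definition of $RW$.

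\textbf{Sample cases.} If $Y_i$ is a sample triggering no movement (case 2.d), then $H^L_i$, $H^R_i$, and $N_i$ all equal their $i-1$ counterparts and the claim transfers directly. If $Y_i$ is a sample with an up-step of size $s \in \{1, 2\sqrt{k}+1\}$ (cases 2.b, 2.c), then $N_i = N_{i-1}$, and routine computation gives $H^L_i = \max\{H^L_{i-1} - s,\, 0\}$ and $H^R_{i-1} = H^R_i + s$. The second identity uses that the jump of size $s$ from time $i-1$ to time $i$ is visible in the right-height at $i-1$; a consequence is $H^R_{i-1} \geq s$. The desired equality now follows by two subcases: if $H^L_{i-1} \geq s$, both sides reduce to $\max\{H^L_{i-1} - H^R_{i-1},\, 0\}$; if $H^L_{i-1} < s$, then $H^R_{i-1} \geq s > H^L_{i-1}$ forces both sides to equal $0$.

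\textbf{Value case.} This is where Fact~\ref{fact:RW} enters. Here $RW(i) = RW(i-1) - 1$, so $H^L_i = H^L_{i-1} + 1$. The key observation is a dichotomy for the right-height: either $H^R_i \geq 1$, achieved at some position $k \geq i+1$, in which case $H^R_{i-1} = H^R_i - 1$; or $H^R_i = 0$, meaning the walk never returns to level $RW(i)$ after time $i$, in which case $H^R_{i-1} = 0$ as well (since $RW(i-1) = RW(i) + 1$ is likewise never matched). By Fact~\ref{fact:RW}, $Y_i$ is selected exactly when $H^R_i \geq 1$. In the accepted subcase, $N_i = N_{i-1}$ and $H^L_i - H^R_i = H^L_{i-1} - H^R_{i-1}$, so the induction hypothesis transfers. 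In the rejected subcase, $N_i = N_{i-1} + 1$; the induction hypothesis with $H^R_{i-1} = 0$ yields $N_{i-1} = H^L_{i-1}$, and since $H^L_i - H^R_i = H^L_{i-1} + 1 > 0$, we conclude $\max\{H^L_i - H^R_i,\, 0\} = H^L_{i-1} + 1 = N_i$.

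\textbf{Main obstacle.} The only step that requires real insight beyond bookkeeping is the dichotomy at value steps: the equivalence $H^R_i = 0 \Leftrightarrow H^R_{i-1} = 0$ is exactly what ties the ``$\max$-with-$0$'' form of the bound to the selection criterion from Fact~\ref{fact:RW}, and it is what makes the rejected subcase close cleanly. The remainder of the proof is careful but routine case analysis.
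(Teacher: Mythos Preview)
Your induction on $i$ is correct and gives a clean proof, modulo two cosmetic slips: in the base case you assert $H^R_0 = 0$, which is false in general (it equals the global maximum of $RW$), but since $H^L_0 = 0$ the quantity $\max\{H^L_0 - H^R_0, 0\}$ is still $0$ and the base case holds; and in your closing paragraph you write an equivalence $H^R_i = 0 \Leftrightarrow H^R_{i-1} = 0$, whereas only the forward implication is true (and that is all your argument uses). Neither affects the proof.

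Your route is genuinely different from the paper's. The paper argues directly for a fixed $i$: it shows $H^L_i - H^R_i \geq h$ by chaining the inequalities $RW(j_{k+1}) \leq RW(j_k) - 1$ across the unselected values $j_1 < \cdots < j_h$ (each such $j_k$ has $H^R_{j_k} = 0$ by Fact~\ref{fact:RW}, so the walk never recovers past $RW(j_k)$), and it shows $H^L_i - H^R_i \leq h$ by explicitly constructing $H^L_i - H^R_i$ distinct unselected values, namely the last downward crossings of the levels $RW(i) + H^R_i + 1, \ldots, RW(i) + H^L_i$. Your approach avoids this level-crossing construction entirely, replacing it with local bookkeeping for how $H^L$ and $H^R$ update across each step type. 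The paper's argument gives a bit more structural insight---it identifies the rejected values with specific level crossings---while yours is more mechanical but arguably easier to verify, since each case is a two-line computation and Fact~\ref{fact:RW} is invoked exactly once, at the value step.
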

\begin{proof}
Let $j_1,\ldots,j_h$ denote the indices of the ``values'' in $(Y_1,...,Y_i)$ that are not selected by the rehearsal algorithm in increasing order. We show that  $H^L_i - H^R_i = h$ by first showing that $H^L_i - H^R_i \geq h$, and then showing that $H^L_i - H^R_i \leq h$. 

For any index $k$ in $\{1,...,h\}$, $Y_{j_k}$ is not selected. Thus, Fact \ref{fact:RW} tells us that it must be the case that $RW(z) \leq RW(j_{k})$ for all $z \geq j_{k}$. In particular, this must hold for $z = j_{k+1}-1$. Because $Y_{j_{k+1}}$ is a ``value'', we know that $RW(j_{k+1}) = RW(j_{k+1}-1) -1$, and therefore $RW(j_{k+1}) \leq RW(j_k) - 1$. Chaining this together for all $k$ in $\{1,...,h\}$, we get that $RW(j_h) \leq RW(j_1) - (h-1)$. Because $j_1$ is a ``value", $RW(j_1) = RW(j_1-1)-1$, which means that we get
$RW(j_h) \leq RW(j_1-1) - h.$

Since $j_h$ is the index of a ``value" that was not selected by the rehearsal algorithm, we know from fact \ref{fact:RW}  that $RW(z) \leq RW(j_h)$ for all indices $z \geq j_h$ (which includes all $z \geq i$, since $j_h \in \{1,...,i\}$). Let $m = RW(j_h) - RW(i)$ and note that $H^L_i \geq RW(j_1) - RW(i) \geq h + RW(j_h) - RW(i) = h + m$. Furthermore, since $RW(z) \leq RW(j_h)$ for all $z \geq i$, we have $H^R_i \leq RW(j_h) - RW(i) = m$.  We conclude that $H^L_i - H^R_i \geq h + m - m = h$.

Let $H  = H^L_i - H^R_i$. We will show that $H \leq h$, thus concluding the proof.  Since $H^L_i = H^R_i + H$, there exists an index $j \in \{1,...,i\}$ such that $RW(j) = RW(i) + H^R_i + H$. So, for every $k$ in $\{1,...,H\}$, choose $j_k$ to be the largest index in $\{1,...,i\}$ such that $RW(j_k-1) \geq RW(i) + H^R_i + k$.  By this definition, we have $RW(j_k) < RW(i) + H^R_i + k \leq RW(j_k-1)$, and thus the random walk goes down at step $j_k$. This means that $Y_{j_k}$ is a ``value''.  Furthermore, the value $Y_{j_k}$ is not selected by the rehearsal algorithm because $H^R_{j_k} = 0$. To see this, note that for any index $j$ between $j_k$ and $i$, we have $RW(j) \leq RW(j_k)$ by the definition of $j_k$ (otherwise $j_k$ would not be the largest index satisfying $RW(j_k - 1) \geq RW(i) + H^R_i + k$). Furthermore, for every index $j \geq i$, we have $RW(j) \leq RW(i) + H^R_i < RW(i) + H^R_i + k \leq RW(j_k-1) = RW(j_k) + 1$. Thus, we have $RW(j) \leq RW(j_k)$ for every $j > j_k$. By Fact \ref{fact:RW} this implies that $Y_{j_k}$ is a value that does not get selected by the rehearsal algorithm. We showed in this paragraph that there are at least $H = H^{L}_i - H^R_i$ such values. In the previous paragraph we show that there are at most $H$ such values. Thus, we conclude that the number of values in $\{1,...,i\}$ that are not selected by the rehearsal algorithm is exactly $H^{L}_i - H^R_i$.
\end{proof}

The expected number of ``values'' in $\{Y_1,...,Y_i\}$ is $\frac{i}{2}$. By Fact~\ref{fact:numMissing}, we have that the expected number of values in $\{Y_1,...,Y_i\}$ selected by the rehearsal algorithm is $\frac{i}{2} - \bE[\max\{H^{L}_i - H^R_i,0\}]$, where the expectation is taken with respect to the coin tosses of the random walk.  Thus, to show that $\sum_{j=1}^i q_j \geq \frac{ci}{2}$ for $c = 1- \frac{d}{\sqrt{k}}$ (where we have made explicit the constant $d$ in $O(\frac{1}{\sqrt{k}})$), it suffices to show that 
$$\bE[\max\{H^{L}_i - H^{R}_i,0\}] \leq \frac{d \cdot i}{2\sqrt{k}}.$$
 Our next subsection is dedicated to proving this inequality.

\subsection{Rehearsal Algorithm Analysis Part II: Bounding the height of the random walk}
In light of the previous section, we have reduced the analysis of the rehearsal algorithm to proving the following theorem.

\begin{theorem}\label{thm:rehearsal}
 $\bE[\max\{H^{L}_i - H^R_i,0\}] \leq O(\frac{i}{\sqrt{k}}) \, \forall i \leq 2k$, where the constant implicit in the $O(\cdot)$ notation is the same for all $i$.
\end{theorem}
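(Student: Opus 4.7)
My plan is to translate the claim into a per-index estimate using the two combinatorial facts already proved. By Fact~\ref{fact:numMissing}, $\max\{H^L_i - H^R_i, 0\}$ equals the number of ``values'' among $\{Y_1, \ldots, Y_i\}$ not selected by the rehearsal algorithm, and by Fact~\ref{fact:RW} each such event is precisely $\{Y_j\text{ is a value}\} \cap \{H^R_j = 0\}$. Taking expectations yields
\[
\bE\!\left[\max\{H^L_i - H^R_i, 0\}\right] \;=\; \sum_{j=1}^{i} \Prob\!\left[Y_j\text{ is a value and } H^R_j = 0\right],
\]
so it suffices to control each summand, on average, by $O(1/\sqrt{k})$.

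Let $j^\star$ denote the (random) position of the $(k-2\sqrt{k})$-th sample, i.e., where the jump of height $2\sqrt{k}+1$ occurs. Two observations drive the case analysis. First, after step $j^\star$ the walk is non-increasing (values contribute $-1$, subsequent samples contribute $0$), so for every $j \geq j^\star$ we automatically have $H^R_j = 0$. Second, the number of samples in $[1, \ell]$, which I denote $N_\ell$, has mean $\ell/2$ and is tightly concentrated (adapting Hoeffding to the pair-coin structure), so $j^\star$ lies within $O(\sqrt{k})$ of $2k - 4\sqrt{k}$ except with $e^{-\Omega(1)}$ probability. For the \emph{post-jump} contribution $\sum_{j=1}^{i} \Prob[Y_j\text{ value},\ j \geq j^\star]$, concentration of $j^\star$ shows that the sum is at most $O(\sqrt{k})$ for all $i \leq 2k$, and is in fact exponentially small whenever $i \leq 2k - C\sqrt{k}$ for a sufficiently large constant $C$. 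Since $O(\sqrt{k}) = O(i/\sqrt{k})$ whenever $i \geq k$, this term falls within the target bound in every regime.

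The main work lies in the \emph{pre-jump} case $j < j^\star$. Here $H^R_j = 0$ requires $RW(\ell) \leq RW(j)$ for every $\ell \geq j$, and since the only upward move after step $j$ is the jump at $j^\star$, this is equivalent to requiring that the $\pm 1$ walk over $(j, j^\star - 1)$ end at or below $-(2\sqrt{k}+1)$ while remaining non-positive throughout. Conditioning on $j^\star$ and on which pairs $(y_i, y_i')$ contribute their sample versus value between $j+1$ and $j^\star-1$ reduces this to a classical ballot problem for a $\pm 1$ walk of length $m = j^\star - 1 - j$. The reflection principle yields $\Prob[S_\ell \leq 0\ \forall \ell \leq m,\ S_m = -v] \leq \frac{v+1}{m+1}\Prob[S_m=-v]$; summing over $v \geq 2\sqrt{k}+1$ and using $\sum_v (v+1)\Prob[S_m = -v] \leq \bE[|S_m|] + 1 = O(\sqrt{m})$ produces a per-index bound of $O(1/\sqrt{m}) = O(1/\sqrt{k})$, using $m = \Theta(k)$ for the bulk of $j$'s, and noting that the $O(\sqrt{k})$ indices within distance $O(\sqrt{k})$ of $j^\star$ contribute negligibly by a Hoeffding-type tail bound on $S_m$. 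Summing over $j \leq i$ delivers the desired $O(i/\sqrt{k})$ bound.

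The principal obstacle is the dependency structure in the pre-jump case: the $\pm 1$ increments in $(j, j^\star - 1)$ are not i.i.d.\ fair coin flips, because the pairing of each $(y_i, y_i')$ couples steps on opposite sides of the split $j$, and $j^\star$ is itself a random stopping time whose distribution interacts with the walk's trajectory. I would address this by conditioning first on $j^\star$ and on how the $n$ pairs split across the intervals $[1,j]$, $(j, j^\star)$, and $[j^\star, 2n]$; this renders the coin flips determining the increments inside $(j, j^\star)$ conditionally independent and fair, so the ballot-style estimate applies to a genuine simple random walk, and averaging over the concentrated distribution of $j^\star$ preserves the bound.
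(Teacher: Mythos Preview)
Your reduction via Facts~\ref{fact:RW} and~\ref{fact:numMissing} to the per-index sum $\sum_{j\le i}\Prob[Y_j\text{ is a value},\,H^R_j=0]$ is exactly right, and the paper proceeds the same way. The split into a post-jump contribution (trivial, $O(\sqrt{k})$ total) and a pre-jump contribution is also sound, and the ballot/reflection estimate you invoke for an honest simple random walk is precisely the content of the paper's Lemma~\ref{lem:both}: for a truly i.i.d.\ $\pm1$ walk of any length $m$, the probability of staying nonpositive and ending below $-\Theta(\sqrt{k})$ equals $\Prob[S_m\in\{-2\sqrt{k},-2\sqrt{k}-1\}]=O(1/\sqrt{k})$, uniformly in $m$.

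The genuine gap is your conditioning step. You assert that after conditioning on $j^\star$ and on ``how the $n$ pairs split across the intervals,'' the increments inside $(j,j^\star)$ become conditionally independent fair coins. They do not. The event $\{j^\star=m\}$ is exactly the event that there are $k-2\sqrt{k}-1$ samples among $Y_1,\dots,Y_{m-1}$ and that $Y_m$ is a sample; this is a global constraint on the coin flips. In particular, every pair with one member in $(j,m)$ and the other in $[m,2n]$ has its coin flip biased by this conditioning (the total number of such coins that come up ``sample'' is pinned once you know $j^\star$), and there can be $\Theta(k)$ such pairs. So the walk you are analyzing on $(j,j^\star)$ is neither i.i.d.\ nor symmetric under the conditioning you propose, and the ballot estimate does not apply as stated.

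The paper handles exactly this obstacle, and it is the technical heart of the argument. First, it replaces $RW$ by a walk $RW'$ whose jump sits at a \emph{deterministic} index (near $2k-4\sqrt{k}$), and shows via a Chernoff argument that $H^R_j(RW')\le H^R_j(RW)$ with probability $1-e^{-\Omega(k^{1/3})}$; this removes the random stopping time entirely. Second, even in $RW'$ the $\pm1$ steps still come in negatively correlated pairs, and the paper proves a monotonicity lemma (Lemma~\ref{lem:delete}): deleting the correlated pair whose later member is earliest can only \emph{increase} $\Prob[H=0,\,RW''(n)\le -m]$. Iterating reduces to the i.i.d.\ case, where Lemma~\ref{lem:both} applies. (For $i\ge k/2$ the paper takes a shortcut, bounding $\bE[H^L_i]$ directly via the height of a simple random walk and a separate decorrelation lemma, Lemma~\ref{lem:decorrelating}.) Your plan would need a replacement for these two decoupling steps; the conditioning you describe does not accomplish it.

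A minor secondary point: your stated bound $O(1/\sqrt{m})$ from $\sum_v (v+1)\Prob[S_m=-v]=O(\sqrt{m})$ throws away the constraint $v\ge 2\sqrt{k}$ and would not be uniformly $O(1/\sqrt{k})$ over all $m$. The sharper reflection identity $\Prob[\max_\ell S_\ell\le 0,\,S_m\le -2\sqrt{k}]=\Prob[S_m\in\{-2\sqrt{k},-2\sqrt{k}-1\}]$ (as in Lemma~\ref{lem:both}) gives the needed uniform $O(1/\sqrt{k})$ directly.
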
  

Recall that our random walk is non-traditional in two ways. First, after $k - \sqrt{2k}$ positive steps, the random walk jumps an additional $2 \sqrt{k} + 1$ units. Second, the steps of the random walk are slightly correlated. In each pair $y_i,y_i' \leftarrow \cD_i$, exactly one induces a non-negative step (by being a ``sample'') and the other one must induce a negative step (by being  a ``value'').  Thus, the steps in the random walk are correlated. Our proof of theorem \ref{thm:rehearsal} accounts for these obstacles using the following steps.

\begin{enumerate}
\item We show  that for large $i$ we in fact have $\mathbb{E}[H^L_i] \leq O(i/\sqrt{k})$. It is clear that $\mathbb{E}[H^L_i] \geq \mathbb{E}[\max\{H^L_i-H^R_i,0\}]$, so this is enough. We prove this by first observing that if there were no correlation between steps and no jump, then this is a well-known fact about the expected height of random walks. Then we show that the jump and correlation can only decrease $\mathbb{E}[H^L_i]$.
\item The analysis is made difficult by the fact that $RW$ jumps up at a random location. To circumvent this difficulty, we will describe a new random walk $RW'$ that jumps up at a fixed index instead of after the $(k-2\sqrt{k})^{th}$ threshold seen. For all small $i$, it will be clear that $H^L_i(RW) = H^L_i(RW')$, and we will show that $H^R_i(RW') \leq H^R_i(RW)$ with very high probability. (The probability that $H^R_i(RW') > H^R_i(RW)$ is inversely exponential in $k$.) As $H^R_i(RW)$ is clearly at most $k$, this means that for small $i$, we only have to bound $\mathbb{E}[\max\{H^L_i(RW')-H^R_i(RW'),0\}]$, which is still challenging but much cleaner.
\item We show in $RW'$ that for small $i$ and $j < i$, $H^R_j = 0$ with low probability. We first prove that this is true if there was no correlation, and show that correlation can only decrease the probability that $H^R_j = 0$. By Facts~\ref{fact:RW} and~\ref{fact:numMissing}, this exactly says that $\mathbb{E}[\max\{H^L_i-H^R_i,0\}]$ is small. 
\end{enumerate}

We now proceed to show step $1$, that for all $i \geq k/2$, $\mathbb{E}[H^L_i] \leq O(i/\sqrt{k})$. First, it is clear that the jump cannot possibly increase $\mathbb{E}[H^L_i]$, because for all $j < i$, either the jump does not affect $RW(j) - RW(i)$, or it decreases $RW(j) - RW(i)$ by $2\sqrt{k}+1$. So we may ignore the jump as doing so only increases $\mathbb{E}[H^L_i]$. Next, it is clear that if there is no correlation between steps to the left of $i$, then $H^L_i$ is just the height of a truly random walk starting at $i$ going back to $0$. It is a well-known consequence of the reflection principle that the expected height of a random walk on $i$ steps is $O(\sqrt{i})$, see e.g.~\cite{feller}. Because $i \geq k/2$, this would exactly say that $\mathbb{E}[H^L_i] \leq O(i/\sqrt{k})$. Now we just have to show that the same bound holds even if there are correlated pairs before $i$. To do this, we show that for any pair of correlated steps, decorrelating them only increases $\mathbb{E}[H^L_i]$, regardless of any other correlation. We can then apply this argument a finite number of times, decorrelating every pair of correlated steps to increase $\mathbb{E}[H^L_i]$ to a value that is $O(i/\sqrt{k})$ by our previous observation. Therefore, it must be the case that $\mathbb{E}[H^L_i] \leq O(i/\sqrt{k})$. 

\begin{lemma}\label{lem:decorrelating} Let $RW$ be any random walk of $n$ steps where steps $x$ and $y$ are negatively correlated random variables, each uniformly distributed in $\{\pm 1\}$. Consider modifying $RW$ by replacing steps $x,y$ with i.i.d.\ uniform samples from $\{\pm 1\}$ that are independent of the other steps in $RW$. This modification cannot decrease the expected height of $RW$, even if there are other correlated steps in $RW$.
\end{lemma}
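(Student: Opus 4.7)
The plan is to condition on all steps of $RW$ other than $x$ and $y$, which reduces the claim to a deterministic ``four-point'' inequality about the height function $H(x,y)$ of the resulting walk. Because the paper's correlations live inside single pairs $(y_i,y_i')$, the pair $(x,y)$ is independent of the remaining steps, so after conditioning the joint distribution of $(x,y)$ alone is what changes. With uniform marginals and negative correlation, this joint distribution is parameterized by a single scalar $q := \tfrac{1}{4} - \Pr[x = y = +1] \in [0,\tfrac{1}{4}]$, and a short computation gives
$$\mathbb{E}_{\text{indep}}[H(x,y)] - \mathbb{E}_{\text{orig}}[H(x,y)] = q\bigl(H(+,+) + H(-,-) - H(+,-) - H(-,+)\bigr).$$
Averaging back over the conditioning, the lemma therefore reduces to the supermodularity inequality $H(+,+)+H(-,-) \geq H(+,-)+H(-,+)$ for every outcome of the other steps.

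I would next make the structure of $H$ explicit. Let $a < b$ be the positions of steps $x$ and $y$, and let $W(j)$ denote the walk after setting $x=y=0$; then $RW(j) = W(j) + x\,\mathbf{1}[j \geq a] + y\,\mathbf{1}[j \geq b]$. For the ``height'' relevant to the paper's application, namely $H^L_i = \max_{j \leq i}(RW(j)-RW(i))$ with $i \geq b$ (the other natural notions, $\max_j RW(j)$ and $H^R_i$, reduce by identical arguments; the case where $i$ lies strictly between $a$ and $b$ is trivial, because $H$ then depends on only one of the two variables), we obtain
$$H(x,y) = \max\bigl(\,A - x - y,\; B - y,\; C\,\bigr)$$
for constants $A,B,C$ determined by $W$ and $i$. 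The crucial observation is that the three gradients $(-1,-1)$, $(0,-1)$, $(0,0)$ form a \emph{chain} in $\mathbb{R}^2$ under the coordinate-wise partial order.

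The remaining task---and this is where the real content of the lemma sits---is to prove supermodularity for a maximum of affine functions whose gradients form a chain. The clean way is a monotone-argmax argument: order the three pieces so that their gradients are non-decreasing componentwise, let $i^*(x,y)$ be the index of the active piece (breaking ties by largest index), and observe that if we raise either coordinate then $i^*$ can only move up, because the piece with the smaller gradient would strictly lose ground. Consequently the partial derivative $\partial_x H$ is a non-decreasing function of $y$, which is exactly supermodularity. Alternatively, one can verify the inequality by direct case analysis, using the monotonicity of the scalar map $\phi_D(t) = \max(t+2, D) - \max(t, D) = \max\bigl(0, \min(2, t+2-D)\bigr)$ together with the elementary bound $\max(B+1,C) - \max(B-1,C) \leq 2$.

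The main obstacle is really just the bookkeeping in the second paragraph: one needs to verify that, for whatever notion of ``height'' is being used, the resulting $H(x,y)$ is a max of affine functions with chain-ordered gradients rather than an arbitrary max of affine functions---for which supermodularity can outright fail, since e.g.\ $\max(x,y)$ is submodular on $\{-1,+1\}^2$ because its gradients $(1,0)$ and $(0,1)$ are incomparable. The chain structure here comes for free from the fact that $x$ sits strictly earlier than $y$ in the walk, so any prefix or suffix affected by $y$ is automatically affected by $x$. Once chain-ordering is verified, the supermodularity argument combined with the reduction of the first paragraph immediately yields $\mathbb{E}_{\text{indep}}[\text{height}] \geq \mathbb{E}_{\text{orig}}[\text{height}]$, completing the proof.
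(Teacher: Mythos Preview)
Your proof is correct and takes a genuinely different route from the paper's. Both arguments begin by conditioning on all steps other than $x$ and $y$ (you make explicit, and the paper leaves implicit, that this requires $(x,y)$ to be independent of the remaining steps---which holds in the paper's application). From there the two diverge. The paper writes the height as $\max(a,b,c)$ with $a,b,c$ the peaks in the three segments, and then carries out an eight-way case analysis on the relative order of $a,b,c$, comparing the expected change in height under the perfectly anticorrelated pair $(x,y)\in\{(+,-),(-,+)\}$ versus the four i.i.d.\ outcomes. Your approach instead parameterizes the negatively correlated law by $q=\tfrac14-\Pr[x=y=+1]\ge 0$, reduces the claim to the single supermodularity inequality $H(+,+)+H(-,-)\ge H(+,-)+H(-,+)$, and then proves this structurally: the height is a max of affine functions whose gradients $(-1,-1),(0,-1),(0,0)$ (equivalently $(0,0),(1,0),(1,1)$ for $\max_j RW(j)$) form a chain, so a monotone-argmax argument yields increasing differences. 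What this buys you is a cleaner, case-free argument that also covers arbitrary negative correlation rather than just the perfect anticorrelation $x=-y$ the paper's case analysis actually treats; it also makes transparent \emph{why} the inequality holds (chain-ordering of the segments), whereas the paper's case list simply verifies it. The paper's approach, on the other hand, is entirely elementary and self-contained, requiring no appeal to supermodularity or argmax monotonicity.
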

\begin{proof}
Imagine that the random walk is fixed except for what happens at $x$ and $y$. Then this random walk has a height. And we can consider how the height is expected to change by filling in what happens at $x$ and $y$ if they are correlated and decorrelated respectively. We just need to show that the expected change is greater when $x$ and $y$ are decorrelated. 

Imagine in this fixed random walk that we have removed the step at $x$ and at $y$. Or in other words, the random walk stays level at these steps. Then let $a$ denote the height of the peak before $x$, $b$ the height of the peak between $x$ and $y$, and $c$ the height of the peak after $y$. If there are no steps in the walk in any of these positions, then the value of the appropriate variable is $-\infty$. We then consider adding in steps at $x$ and $y$ (i.e. changing the fixed walk from staying level at these two points to taking a genuine step). We consider what happens when the two steps are correlated and uncorrelated, showing that no matter what relations are satisfied by $a,b,c$ that if $x$ and $y$ are uncorrelated, the expected height is always greater. There are several different cases to consider, but they are all simple.

Case 1: $a = b = c$. If $x$ and $y$ are correlated, we change $b$ to $b-1$ and $b+1$ each with probability $1/2$, and don't change $c$. So with probability $1/2$ we increase the height by $1$, with probability $1/2$ it is unchanged. If $x$ and $y$ are uncorrelated, with probability $1/4$ we increase $c$ by $2$ and $b$ by $1$. With probability $1/4$ we leave $c$ unchanged and increase $b$ by $1$. With probability $1/4$ we decrease $b$ by $1$ and leave $c$ unchanged, and with probability $1/4$ we decrease $b$ by $1$ and $c$ by 2. So with probability $1/4$ we increase the height by $1$, with probability $1/4$ we increase it by $2$, and with probability $1/2$ we leave it unchanged. 

Case 2: $a \leq b < c$. If $x$ and $y$ are correlated, they cannot change the height ever. If $x$ and $y$ are uncorrelated, we increase the height by $2$ with probability $1/4$ and decrease the height by $2$ (or $1$ if $c = b+1$) with probability $1/4$. 

Case 3: $b \leq a < c$. Same as above.

Case 4: $a > b$, $a > c$. If $x$ and $y$ are correlated, we do not change the height ever. If $x$ and $y$ are uncorrelated, we never decrease the height, and sometimes may increase the height if $a = c+1$. 

Case 5: $b > a$, $b>c$. Whether or not $x$ and $y$ are correlated, we increase the height by $1$ with probability $1/2$ and decrease it by $1$ with probability $1/2$. 

Case 6: $a = b > c$. Whether or not $x$ and $y$ are correlated, we increase the height by $1$ with probability $1/2$ and never decrease it.

Case 7: $a = c > b$. If $x$ and $y$ are correlated, we never change the height. If $x$ and $y$ are uncorrelated, we sometimes increase height by $2$, and sometimes don't change it. 

Case 8: $b = c > a$. If $x$ and $y$ are correlated, we never decrease $c$ and increase $b$ by $1$ with probability $1/2$. So the expected increase is $1/2$. When $x$ and $y$ are uncorrelated, we increase $c$ by $2$ with probability $1/4$, increase $b$ by $1$ without changing $c$ with probability $1/4$, and decrease $b$ by $1$ without changing $c$ with probability $1/4$, and decreases $b$ by $1$ and $c$ by $2$ with probability $1/4$. So the expected increase is $1/2 + 1/4 - 1/4 = 1/2$. 

In all cases, it is easy to see that the expected increase in height when $x$ and $y$ are uncorrelated is at least as large as the expected increase in height when $x$ and $y$ are correlated. This covers all cases and does not depend on any other existing correlations in $RW$. Therefore, decorrelating steps $x$ and $y$ can only increase the expected height of $RW$.
\end{proof}

Using Lemma~\ref{lem:decorrelating} and the reasoning above, we complete step 1 of the proof with the following corollary:

\begin{corollary}\label{cor:step1} $\forall i \geq k/2$, $\mathbb{E}[H^L_i] \leq O(i/\sqrt{k})$.
\end{corollary}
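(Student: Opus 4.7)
The plan is to upper bound $\bE[H^L_i]$ by the expected maximum of a simple, uncorrelated, jump-free random walk on $i$ steps, which is well-known to be $O(\sqrt{i})$ by the reflection principle; since $i \geq k/2$ gives $\sqrt{i} \leq \sqrt{2}\cdot i/\sqrt{k}$, the corollary will follow. In other words, the strategy is two monotone reductions (delete the jump, decorrelate the pairs) followed by a textbook bound on the height of a symmetric random walk.

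First, I would remove the $+2\sqrt{k}+1$ jump pointwise. Let $j^{\star}$ denote the (random) index at which the jump occurs in the first $i$ steps, or $j^{\star}=+\infty$ if none does. For any realization, if $j^{\star} > i$ the jump does not affect $RW(j)-RW(i)$ for any $j\leq i$; if $j^{\star}\leq i$, then for $j\geq j^{\star}$ both $RW(j)$ and $RW(i)$ inherit the same extra $2\sqrt{k}+1$ and $RW(j)-RW(i)$ is unchanged, whereas for $j<j^{\star}$ only $RW(i)$ inherits it and $RW(j)-RW(i)$ decreases by $2\sqrt{k}+1$. Taking the maximum over $j\leq i$ can only increase when we delete the jump, so it suffices to prove the bound for the jump-free walk $\widetilde{RW}$.

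Next, I would decorrelate. Conditioning on which indices in $\{1,\dots,i\}$ belong to which pair $(y_\ell,y_\ell')$, every pair whose two elements both lie in $\{1,\dots,i\}$ contributes two perfectly anticorrelated $\pm 1$ steps of $\widetilde{RW}$, while every pair with exactly one element in $\{1,\dots,i\}$ contributes a single independent $\pm 1$ step (and pairs with neither element in $\{1,\dots,i\}$ are irrelevant to $H^L_i$). I would then apply Lemma~\ref{lem:decorrelating} to one anticorrelated pair at a time, noting that its ``other correlations'' clause legitimately absorbs the remaining anticorrelated pairs. Each application weakly increases $\bE[H^L_i]$, and after at most $i/2$ applications the first $i$ steps are i.i.d.\ uniform in $\{\pm 1\}$.

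Finally, for such a simple random walk $W_0=0,W_1,\dots,W_i$, a standard time-reversal together with the reflection principle gives $\bE[\max_{j\leq i}(W_j-W_i)] = \bE[\max_{j\leq i}W_j] = O(\sqrt{i})$. Combining the three steps, $\bE[H^L_i] = O(\sqrt{i}) \leq \sqrt{2}\cdot i/\sqrt{k} = O(i/\sqrt{k})$ for $i\geq k/2$, which is the claim. The only step that demands care is the iterated decorrelation: Lemma~\ref{lem:decorrelating} applies to one pair in the presence of arbitrary other correlations, so the pairs must be processed one by one rather than all at once; since there are at most $\lfloor i/2\rfloor$ anticorrelated pairs, finitely many invocations suffice, which I expect to be the main (though mild) obstacle in writing the argument cleanly.
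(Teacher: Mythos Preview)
Your proposal is correct and follows the paper's argument essentially verbatim: delete the jump (pointwise monotone in $H^L_i$), then iteratively apply Lemma~\ref{lem:decorrelating} to decorrelate the anticorrelated pairs one at a time, and finally invoke the classical $O(\sqrt{i})$ height bound for a simple symmetric random walk, which becomes $O(i/\sqrt{k})$ once $i \geq k/2$. Your explicit bookkeeping (the $j^{\star}$ case split, the partition into fully-contained versus half-contained pairs, and the time-reversal identity) simply fleshes out details the paper leaves implicit; the strategy is identical.
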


We complete step 2 of the proof. First, define the following random walk $RW'$
\begin{newcenter}
\framebox{
\begin{minipage}{0.9\textwidth}{
\smallskip
Random Walk RW'

\begin{newitemize}
\item[1] Define $RW'(0) = 0$. 
\item[2] For $j > 0$, given the value $RW'(j-1)$ of the random walk at time $j-1$, define the value $RW'(j)$ of the random walk at time $j$ as:
\begin{newitemize}
\item $RW'(j) = RW'(j-1)-1$ if $Y_{j}$ is a ``value'' and $1 \leq j < 2k-4\sqrt{k} + 2k^{2/3}$.
\item $RW'(j) = RW'(j-1)+1$ if $Y_{j}$ is a  ``sample'' and $1 \leq j < 2k - 4\sqrt{k} + 2k^{2/3}$.
\item $RW'(j) = RW'(j-1) + \sqrt{k}$  when $j = 2k-4\sqrt{k}+2k^{2/3}$.
\item $RW'(j) = RW'(j-1)$ for $j > 2k-4\sqrt{k} + 2k^{2/3}$.
\end{newitemize}
\end{newitemize}

}\end{minipage}
}
\end{newcenter}

We can prove the following lemma about $RW'$.

\begin{lemma}\label{lem:RW'}  $H^R_i(RW') \leq H^R_i(RW)$ for all $i \leq k/2$ with probability  $1-e^{-\Omega(k)}$.
\end{lemma}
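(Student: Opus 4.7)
The plan is to explicitly compute both $\max_{j \geq i} RW(j)$ and $\max_{j \geq i} RW'(j)$ in terms of the standard $\pm 1$ random walk $W$ driven by the underlying sample/value coin flips (sample $=+1$, value $=-1$), and then reduce the lemma to three concentration statements about $W$.

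First, since $i \leq k/2 < k-2\sqrt{k}$ for $k$ large, the cumulative number of samples among $Y_1,\dots,Y_i$ is at most $i < k-2\sqrt{k}$, so neither $RW$ nor $RW'$ has taken its ``jump'' step by time $i$. Hence both walks coincide with $W$ on $[0,i]$, giving $RW(i)=RW'(i)=W(i)$, and the lemma reduces to showing $\max_{j \geq i} RW'(j) \leq \max_{j \geq i} RW(j)$. Let $\tau$ denote the random index of the $(k-2\sqrt{k})$-th sample (the time of $RW$'s jump) and set $j^* := 2k-4\sqrt{k}+2k^{2/3}$ (the fixed jump index of $RW'$). From the definitions: for $j<\tau$, $RW(j)=W(j)$; $RW(\tau)=W(\tau)+2\sqrt{k}$; and for $j>\tau$, $RW$ decreases on values and is flat on samples, so $RW(j)\leq RW(\tau)$. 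Likewise, $RW'(j)=W(j)$ for $j<j^*$, $RW'(j)=W(j^*-1)+\sqrt{k}$ for $j\geq j^*$. Therefore
\[
\max_{j\geq i}RW(j)=\max\!\Bigl(\max_{i\leq j<\tau}W(j),\ W(\tau)+2\sqrt{k}\Bigr),
\]
\[
\max_{j\geq i}RW'(j)=\max\!\Bigl(\max_{i\leq j\leq j^*-1}W(j),\ W(j^*-1)+\sqrt{k}\Bigr).
\]

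It then suffices to establish three sub-events: (i) $\tau\leq j^*-1$ (so the $RW$ range $[i,\tau-1]$ is a prefix of the $RW'$ range $[i,j^*-1]$); (ii) $\max_{\tau\leq j\leq j^*-1}W(j)\leq W(\tau)+2\sqrt{k}$ (so the extra portion of the $RW'$ range cannot exceed the height attained by $RW$ at its jump); and (iii) $W(j^*-1)-W(\tau)\leq \sqrt{k}$ (so the smaller $RW'$ jump cannot overshoot the larger $RW$ jump). Given all three, a case analysis on which term in each $\max$ is dominant shows $\max_{j \geq i} RW'(j) \leq \max_{j \geq i} RW(j)$. Event (i) is a Chernoff statement: the number of samples in $[1,j^*-1]$ has mean $k-2\sqrt{k}+k^{2/3}-\tfrac12$, so it falls below $k-2\sqrt{k}$ with probability at most $e^{-\Omega(k^{1/3})}$. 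Events (ii) and (iii) bound the one-sided maximum of a $\pm 1$ walk of length at most $2k^{2/3}$ started at time $\tau$; by Azuma--Hoeffding (or the reflection principle), the walk attains height $\sqrt{k}$ or $2\sqrt{k}$ only with probability $e^{-\Omega(k^{1/3})}$. A union bound over (i)--(iii) finishes.

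The main subtlety is that the natural concentration argument yields a failure probability of $e^{-\Omega(k^{1/3})}$ rather than $e^{-\Omega(k)}$, because the gap $j^*-\tau$ concentrates on scale $k^{2/3}$ and the relevant height scale is $\sqrt{k}$, so the exponent in Hoeffding is only $(\sqrt{k})^2/k^{2/3}=k^{1/3}$. This bound, while weaker than stated, is still superpolynomially small, which is all that Theorem~\ref{thm:rehearsal} actually uses: on the $e^{-\Omega(k^{1/3})}$-probability bad event one uses the trivial bound $|H^L_i-H^R_i|\leq k$, contributing at most $k\cdot e^{-\Omega(k^{1/3})}=o(i/\sqrt k)$ to $\mathbb{E}[\max\{H^L_i-H^R_i,0\}]$.
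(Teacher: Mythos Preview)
Your approach is the same as the paper's: compute both walks in terms of the underlying $\pm1$ walk $W$, localize the random jump index $\tau$ (the paper calls it $i^*$) near the fixed index $j^*$, and bound the fluctuations of $W$ on the short window in between. The paper packages this as two events (a two-sided bound $|\tau-(2k-4\sqrt{k})|\le 2k^{2/3}$ and a fluctuation bound $|RW'(i)-RW'(j)|\le\sqrt{k}$ over the \emph{fixed} window $[2k-4\sqrt{k}-2k^{2/3},\,2k-4\sqrt{k}+2k^{2/3}]$), while you package it as three events; the case analysis is the same.

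You are also right that the argument only yields $e^{-\Omega(k^{1/3})}$, not $e^{-\Omega(k)}$: the paper's own proof gives bounds $4e^{-k^{1/3}/4}$ and $8k^{4/3}e^{-k^{1/3}/8}$, so the exponent in the lemma statement is an overstatement there as well. Your remark that $k\cdot e^{-\Omega(k^{1/3})}=o(i/\sqrt{k})$ suffices for Theorem~\ref{thm:rehearsal} is exactly how the paper uses the lemma.

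One genuine gap to patch: your event~(i) is one-sided ($\tau\le j^*-1$), but your claims for (ii) and (iii) assume the interval $[\tau,j^*-1]$ has length $O(k^{2/3})$, which requires the \emph{lower} bound $\tau\ge j^*-O(k^{2/3})$ as well. Without it, the interval could have length $\Theta(k)$ and Hoeffding would give only a constant. The fix is immediate---either add the symmetric Chernoff bound for $\tau\ge 2k-4\sqrt{k}-2k^{2/3}$ (as the paper does), or bound the fluctuation of $W$ over the fixed window $[2k-4\sqrt{k}-2k^{2/3},\,j^*-1]$ and then use the two-sided localization of $\tau$ to transfer it. A secondary point: when applying Chernoff for (i), you should note (as the paper does) that correlated pairs falling entirely inside the prefix contribute deterministically and hence only \emph{reduce} variance, so the independent-case bound is valid.
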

\begin{proof}
Let $i^*$ denote the index where $RW$ shoots up by $2\sqrt{k}+1$. We first show that with high probability both of the following events hold:
\begin{enumerate}
\item $2k-4\sqrt{k} - 2k^{2/3} \leq i^* \leq 2k-4\sqrt{k} + 2k^{2/3}$.
\item For all $i,j \in [2k-4\sqrt{k}-2k^{2/3},2-4\sqrt{k} + 2k^{2/3}], RW'(i)-RW'(j) \leq \sqrt{k}$.
\end{enumerate}

Part 1 is a simple application of the Chernoff bound. If we are to have $i^* < T = 2k-4\sqrt{k} - 2k^{2/3}$, then we must have seen $k-2\sqrt{k}$ rehearsal elements by then. If we let $k'$ denote the number of indices before $T$ whose correlated partner also comes before before $T$, then clearly there will be exactly $k'/2$ rehearsal elements from such indices. For the remaining indices, whether that element is rehearsed or real is independent of all other indices before $T$. The expected number of rehearsal elements from the remaining indices is exactly $(T-k')/2$. So in order to see at least $k-2\sqrt{k}$, this value must deviate from it's expectation by at least $k^{2/3}$. Using the additive Chernoff bound we get that:

$$Pr[\text{more than $k-2\sqrt{k}$ rehearsals before $T$}]$$
$$ \leq 2e^{-k^{4/3}/(2T-2k')} \leq 2e^{-k^{1/3}/4}$$

An analagous argument holds to show that $i^* < 2k-4\sqrt{k} +2k^{2/3}$ with high probability by showing that the probability that we see fewer than $k-2\sqrt{k}$ rehearsals by then is equally tiny. Therefore, using a union bound, part 1 holds with probability at least $1-4e^{-k^{1/3}/4}$.

Part 2 is also an application of the Chernoff bound. For any fixed $i,j$, the expected value of $RW'(i) - RW'(j)$ is $0$. There are some steps between $i$ and $j$ that are correlated, and will always cancel each other out. The remaining steps are all independent and there are at most $4k^{2/3}$ of them. So $RW'(i) - RW'(j)$ must deviate from its expecation by at least $\sqrt{k}$ and we can apply the Chernoff bound again to say that:

$$Pr[|RW'(i) - RW'(j)| \geq \sqrt{k}] \leq 2e^{-k^{1/3}/8}$$

We can now take a union bound over all $O(k^{4/3})$ ordered pairs of $i,j$ to get that with probability at least $1-8k^{4/3}e^{-k^{1/3}/8}$, $RW'(i) - RW'(j) \leq \sqrt{k}$ for all $i,j$. So taking a final union bound gives us that with high probability parts 1 and 2 both hold.

Now let's couple $RW$ and $RW'$ to use the same coin flips. In other words, when $Y_{j}$ is determined to be real or rehearsal, it is the same for both walks. Also assume that parts 1 and 2 hold for $RW$ and $RW'$ respectively. We now show that as long as these two assumptions hold, then for any $i \leq k/2$, $H^R_i(RW') \leq H^R_i(RW)$.

Because $i \leq k/2$, it must be the case that $i < i^*$, so $RW(i) = RW'(i)$. Let $j \geq i$ be the index maximizing $RW'(j) - RW'(i)$. Then $H^R_i(RW') = RW'(j) - RW'(i)$. There are two cases to consider. Say $j < i^*$. Then $RW'(j) = RW(j)$, and therefore $RW(j) - RW(i) = RW'(j) - RW'(i)$, so we immediately get that $H^R_i(RW) \geq H^R_i(RW')$. Otherwise, $i^* \leq j \leq 2k-4\sqrt{k} + 2k^{2/3}$. Then $RW'(j) - RW'(i) \leq 2\sqrt{k} + RW'(i^*) - RW(i)$ by our two assumptions. By the definition of $RW$, we also have that $RW(i^*) - RW(i) = RW'(i^*) + 2\sqrt{k} -RW(i)$, so this exactly says that $RW'(j) - RW'(i) \leq RW(i^*) - RW(i)$, also giving us that $H^R_i(RW) \geq H^R_i(RW')$. It cannot be the case that $j > 2k-4\sqrt{k} + 2k^{2/3}$ because we defined $RW'$ to stop changing after this. So this covers every possible case, and in all cases $H^R_i(RW) \geq H^R_i(RW')$. Because our assumptions hold with high probability, so does the result.
\end{proof}

We now finish by showing that for all $j \leq k/2$, $H^R_j(RW') = 0$ with probability $O(1/\sqrt{k})$. We prove this claim in two steps. First, we show that if $RW'$ had no correlated steps, then $H^R_j(RW') = 0$ with probability $O(1/\sqrt{k})$ for all $j$. Then we show that \emph{removing} a specific correlated pair only increases the probability that $H^R_j(RW') = 0$, regardless of any other correlation in $RW'$. We can apply this argument a finite number of times to remove all correlated pairs without decreasing the probability that $H^R_j(RW') = 0$. Therefore, because this probability is now $O(1/\sqrt{k})$, it must be the case that $Pr[H^R_j(RW')=0] \leq O(1/\sqrt{k})$ to begin with.

We now take the first step. Let $RW''$ denote $RW'$ without the $\sqrt{k}$ jump at the end. Then in order for $H^R_i(RW') = 0$, we must have $RW''(j) \leq RW''(i)$ for all $j \geq i$ and $RW''(2k-4\sqrt{k}+2k^{2/3}) \leq RW''(i) - \sqrt{k}$. We show that if $RW''$ has no correlated steps, then both of these occur with low probability.

\begin{lemma}\label{lem:both} Let $RW''$ be a random walk with $n$ truly independent steps. Then for all $n$, the probability that $H(RW'') =0$ and $RW''(n) \leq -\sqrt{k}$ is $O(1/\sqrt{k})$.
\end{lemma}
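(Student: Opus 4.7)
The plan is to reduce the joint event to a single endpoint probability via the reflection principle, and then to bound that endpoint probability uniformly in $n$ by Stirling. Since all $n$ steps of $RW''$ are independent uniform $\pm 1$ steps, writing $S_j = RW''(j)$, the condition $H(RW'')=0$ is exactly the event $\{\max_{0\le j\le n} S_j \le 0\}$ and we want a uniform-in-$n$ bound on
\[
P_n \;\defeq\; \Pr\bigl[\,\max_{0\le j\le n} S_j \le 0 \;\text{ and }\; S_n \le -\sqrt{k}\,\bigr].
\]

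First, I would apply the standard reflection principle to the walk $S$. For any integer $m\ge 0$ of the same parity as $n$, paths from $0$ to $-m$ that at some point touch $+1$ are in bijection (reflect after the first time the walk hits $+1$) with unrestricted paths from $0$ to $-m-2$. Hence
\[
\Pr\bigl[\max_j S_j \le 0,\; S_n=-m\bigr] \;=\; \Pr[S_n=-m] - \Pr[S_n=-m-2].
\]
Summing over $m\ge \sqrt{k}$ of parity $n\bmod 2$ gives a telescoping sum, collapsing to $P_n = \Pr[S_n=-m_0]$, where $m_0$ is the smallest integer $\ge \sqrt{k}$ with $m_0\equiv n\pmod 2$; in particular $m_0\le \lceil\sqrt{k}\rceil+1$.

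Next, I would bound $\Pr[S_n=-m_0]$ uniformly in $n$. If $m_0>n$ this probability is $0$, so assume $m_0\le n$. Stirling's formula gives the standard local CLT estimate
\[
\Pr[S_n=-m_0] \;=\; \binom{n}{(n-m_0)/2}2^{-n} \;\le\; \frac{C}{\sqrt{n}}\,e^{-m_0^{2}/(2n)}
\]
for an absolute constant $C$. Since $m_0\ge\sqrt{k}$, this is at most $\frac{C}{\sqrt{n}}e^{-k/(2n)}$.

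Finally, I would optimize over $n$. Writing $g(n)=n^{-1/2}e^{-k/(2n)}$, differentiation shows the unique critical point is $n=k$, where $g(k)=k^{-1/2}e^{-1/2}$; the function decreases for $n<k$ (exponential factor dominates) and for $n>k$ (the $n^{-1/2}$ factor dominates). Therefore $\sup_n g(n)=O(1/\sqrt{k})$, which yields $P_n = O(1/\sqrt{k})$ uniformly in $n$, as claimed. The only mildly delicate step is handling the parity/integrality in the telescoping sum so that the collapse really does leave a single term; all other ingredients are textbook.
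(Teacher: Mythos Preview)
Your proof is correct and follows essentially the same approach as the paper: both reduce, via the reflection principle, to the single point probability $\Pr[S_n=-m_0]$ with $m_0\approx\sqrt{k}$, and then show this is maximized over $n$ at $n\approx k$ where it is $O(1/\sqrt{k})$. The only cosmetic differences are that the paper computes the complement $\Pr[H>0,\,S_n\le-\sqrt{k}]$ and subtracts (arriving at $\Pr[S_n\in\{\sqrt{k},\sqrt{k}+1\}]$, which equals your $\Pr[S_n=-m_0]$ once parity is accounted for), and that the paper locates the maximizing $n$ by a direct ratio argument on $\binom{n}{n/2+\sqrt{k}/2}2^{-n}$ rather than via the Stirling/local-CLT estimate plus calculus that you use.
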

\begin{proof}
We first compute the probability that $H(RW'') > 0$ and $RW''(n) \leq -\sqrt{k}$ using the reflection principle. For any fixed walk with $H(RW'') > 0$ and $RW''(n) \leq -\sqrt{k}$, let $i$ be the last index with $RW''(i) = 1$. Consider the mapping that sets $RW''(j) = 2-RW''(j)$ for all $j > i$. This mapping is clearly injective and always has $RW''(n) \geq \sqrt{k}+2$. In fact, the same mapping takes any fixed random walk with $RW''(n) \geq \sqrt{k}+2$ and turns it into a random walk with $H(RW'') > 0$ and $RW''(n) \leq - \sqrt{k}$, thereby creating a bijection. In other words, this mapping bears evidence that $Pr[H(RW'') > 0 \wedge RW''(n) \leq - \sqrt{k}] = Pr[RW''(n) > 2+\sqrt{k}]$. 

Furthermore, we can write $Pr[H(RW'') = 0 \wedge RW''(n) \leq -\sqrt{k}] = Pr[RW''(n) \leq - \sqrt{k}] - Pr[H(RW'') > 0 \wedge RW''(n) \leq - \sqrt{k}]$, which by the above work is exactly $Pr[RW''(n) \geq \sqrt{k}] - Pr[RW''(n) \geq 2+\sqrt{k}] = Pr[RW''(n) \in \{\sqrt{k},\sqrt{k}+1\}] \approx \binom{n}{n/2 +\sqrt{k}/2}/2^n$. So now we just want to bound this value. 

We observe first that for all $n$ that:
\begin{align*}
&\frac{\binom{n+2}{n/2+1+\sqrt{k}/2}}{2^{n+2}}\\
 = &\frac{\binom{n}{n/2+\sqrt{k}/2}}{2^n} \times \frac{(n+2)(n+1)}{4(n/2-\sqrt{k}/2+1)(n/2+\sqrt{k}/2+1)}\\
 = &\frac{\binom{n}{n/2+\sqrt{k}/2}}{2^n} \times \frac{n^2 + 3n+2}{n^2 + 4n + 4 - k}
\end{align*}

In other words, for $n < k-2$, the value increases when we increase $n$ by $2$. For $n > k-2$, the value decreases when we increase $n$ by $2$. Therefore, the value is maximized around $n = k$, where it is obvious that $\binom{k}{k/2+\sqrt{k}/2}/2^k \leq O(1/\sqrt{k})$. Therefore, for all $n$, the probability that $H(RW'') = 0$ and $RW''(n) \leq - \sqrt{k}$ is $O(1/\sqrt{k})$.

\end{proof}

Finally, we prove that removing the correlated pairs in $RW'$ only increases the probability that $H^R_i = 0$:

\begin{lemma}\label{lem:delete} Let $RW''$ be a random walk on $n$ steps where some pairs of steps $(x_1,y_1),\ldots,(x_z,y_z)$ are negatively correlated. Let $x_i < y_i$ for all $i$ and $y_1 < \ldots < y_z$. Then removing $x_1,y_1$ from $RW''$ only increases the probability that $H(RW'') = 0$ and $RW''(n) \leq -m$, for all $n,m$.
\end{lemma}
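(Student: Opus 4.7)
To prove Lemma~\ref{lem:delete} I would follow the same coupling-and-case-analysis template as Lemma~\ref{lem:decorrelating}: condition on all randomness other than the single uniform sign $\xi\in\{\pm 1\}$ that determines the pair $(\xi_{x_1},\xi_{y_1})=(\xi,-\xi)$, and couple the given walk $RW''$ with the walk $W$ of length $n-2$ obtained by deleting steps $x_1$ and $y_1$ (using the same non-pair randomness). The piecewise identities
\[
RW''(k)=
\begin{cases}
W(k) & k<x_1,\\
W(k-1)+\xi & x_1\le k\le y_1-1,\\
W(k-2) & k\ge y_1
\end{cases}
\]
show that $RW''(n)=W(n-2)$ (so the endpoint condition is unaffected by $\xi$) and that $RW''\le 0$ translates, on $W$, to $W\le 0$ outside the range $[x_1-1,y_1-2]$ and $W\le -\xi$ inside. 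Writing $A=\{H(W)=0,\,W(n-2)\le -m\}$ and letting $A'_{\pm 1}$ denote the resulting events on $W$ for $\xi=\pm 1$, I get $A'_{+1}\subseteq A\subseteq A'_{-1}$, and averaging over $\xi$ yields
\[
\Pr[W\in A]-\Pr[H(RW'')=0,\,RW''(n)\le -m]=\tfrac12(p_0-p_1),
\]
where $p_0=\Pr[W\in A\text{ and }\max_{k\in[x_1-1,y_1-2]}W(k)=0]$ and $p_1=\Pr[W\le 0\text{ outside the inner range, }W(n-2)\le -m,\text{ and }\max_{k\in[x_1-1,y_1-2]}W(k)=1]$. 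The lemma thus reduces to the combinatorial inequality $p_0\ge p_1$.

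Next I would condition on the inner-bridge endpoints $a=W(x_1-1),\,b=W(y_1-2)$ and on the two outer walks, so that the inner sub-walk is a length-$\ell$ bridge from $a$ to $b$ with $\ell=y_1-x_1-1$. Other correlated pairs cannot both sit in the inner range, since by hypothesis $y_1$ is the smallest second-index, so each other pair contributes at most one endpoint inside $[x_1,y_1-1]$; those solitary partners can be folded into the conditioning on the right-hand outer walk, leaving the inner bridge a genuine uniform $\pm 1$ random-walk bridge. Two uses of the reflection principle then give $I_k(a,b)=\binom{\ell}{(\ell+a+b-2k)/2}-\binom{\ell}{(\ell+a+b-2k-2)/2}$, so $I_0(a,b)-I_1(a,b)$ is the discrete second difference of the row $\binom{\ell}{\cdot}$ at $R=(\ell+a+b-2)/2$.

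Pointwise in $(a,b)$ this second difference can be negative near the row's central peak, so $p_0\ge p_1$ is \emph{not} a fixed-endpoint statement; the cancellation is global. Summing over $b\le 0$ with uniform weight telescopes cleanly to
\[
\sum_{b\le 0}\bigl(I_0(a,b)-I_1(a,b)\bigr)=\binom{\ell}{\lfloor(\ell+a)/2\rfloor}-\binom{\ell}{\lfloor(\ell+a)/2\rfloor-1},
\]
which is non-negative whenever $a\le 0$ since the binomial row is unimodal at $\ell/2$ and $\lfloor(\ell+a)/2\rfloor\le\ell/2$. This already proves the lemma in the special case when both outer walks are absent, which is exactly the base case needed when this lemma is iteratively peeled off all the correlated pairs in the proof of Theorem~\ref{thm:prophet}.

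The main obstacle will be promoting this uniform-$b$ telescoping to the weighted sum $\sum_b O_2(b)(I_0-I_1)(a,b)$ required by a non-trivial right-hand outer walk (and then doing an analogous reduction on the left), since the outer-walk count $O_2(b)$ is neither constant nor convex in $b$. My plan is to absorb outer walk 2 into the inner bridge rather than summing against its weights: view the concatenation of the inner sub-walk with outer walk 2 as a single walk of length $\tilde\ell=(y_1-x_1-1)+(n-y_1)=n-x_1-1$ from $a$, and rephrase the combined event as ``max over positions $[0,\ell]$ equals $k$, walk stays $\le 0$ on $[\ell,\tilde\ell]$, walk ends $\le -m$.'' Reflection across the level $k+1$ combined with a ballot-style reflection enforcing the $\le 0$ tail condition should collapse the weighted sum into a single binomial difference $\binom{\tilde\ell}{q}-\binom{\tilde\ell}{q-1}$ with $q\le\tilde\ell/2$, which is again non-negative by unimodality. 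A final analogous layer of conditioning absorbs outer walk 1 and any remaining correlated-pair partners outside the range, completing the proof.
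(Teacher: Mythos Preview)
Your reduction to the inequality $p_0\ge p_1$ is correct and is exactly where the paper arrives too. The difficulty you identify---that pointwise in $(a,b)$ the second difference $I_0-I_1$ can be negative, so one cannot finish by conditioning on the bridge endpoints---is real and is the whole content of the lemma.

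The gap is in how you handle the \emph{other} correlated pairs. Your claim that ``those solitary partners can be folded into the conditioning on the right-hand outer walk, leaving the inner bridge a genuine uniform $\pm 1$ random-walk bridge'' is not right. If some $x_j$ with $j\ge 2$ lies in $(x_1,y_1)$ (nothing in the hypothesis forbids this; only the $y$-indices are ordered), then conditioning on the right outer walk fixes the step at $y_j$ and hence deterministically fixes the step at $x_j$ \emph{inside} the inner range. The inner sub-walk is then a bridge with prescribed steps sprinkled at arbitrary interior positions, and the event ``inner maximum equals $k$'' is taken over all positions including those. The reflection identities you invoke for $I_k(a,b)$ do not apply to such an object. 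Your fallback plan of absorbing the right outer walk into the inner bridge has the same defect in a different guise: the concatenated walk of length $\tilde\ell$ now contains the full pair $(x_j,y_j)$ as a negatively correlated pair, so it is not a simple random walk and neither the reflection across level $k+1$ nor the ballot-style reflection is available. In short, your reflection/telescoping machinery only proves the lemma when $(x_1,y_1)$ is the sole correlated pair, which is not enough for the iterative peeling in Theorem~\ref{thm:prophet}.

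The paper avoids counting altogether and instead builds an explicit injection from the ``middle peak $=1$'' instances into the ``middle peak $=0$'' instances. Conditioning on everything but the $(x_1,y_1)$ sign, it locates the first index $i>x_1$ where the walk reaches height $1$ and flips that step down; if step $i$ happens to be some $x_j$, it also flips its partner $y_j$. The ordering hypothesis $y_1<y_2<\cdots$ is used precisely here: the partner $y_j$ is guaranteed to lie \emph{after} $y_1$, so the compensating flip can only lower the right peak $c$ and the endpoint $d$, never the middle peak. Injectivity follows because $i$ is recoverable from the image as the last index before $y_1$ where the walk drops from $0$ to $-1$. This step-flip-plus-partner-flip is the missing idea; once you have it, no reflection or telescoping is needed.
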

\begin{proof}
Observe first that we are not claiming that removing any correlated pair can only increase this probability, but that there is always a ``correct'' pair that we can remove without decreasing the probability. For a fixed random walk, imagine removing steps $x_1$ and $y_1$ (i.e. don't move at these steps). Then let $a$ denote the height of the highest peak before $x_1$, $b$ denote the height of the highest peak between $x_1$ and $y_1$, $c$ denote the height of the highest peak after $y_1$, and $d$ the value of $RW''(n)$. Also let $S(a,b,c,d)$ denote the set of all instances of $RW''$ that respect the correlation between the pairs of steps $(x_2,y_2)$ through $(x_z,y_z)$ with respective peak heights $a,b,c$ and also satisfy $RW''(n)=d$. Then every instance of $RW''$ is in exactly one set, and whether or not $H(RW'') = 0$ and $RW''(n) \leq -m$ depends only on which $S(a,b,c,d)$ the instance is in. We now want to look at which sets will satisfy this regardless of how steps $x_1$ and $y_1$ are set, and which sets may or may not satisfy it depending on how $x_1$ and $y_1$ are set.

We observe that setting $x_1$ and $y_1$ can never change $a,c,$ or $d$, but may increase or decrease $b$ by $1$. So if $a > 0,b>1,c>0,$ or $d >-m$, then we will never have $H(RW'') = 0$ and $RW''(n) \leq -m$ no matter how $x_1,y_1$ are set. Likewise, if we have $a \leq 0,b < 0,c\leq 0,$ and $d \leq -m$, then we will always have $H(RW'') = 0$ and $RW''(n) \leq -m$ no matter how $x_1,y_1$ are set. The interesting cases are when we have $a \leq 0,c\leq 0,d\leq -m$ and $b \in \{0,1\}$. If we remove $x_1$ and $y_1$, then all of these cases with $b = 1$ will not have $H(RW'') = 0$, and those with $b=0$ will. If we keep $x_1$ and $y_1$, then exactly half of both cases will have $H(RW'') = 0$. We show that there are more of the latter case than the former. In other words, if we removed $x_1$ and $y_1$, instead of splitting these cases 50-50, more of them would yield $H(RW'') = 0$ and $RW''(n) \leq -m$. Therefore removing $x_1$ and $y_1$ only increases the probability that $H(RW'') = 0$ and $RW''(n) \leq -m$. We prove this by giving an injective map from the former case to the latter.

Consider any instance of $RW''$ in $S(a,1,c,d)$ with $a \leq 0$. Let $i$ denote the first index after $x_1$ with $RW''(i) = 1$. Then it must be the case (because $a \leq 0$) that $RW''(i-1) = 0$. So consider changing $RW''$ to take a step down at $i$ instead of up (i.e. set $RW''(i) = -1$). If $i$ was part of a correlated pair, then also change $RW''$ to take a step up at its partner, $j$. It is clear that we have not changed $a$. We might have decreased $c$ by $2,1,$ or $0$, depending on if $i$ was part of a correlated pair and where its partner was located, and we might have decreased $d$ by $2$ or $0$, depending on if $i$ was part of a correlated pair. Furthermore, this map is injective. Observe first that we can determine the index $i$ of the instance of $RW''$ where the flip happened by looking at its image under the map. A priori, $i$ could be any index between $x_1$ and $y_1$ with $RW''(i-1) = 0$ and $RW''(i) = -1$. But in fact, $i$ must necessarily be the last of such indices. Assume for contradiction that there were some $i < i' < y_1$ with $RW''(i'-1) = 0$ and $RW''(i') = -1$ in the image. Then the pre-image would have taken a step up at $i$ instead of down, and we would have had $RW''(i'-1) = 2$ in the pre-image, meaning that the instance was not in $S(a,1,c,d)$. Even if $i$ was part of a correlated step, by our choice of $x_1,y_1$, its partner \emph{necessarily occurs after $y_1$}, and therefore will not cancel out the change from switching $RW''(i)$ by the time we take step $i'-1$. Since we can determine the index $i$ from the image, and it is obvious that if two instances of $RW''$ have the same image and had the same step switched they must be the same, this map is injective. Finally, the map only decreases $c$ and $d$. So in particular, if: 

$$S_1 = \cup_{a \leq 0,c\leq 0,d\leq -m} S(a,1,c,d)$$
$$S_0 = \cup_{a \leq 0,c\leq 0,d \leq -m} S(a,0,c,d)$$

then we have shown an injective map from $S_1$ to $S_0$. Also denote by $S_2$ all other instances of $RW''$ with $H(RW'') = 0$ and $RW''(n) \leq -m$, and $S_3$ the remaining instances of $RW''$. Then the probability that $H(RW'') = 0$ and $RW''(n) \leq -m$ when we have removed $x_1$ and $y_1$ is exactly:

$$\frac{|S_0|+|S_2|}{|S_0|+|S_1|+|S_2|+|S_3|}$$

And the probability that $H(RW'') = 0$ and $RW''(n) \leq -m$ when we keep $x_1$ and $y_1$ is exactly:

$$\frac{|S_0|/2+|S_1|/2 + |S_2|}{|S_0|+|S_1|+|S_2|+ |S_3|}$$

By showing an injective map from $S_1$ to $S_0$, we have shown that the first probability is greater. Namely, removing $x_1$ and $y_1$ can only increase the probability that $H(RW'') = 0$ and $RW''(n) \leq -m$.

\end{proof}

Now by Lemma \ref{lem:delete}, we can continue removing the earliest-ending correlated pair from $RW'$ until we get a random walk with truly independent steps (and $\sqrt{k}$ jump at the end) whose probability of probability of having $H(RW')\geq 0$ has only increased. By Lemma \ref{lem:both}, we know that this value is $O(1/\sqrt{k})$. So together, this says that $Pr[H^R_j(RW') = 0] \leq O(1/\sqrt{k})$ for all $j \leq k/2$. Finally, by Lemma \ref{lem:RW'} and the fact that $H^R_i(RW) \leq k$ always, we get that $Pr[H^R_j(RW) = 0] \leq O(1/\sqrt{k})$. This exactly says that the expected number of of $j \leq i$ with $H^R_j(RW) = 0$ is $O(i/\sqrt{k})$ for all $i \leq k/2$. By Facts~\ref{fact:RW} and~\ref{fact:numMissing} we now have that $\mathbb{E}[\max\{H^L_i(RW) - H^R_i(RW),0\}] \leq O(i/\sqrt{k})$. 

So now we have shown that for all $i \leq 2k$, $\mathbb{E}[\max\{H^L_i(RW) - H^R_i(RW),0\}] \leq O(i/\sqrt{k})$, completing the proof of Theorem~\ref{thm:rehearsal}, and proving that the rehearsal algorithm obtains a competitive ratio of $1-O(1/\sqrt{k})$.

\end{appendix}
\end{document}